\documentclass[%
reprint,
amsmath,amssymb,
aps,
pra,
floatfix,
twocolumn,
superscriptaddress,
]{revtex4-2}
\usepackage[utf8]{inputenc}
\usepackage[T1]{fontenc}
\usepackage[english]{babel}
\usepackage{enumerate}
\usepackage{float}
\usepackage{amsfonts}
\usepackage{booktabs}
\usepackage{stackengine}
\usepackage[normal]{subfigure}
\usepackage{floatflt}
\usepackage{listings}
\usepackage{graphicx}
\usepackage{amsmath}
\usepackage{csquotes}
\usepackage{hyperref}\hypersetup{colorlinks=true, allcolors=blue}
\usepackage{enumitem}
\usepackage{multirow}
\usepackage{soul}
\usepackage{mathtools}
\usepackage{orcidlink}
\usepackage{amsthm}
\usepackage{amssymb}
\usepackage{physics}
\usepackage{wrapfig}
\usepackage{dsfont}
\usepackage{comment}
\usepackage[title]{appendix}
\usepackage{xcolor}
\usepackage{pict2e}
\usepackage{relsize}
\usepackage{tikz}
\usetikzlibrary{positioning}

\usepackage{tgtermes}
\usepackage{newtxmath}

\DeclareMathAlphabet{\mathbb}{U}{msb}{m}{n}
\DeclareMathAlphabet{\mathfrak}{U}{euf}{m}{n}

\providecommand*{\unit}[1]{\,\ifmmode
\mathrm{\,#1}\else\textup{#1}\fi}
\newcommand{\Md}[1]{M_{#1}(\mathbb{C})}
      
\newcommand{\normt}[1]{\norm{#1}_1}
\newcommand{\bs}[1]{\boldsymbol{#1}}
\newcommand{\mc}[1]{\mathcal{#1}}
\newcommand{\wt}[1]{\widetilde{#1}}

\newcommand{\ave}[3]{\left\langle{#1}\middle|{#2}\middle|{#3}\right\rangle}

\newcommand{\psiplus}[1]{\psi_{+}^{\mathsmaller{(#1)}}}
\newcommand{\pplus}[1]{P_+^{\mathsmaller{(#1)}}}

\newcommand{\orcidauthorFB}{0000-0002-0712-2057} 
\newcommand{\orcidauthorGN}{0009-0006-3232-1222} 

\makeatletter
\newcommand{\bigcomp}{%
	\DOTSB
	\mathop{\vphantom{\sum}\mathpalette\bigcomp@\relax}%
	\slimits@
}
\newcommand{\bigcomp@}[2]{%
	\begingroup\m@th
	\sbox\z@{$#1\sum$}%
	\setlength{\unitlength}{0.9\dimexpr\ht\z@+\dp\z@}%
	\vcenter{\hbox{%
			\begin{picture}(1,1)
				\bigcomp@linethickness{#1}
				\put(0.5,0.5){\circle{1}}
			\end{picture}%
	}}%
	\endgroup
}
\newcommand{\bigcomp@linethickness}[1]{%
	\linethickness{%
		\ifx#1\displaystyle 2\fontdimen8\textfont\else
		\ifx#1\textstyle 1.65\fontdimen8\textfont\else
		\ifx#1\scriptstyle 1.65\fontdimen8\scriptfont\else
		1.65\fontdimen8\scriptscriptfont\fi\fi\fi 3
	}%
}

\newtheorem{proposition}{Proposition}

\newtheorem{remark}{Remark}

\theoremstyle{plain}
\newtheorem{example}{Example}

\theoremstyle{plain}
\newtheorem{corollary}{Corollary}

\begin{document}
	\title{\textbf{Quantum Dynamical Entropy and non-Markovianity: a collisional model perspective}}
	\author{Giovanni Nichele\,\orcidlink{\orcidauthorGN}}
	\email{giovanni.nichele@phd.units.it}
	\author{Fabio Benatti\,\orcidlink{\orcidauthorFB}}
	\affiliation{Dipartimento di Fisica, Università degli Studi di Trieste, I-34151 Trieste, Italy}
	\affiliation{Istituto Nazionale di Fisica Nucleare, Sezione di Trieste, I-34151 Trieste, Italy}
	\date{\today} 

\begin{abstract}
Accessing the physical mechanisms behind non-Markovian phenomena in open quantum dynamics  requires the study of the statistical properties of the joint system-environment dynamics. This is impossible  
at the level of the reduced dynamics of the open system alone as the latter is obtained by suitably eliminating the environment.
The task is instead made possible by considering multi-time correlation functions involving observables of the open system, only:
the open system-environment interactions turn them into  global ones thus building up correlations between the two systems.
Multi-time correlations  form the basis of both the theory of quantum stochastic processes and of the Alicki-Lindblad-Fannes dynamical entropy (ALF entropy for short). This latter quantity provides for quantum systems a measure of the dynamical entropy production as the Kolmogorov-Sinai entropy does for 
classical systems.  
In the case of a collisional model whereby the dissipative dynamics of a finite-level system is obtained by its coupling to an infinite classical spin chain, 
the ALF entropy can be explicitly computed. It turns out to depend on the parameters characterizing the statistical properties of the environment and can be  related to the activation and super-activation of memory effects in the open quantum system.
\end{abstract}

\maketitle
\section{Introduction}
Due to their importance in quantum information technologies, the non-Markovian features of open quantum dynamics have recently become the focus of much research and have been addressed by several, non-equivalent, techniques resulting in an intricate hierarchy~\cite{Lili}. 
As an instance, typically the time-evolution of an open quantum system is provided by a family $\{\Lambda^\ddag_t\}_{t\geq 0}$ of completely positive and trace preserving (CPTP) linear maps $\rho\longmapsto\Lambda^\ddag_t[\rho]$ on the space of states $\rho$ that result from tracing over the environment degrees of freedom. In such a setting, one may identify Markovianity with the completely positive divisibility (CP-divisibility) of the evolution~\cite{ChrusReview22,RHPmeasure,RivasHuelga}. Namely, with the complete positivity of intermediate-time propagators $\Lambda^\ddag_{t,s}$, $t\ge s\ge0$, composing as 
$\Lambda^\ddag_t=\Lambda^\ddag_{t,s}\circ\Lambda^\ddag_s$. In such a scenario, the lack of positivity-preservation and thus of contractivity of the propagators gives rise to the increase, instead of the decrease, of several information distances, a phenomenon generally interpreted as a physical back-flow of information  from the environment into the open system~\cite{BLP}. 
Considering instead a classical discrete time scenario as a benchmark for possible quantum extensions of the notion of Markovianity, there one deals with multi-time probabilities $\pi_{[0,n-1]}$
of a stochastic variable $X$ having, say, discrete
values $x_j$ 
at subsequent times:
\begin{gather*}
	 p_{\bs{x}_{[0,n-1]}}^{[0,n-1]}:=P(X_0=x_0, \cdots, X_{n-1}=x_{n-1})\,,\\
	\pi_{[0,n-1]}=\left\{p_{\bs{x}_{[0,n-1]}}^{[0,n-1]}\right\}_{\bs{x}_{[0,n-1]}}\,.
\end{gather*}
 Markovianity is then identified with lack of memory and mathematically characterized by  multi-time conditional probabilities 
depending only on the immediately preceding outcome:
\begin{equation}
	p_{\bs{x}_{[0,n-1]}}^{(n-1\vert n-2, \cdots ,0)}:=\frac{p_{\bs{x}_{[0,n-1]}}^{[0,n-1]}}{p_{\bs{x}_{[0,n-2]}}^{[0,n-2]}}=p_{x_{n-1}  \, x_{n-2}}^{(n-1\vert n-2)}\,,\label{MCSP0}
\end{equation}
where $\bs{x}_{[0,n-1]}:= x_0 \cdots x_{n-1} $.
Evidently, for classical Markovian processes, multi-time probabilities factorize into products of conditional probabilities,\\
\begin{equation}
	\label{MCSP0b}
	p_{\bs{x}_{[0,n-1]}}^{[0,n-1]}=\prod_{j=1}^{n-1} p_{x_j x_{j-1}}^{(j\vert j-1)}\ p_{x_0}^{(0)}\,,
\end{equation}
and satisfy the Chapman-Kolmogorov equations; namely, for all $j\geq 	\ell \geq k$,
\begin{equation}
		\label{MCSP01}
		p_{x_j x_k}^{(j\vert k)}=\sum_{x_\ell} p_{x_j x_\ell}^{(j\vert\ell)}\ p_{x_\ell x_k}^{(\ell\vert k)}\ ,
\end{equation}
where
$$\displaystyle p_{x_j x_k}^{(j\vert k)}=\frac{\sum_{x_i\neq x_j, x_k}p_{\bs{x}_{[0,j]}}^{[0,j]}}{\sum_{x_i\neq x_k}{p_{\bs{x}_{[0,k]}}^{[0,k]}}}\, .$$
Therefore, the transition matrices $T(j,k)=\left[p_{x_j x_k}^{(j\vert k)}\right] 
$,  are stochastic matrices composing as $T(j,k)=T(j,\ell)T(\ell, k)$ and play 
the same role as the discrete-time  quantum propagators $\Lambda^\ddag_{j,k}, j\ge k$.
However, this composition law does not imply classical Markovianity as embodied by~\eqref{MCSP0}, which is stronger. Instead, it  is equivalently identified by the so-called \textit{Classical Regression} (CR). This latter property holds if multi-time correlation functions $\left\langle F^{(0)} F^{(1)}\cdots F^{(n-1)}\right\rangle_\pi
$ are expressible in terms of transition probabilities, as follows:
\begin{align}
	\nonumber
	&\left\langle F^{(0)} F^{(1)}\cdots F^{(n-1)}\right\rangle_\pi=\!\sum_{\bs{a}_{[0,n-1]}}  p_{\bs{x}_{[0,n-1]}}^{[0,n-1]}\  F_{x_0}^{(0)} F_{x_1}^{(1)}\ldots F_{x_n}^{(n-1)}
	\\&=
	\sum_{\bs{x}_{[0,n-1]}}\prod_{j=2}^n\ T_{x_j x_{j-1}}(j,j-1)\ F_{x_0}^{(0)} F_{x_1}^{(1)}\ldots F_{x_n}^{(n-1)}\,,\label{CRF}
\end{align}
for all functions $F^{(j)}$ of the random variable $X$.
In the quantum realm, in order to properly discuss non-Markovian issues, two are the main reasons for choosing the non-commutative analog of~\eqref{CRF}, namely \textit{Quantum Regression} (QR), as a starting point, instead of the reduced dynamics~\cite{Lax,ChrusReview22,LonigroChrusc1,LonigroChrusc2}:
the first one is that the latter in practice involves $2$-time correlation functions losing precious information coming from
the multi-time statistics. 
The second reason is that multi-time correlations provide a powerful tool to access the dynamics of information exactly when a given 
open quantum dynamics at discrete time $n$, $\Lambda^\ddag_n$, is derived from the global reversible
dynamics $\Theta^\ddag_n$ of system-environment factorized states $\rho_S\otimes\rho_E$ after tracing away the environment, 
$\Lambda^\ddag_n[\rho_S]={\rm Tr}_E\Big(\Theta^\ddag_n[\rho_S\otimes \rho_E]\Big)$.
Indeed, QR regards multi-time correlations 
${\rm Tr}_{S+E}\Big(\rho_S\otimes\rho_E\Big(\widetilde{B}^{(n)\dagger}\widetilde{A}^{(n)}\Big)\Big)$
of observables $A_j, B_j$ of the open system only, evolving according to the Heisenberg joint reversible dynamics, $\Theta_n$, 
associated to $\Theta_n^\ddag$, where
$$\widetilde{A}^{(n)}:=\Theta_{n-1}[A_{n-1}\otimes\mathds{1}_E] \ldots\Theta_1[A_{1}\otimes\mathds{1}_E]A_0\otimes\mathds{1}_E\,,
$$
and amounts to the possibility of expressing them 
in terms of CPU (completely positive and unity-preserving) 
propagators $\Lambda_{n,n-1}$~\cite{ChrusReview22,Lili} acting on $S$:
	\begin{multline*}	
		{\rm Tr}_{S+E}\Big(\rho_S\otimes\rho_E\Big(\widetilde{B}^{(n)\dagger}\widetilde{A}^{(n)}\Big)\Big)
		=
		 \Tr_S\!\bigg(\rho_S\,B_{0}^\dagger\,\Lambda_{1}\Big[B_{1}^\dagger\,\Lambda_{2,1}\big[\dots\\ \Lambda_{{n-1},{n-2}}\big[B_{{n-1}}^{\dagger} A_{n-1}^{\phantom{\dagger}}\big]\dots\big]A_{1}\Big]\,A_{0}\bigg )\, ,
	\end{multline*}  
for all possible choices of observables of $S$. Another approach to non-Markovian quantum issues is thus to check whether QR is violated.
\par
The proper quantum extension of non-Markovian stochastic processes, including the multi-time statistics, is rooted in a long history dating back to pioneering works~\cite{AccardiFrigerioLewis,Accarditopics,Lindblad1979,Lewis1981,Frigerio1981} and still much debated in recent years. In particular, including the multi-time statistics in the description involves repeated quantum measurements that interfere with the dynamics and generally change the quantum state (see~\cite{AlickiFannesBook, ColloquiumBreuer,milz2020non} and the recent review \cite{Shrikant_Review2023}). Among the most recent proposals concerning such operational approach, particularly prominent ones include the so-called process tensor formalism~\cite{MilzModi_tutorial,ProcessTensorPRL,PollockModiPater2018},  quantum combs (see~\cite{Chiribella08,milz2020non} and references therein), conditional past-future independence~\cite{Budini2018quantum,BudiniOPvdNONOP} and temporal entanglement (see the recent work~\cite{vilkoviskiy2025temporal}). A shared idea among most of these approaches is to map multi-time correlations into spatial ones encoded in a quantum many-body state. 
Furthermore, in such a non-commutative scenario, G.~Lindblad~\cite{Lindblad1979} proposed to look at Markovianity using 
a quantum version of the Kolmogorov-Sinai (KS) entropy, this latter concept measuring the amount of entropy, that is of ignorance before measuring or, equivalently, of information gathered afterwards, per unit time-step provided by the dynamics.
Such a notion was later fully developed by Alicki and Fannes in~\cite{AF1994defining}, with applications  mainly to the reversible dynamics of  quantum many-body systems~\cite{AlickiFannesBook}. Differently from other proposals of quantum dynamical entropy,
the Alicki-Lindblad-Fannes (ALF) entropy takes into consideration the fact that information about quantum systems is obtained by measuring schemes that, unlike classical phase-space localization with respect to a suitable coarse graining,  interfere with the given dynamics. Notice, that such an approach contrasts with the idea of information back-flow which instead focuses on the dynamics of the system $S$  only, 
without the need of any measurement and so, effectively, on marginal probabilities.\par 
In the following,  we adapt the ALF entropy to an open quantum system scenario, whereby a one-qubit Pauli dynamics is obtained  by means of a so-called collisional model~\cite{CMreview,CampbellCollisionModelsOpen2021a} with the environment being a correlated classical spin chain and the open quantum system interacting locally with a single site algebra at each tick of time~\cite{FBGN_PhysicaScr}.  In such a context, we are able to explicitly compute the ALF entropy of the irreversible qubit dynamics and relate its behaviour  to the activation and super-activation of memory effects, the latter being a pure quantum phenomenon whereby back-flow of information shows up putting together a system with no back-flow of information with a copy of itself.
\par The work is structured as follows. In Section~\ref{sec:preliminaries}, we review the main algebraic and conceptual tools necessary in the remaining sections, such as the QR formula and the ALF entropy for reversible systems. In Section~\ref{sec:ALFDISS}, we introduce the concept of open-system ALF entropy and compare and contrast its typical behavior for closed finite systems with Markovian open ones. In Section~\ref{sec:collisionalmodels}, we discuss the main results of the work. Their proper operational interpretation is further  discussed in the subsequent Section~\ref{sec:dynGNS_model}, before ending with the Conclusions. 

\subsection*{Results}
The main results of the paper are as follows.
First, in Section~\ref{sec:ALFDISS}, we give an independent new proof of the results of~\cite{Lindblad1979}; namely, we show that, by looking at specific measurement-dependent entropy rates, one can ascertain the property of QR-Markovianity. 
Then, we propose the decrease of the open-system ALF entropy with respect to the QR-Markovian setting as a measure of how much information has flown back from environment to system.
\par
Furthermore, by framing collisional models within the algebraic approach, we provide a general upper bound to the ALF entropy of a finite level system coupled to a quantum spin chain.
\par 
In Section~\ref{sec:collisionalmodels}, we focus on the concrete model of a qubit collisionally interacting with a classical spin chain, for which we explicitly compute the open-system ALF entropy; in particular, we show that a dissipative, highly non-Markovian dynamics can achieve vanishing entropy production, 
as a reversible dynamics would. 
\par Finally, in Section~\ref{sec:dynGNS_model}, the operational interpretation of the previous results is discussed in the state purification scheme known as GNS construction. There, we show that the algebraic structure underlying the computation of the ALF entropy of 
the open symbolic dynamics is helpful in assessing memory effects that are not detectable by the reduced dynamics, only; in particular, the so-called super-activation of memory effects that may characterize dissipative dilations of the the reduced dynamics.

\section{Preliminaries: Quantum Regression and ALF entropy}
\label{sec:preliminaries}

In the following we shall identify finite or infinite discrete-time quantum dynamical systems by means of triples $(\mathcal{A},\omega,\Theta)$~\cite{Bratteli} where $\mathcal{A}$ is a suitable algebra of observables, $\omega$ a state, that is a positive, normalized linear functional on $\mathcal{A}$, and $\Theta$ denotes the
dynamics which generically consists of a countable family $\{\Theta_n\}_n$ of completely positive and unital (CPU) maps, not necessarily satisfying the (semi)-group property $\Theta_n=\Theta^n=\Theta\circ\Theta\cdots\circ\Theta$, $n$ times.
As a simple instance, consider a finite $d$ level system; then, $\mathcal{A}=M_d(\mathbb{C})$ is the algebra of $d\times d$ matrices, $\omega(A)=\Tr(\rho\,A)$ with $\rho$ a density matrix, while $\Theta_n$ may be the unitary evolution of a closed, but not necessarily autonomous qubit, $\Theta_n[A]=U_n^\dag\,A\,U_n$, with $U_n$ unitary, or a dissipative channel $\Theta_n[A]=\sum_kL^\dag_k(n)\,A\,L_k(n)$, with $\sum_kL_k^\dag(n)L_k(n)=\mathds{1}$, describing an open $d$-level system in interaction with its environment. In the first case, one talks of an automorphic dynamics as the adjoint and the algebraic relations are preserved; namely, 
$(\Theta_n[X])^\dag=\Theta_n[X^\dag]$ and $\Theta_n[XY]=\Theta_n[X]\Theta_n[Y]$ for all $X,Y\in\mathcal{A}$, $n\in\mathbb{N}$. Moreover, the algebraic inverse $\Theta_n^{-1}$ is an automorphism itself. In the second case only the first relation holds.\par
As an infinite dimensional instance, consider a quantum spin chain whose algebra $\mathcal{A}$ is generated by all  tensor products $X^{[a,b]}_{\bs{i}_{[a,b]}}=\bigotimes_{k=a}^{b}X^{(k)}_{i_k}$, where the upper index refers to the site at which the operator $X_{i_k}$ is located. These tensor products form strictly local algebras $\mathcal{A}^{[a,b]}$ supported by the site interval $[a,b]$.
States $\omega$ on $\mathcal{A}$ are specified by a set of density matrices $\rho^{[a,b]}\in\mathcal{A}^{[a,b]}$ such that:
\begin{align}
	\label{localstates}
	\omega\left(\bigotimes_{k=a}^b X^{(k)}_{i_k}\right)&=\Tr\left(\rho^{[a,b]}\bigotimes_{k=a}^b X^{(k)}_{i_k}\right)\ , \\ 
	\Tr_{b}\left(\rho^{[a,b]}\right)&=\rho^{[a,b-1]}\ ,
\end{align}
where  $\Tr_k$ defines the partial trace over the $k$-th site.  
Furthermore, if ${\rm Tr}_{a}\rho^{[a,b]}=\rho^{[a+1,b]}=\rho^{[a,b-1]}$, then the state $\omega$ is invariant under the (right) shift,  
$\omega\circ\sigma=\omega$, where $\sigma:\mc{A}\to\mc{A}$ moves all single site observables one step to the right:
\begin{equation}
	\label{shift}
	\sigma\Big[X^{(a)}_{i_{a}}\otimes\cdots\otimes X^{(b)}_{i_b}\Big]=X^{(a+1)}_{i_{a}}\otimes\cdots\otimes X^{(b+1)}_{i_b}\ .
\end{equation}
In such a case, the local states $\rho^{[1,n]}$ have von Neumann entropy 
$S(\rho^{[1,n]})=-\Tr\Big(\rho^{[1,n]}\log\rho^{[1,n]}\Big)$ and the chain is characterized by a mean entropy
\begin{equation}
	\label{meanentropy}
	\mathfrak{S}_{\omega}:=\lim_n\frac{1}{n}S(\rho^{[1,n]})\ .
\end{equation}
The limit exists because of the strong sub-additivity of the von Neumann entropy and of the assumed shift-invariance of $\omega$~\cite{AlickiFannesBook}; actually, for the same reasons the following limit also holds:
\begin{equation}\label{meanentropy2}
	\mathfrak{S}_\omega=\lim_n\left[ S(\rho^{[1,n]})-S(\rho^{[1,n-1]})\right]\,.
\end{equation}

\subsection{Collisional models}\label{sec:collisionalmodels_intro}
The reduced dynamics of open quantum systems emerges from suitably handling their non negligible interactions with the environment: collisional models prove a most convenient way of doing that, in a way that allows several types of reduced dynamics to emerge. They consist of a suitable quantum spin chain environment $E$ with algebra $\mathcal{A}_E$ generated by strictly local matrix algebras 
$M^{[-a,b]}_D(\mathbb{C})=\bigotimes_{k=-a}^b M^{(k)}_D(\mathbb{C})$.
The chain is coupled to  a $d$-level system $S$, $\mc{A}_S=\Md{d}$ and the tensor algebra $\mathcal{A}_S\otimes\mathcal{A}_E$ describing the compound system $S+E$ is endowed with a factorized state $\omega_{SE}=\omega_S\otimes\omega_E$, represented on each strictly local
algebra
$\mathcal{A}_S\otimes\mathcal{A}_E^{[-a,b]}$ by a factorized density matrix 
$\Omega_{S[-a,b]}=\rho_S\otimes\rho_E^{[-a,b]}$. 

The $SE$ collisional coupling is constructed by means of an automorphism $\Phi$ on $\mathcal{A_S}\otimes \mathcal{A}_E^{(0)}$ that describes the local interaction of $S$ with the chain $0$-th site and 
extends to  $\mathcal{A}_S\otimes \mathcal{A}_E$ as
\begin{multline}
	\Phi\Big[X_S\otimes X^{[-a,-1]}_{\bs{i}_{[-a,-1]}}\otimes X_{i_0}^{(0)}\otimes X_{\bs{i}_{[1,b]}}^{[1,b]}\Big] \\ =X_{\bs{i}_{[-a,-1]}}^{[-a,-1]}\otimes
	\Phi[X_S\otimes X_{i_0}^{(0)}]\otimes X_{\bs{i}_{[1,b]}}^{[1,b]}\ .
	\label{coupling}
\end{multline}
The compound  automorphic dynamics $\Theta$ of system $S$ and chain $E$ is finally given by composing $\Phi$ with the shift dynamics $\sigma_E$ on $\mathcal{A}_E$, yielding a discrete group of automorphisms $\{\Theta_n=\Theta^n\}_{n\in\mathbb{N}}$, where 
\begin{equation}	
	\label{reduceddynamicsn}
	\Theta:= ({\rm id}_S\otimes\sigma_E)\circ\Phi \ .
\end{equation}
\hskip-0.1cm
\begin{figure}[t!]
	\begin{tikzpicture}[scale=0.89, transform shape, baseline=(current bounding box.center),
	every node/.style={font=\normalsize},
	tensor/.style={inner sep=1.1pt},
	graybox/.style={fill=gray!25, rounded corners=3pt, inner sep=2.6pt},
	purplebox/.style={fill=purple!10, inner xsep=10pt,   
		inner ysep=20pt, rounded corners=2.2pt,minimum height=2.5em},
	]

	\node[label] (int) at (-12.2,1.2) {\small{(a)}} ;
	\node[label] (phi) at (-11.5,1.2) {$\boxed{\Phi}$};

	\node[tensor] (left1) at (-10.2,-0.3){$\cdots 
			\otimes
		 \mathcal{A}_E^{(-a)} \otimes \mathcal{A}_E^{(-a+1)}  \otimes\cdots \otimes$};
	
	
	\node[graybox] (ASmid) at (-7.45,0.2){
		$
		\begin{aligned}
			&\mathcal{A}_S\\
			&\;\, \otimes \\
			&\mathcal{A}_E^{(0)}
		\end{aligned}
		$
	};
	

	\node[tensor, right=.9cm of left1] (right1) {$\otimes\, 
		\cdots \otimes \mathcal{A}_E^{(b)} \!\otimes \mathcal{A}_E^{(b+1)} \otimes
		\cdots
		$
		};

	\node[label] (int) at (-12.2,-1.5) {\small{(b)}} ;

	\node[purplebox, tensor] (block1) at (-8.3,-3.1) {$ \mathcal{A}_E^{(-a)} \otimes \mathcal{A}_E^{(-a+1)} \otimes \cdots \otimes \mathcal{A}_E^{(0)} \otimes
		 \cdots
		 \otimes\mathcal{A}_E^{(b)} $};

	\node (ASmid2) at (-7.6,-2) {
		$
		\begin{aligned}
			&\mathcal{A}_S\\
			&\; \otimes \\
		\end{aligned}
		$
	};
	
	\node[tensor,left=0.0001cm of block1] (left2){$\cdots
		\otimes
		$};
	
	\node[tensor, right=0.1cm of block1] (right2) {$ \!\otimes \mathcal{A}_E^{(b+1)} 
		\otimes
		\cdots
		$
	};

	\node[label] (int) at (-12.2,-5) {\small{(c)}} ;
	\node[label] (ids) at (-11.,-5) {$\boxed{\mathrm{id}_S \otimes \sigma_E}$};

	\node[purplebox, tensor] (block2) at (-7.15,-6.5) {$
		\mathcal{A}_E^{(-a+1)}  \otimes 
		\cdots 
		\otimes \ \mathcal{A}_E^{(0)} \otimes 
		 \cdots\otimes\mathcal{A}_E^{(b)}
		\otimes\mathcal{A}_E^{(b+1)}
		$};

	\node (ASmid2) at (-7.5,-5.4) {
		$
		\begin{aligned}
			&\mathcal{A}_S\\
			&\; \otimes \\
		\end{aligned}
		$
	};
	
	\node[tensor,left=0.001cm of block2] (left2){$
		\cdots 
		\otimes
		 \mathcal{A}_E^{(-a)} 
		 \otimes $};
	
	\node[tensor, right=0.1cm of block2] (right2) {$ 
		\otimes 
		\cdots
		$
	};
\end{tikzpicture}
\caption{One step $\Theta$ of the algebraic collisional dynamics: (a) $\Phi$ acts on the algebra of the system $\mc{A}_S$ and on the $0$-th site of the chain. (b) An operator $A^{[-a,b]}$, localized in $\mc{A}_E^{[-a,b]}$, is then translated to the right in (c) by the shift automorphism $\sigma_E$.}
\label{fig:collisionalfigure}
\end{figure}

\subsection{Quantum Regression}\label{sec:QR}

Non-Markovianity is intuitively associated with the presence of memory effects; as already anticipated in the Introduction, given the reduced dynamics $\{\Lambda^\ddag_n\}_{n\geq 0}$ of an open quantum system, a standard approach is via the notion of \emph{divisibility}. Namely, one looks at propagators     
\begin{equation}
    \label{schrdivisibility}\Lambda^\ddagger_n=\Lambda^\ddagger_{n,m}\circ\Lambda^\ddagger_m\,, \quad n\ge m
    \end{equation}
and call the reduced evolution (C)-P divisible if $\Lambda_{n,m}^\ddagger$ are (completely) positive and trace preserving for all $n\ge m$. In such a setting,
CP-divisibility is often deemed a good notion of Markovianity~\cite{ChrusReview22}. An associated remarkable feature of non-Markovianity is the so-called back-flow of information~\cite{BLP,GTDWissmannBreuerAmato}, that manifests itself when the evolution is not even P-divisible.
For algebraically invertible dynamics, in fact, lack of P-divisibility equals lack of trace-norm contractivity with the consequence that
the distinguishability of two density matrices $\rho, \sigma$ might increase in time. Such  time-revivals of trace norms, $\normt{\rho-\sigma}$, as well as of other distances or entropic divergences,  
is accordingly interpreted as  back-flow of information from the environment into the system, a clear signature of memory effects.
Back-flow of information can also occur in a commutative setting when the transition matrices from time-step $m$ to time-step $n$ are not positive~\cite{BCN, Amato2025bridging}; instead, a purely quantum phenomenon, impossible in a commutative scenario, is the super-activation of back-flow of information~\cite{BenattiChrusFil,FBGN2023}.
Indeed, going from a single-system reduced dynamics  $\Lambda^\ddag_n$ to a bipartite one of the form $\Lambda^\ddag_n\otimes \Lambda^\ddag_n$, lack of CP-divisibility  of the former might generally cause super-activation of back-flow of information in the latter,  while the subsystems do not display back-flow in terms of distinguishability revivals.
\par
Evidently, the physical mechanisms behind a possible back-flow of information cannot be appreciated sticking to the open quantum system only, as it asks to
take into account  the full joint system-environment dynamics $\Theta$ which originates that particular dissipative open quantum dynamics $\{\Lambda_n\}_{n\geq 0}$. 
	For the moment, we shall not assume that the family $\{\Theta_n\}_n$ forms a group, namely the automorphic Heisenberg propagators
	\begin{equation}\label{theta_propagators}
		\Theta_{n,m}=\Theta_m^{-1}\circ\Theta_n^{\phantom{-1}}
	\end{equation}
	between subsequent times $n \ge m$ might be genuinely time dependent.
As emphasized in the Introduction, by focussing upon multi-time correlation functions, one can 
substantially refine the definition of  Markovian open quantum behavior  by identifying it with QR~\cite{GardinerZoller, BreuerPetruccione}, that we now reformulate. 
QR requires measuring observables of the open system only as the environment is typically neither accessible nor under control.
Using the algebraic notation just introduced,  let then $\{A_k\}_{k=0}^{n-1} \in\mathcal{A}_S$, $n\ge1$, define
\begin{equation}
    \widetilde{A}^{(n)}:=\Theta_{n-1}[A_{n-1}\otimes\mathds{1}_E] \ldots\Theta_1[A_{1}\otimes\mathds{1}_E]A_0\otimes\mathds{1}_E 
    \,,
\end{equation}
belonging to   $\mathcal{A}_S\otimes \mathcal{A}_E$ and evaluate the multi-time correlation functions $\omega_S\otimes\omega_E\left(\widetilde{B}^{(n)\dagger}\widetilde{A}^{(n)}\right)$ given by
\begin{align}
	\label{pre_QR}	
	&\omega_{S}\otimes\omega_E\bigg(B_0^\dagger\otimes\mathds{1}_E\,\Theta_1\left[B_1^\dagger\otimes\mathds{1}_E\right]\ldots\\ \nonumber
	&\ldots\Theta_{n-1}\left[B_{n-1}^{\dagger} A_{n-1}^{\phantom{\dagger}}\otimes \mathds{1}_E\right]\ldots\Theta_1\left[A_{1}\otimes\mathds{1}_E\right]A_{0}\otimes\mathds{1}_E\bigg)\, .
\end{align}
Given the triple $(\mc{A}_S\otimes\mc{A}_E,\omega_S\otimes \omega_E, \Theta)$, the open quantum system $S$ will be called 
\textit{QR-Markovian}
if it satisfies QR, namely if,
    for arbitrary $\{A_k\}_{k=0}^{n-1}$, $\{B_k\}_{k=0}^{n-1}$  in $\mc{A}_S$, $n\ge 1$,
  \begin{align}\nonumber	&\omega_S\otimes\omega_E\Big(\widetilde{B}^{(n)\dagger}\widetilde{A}^{(n)}\Big)=\\ \nonumber
	&= \omega_S\bigg(B_{0}^\dagger\Lambda_{1}\bigg[B_{1}^\dagger\Lambda_{2,1}\Big[\dots\Lambda_{{n-1},{n-2}}\left[B_{{n-1}}^{\dagger} A_{n-1}^{\phantom{\dagger}}\right]\dots
	\\  
	&\hskip+5.3cm \dots
	\Big]A_{1}\bigg]A_{0}\bigg),\label{qrformula}
\end{align} 
with $\Lambda_{n,n-1},\, n\in\mathbb{N}$ completely positive, unital maps. 
For $n=j+1$ and $A_i=B_i=\mathds{1}_S$, for $i=0,1,\ldots, j-1$ QR yields
\begin{align}
\nonumber
&\omega_S\otimes\omega_E\Big(\Theta_j\Big[B^\dag_{j}\,A_{j}\otimes\mathds{1}_E\Big]\Big)=\\&=\omega_S\Big(\Lambda_1\circ\Lambda_{2,1}\circ\cdots\circ\Lambda_{j,j-1}\Big[B_{j}^\dag\,A_{j}\Big]\Big)	\nonumber\\
\label{complaw}
&=\Tr\Big(\Lambda_{j,j-1}^\ddag\circ\cdots\circ\Lambda^\ddag_{2,1}\circ\Lambda_1^\ddag[\rho_S]\,B_{j}^\dag\,A_{j}\Big)\ .
\end{align} 
The left hand side above defines the reduced dynamics of the open system $S$ at time-step $j$ so that, in the Schrödinger picture,
one retrieves the composition law 
\begin{equation}
\label{complaw1}
\Lambda^\ddag_j=\Lambda_{j,j-1}^\ddag\circ\cdots\circ\Lambda^\ddag_{2,1}\circ\Lambda_1^\ddag\ .
\end{equation}
Furthermore, since in QR scenario the one-step propagators
$\Lambda^\ddag_{j,j-1}$ must be CPTP, the reduced dynamics is  Markovian in the divisibility sense of~\cite{RivasHuelgaPlenio} (see Remark~\ref{rmk:QRimpliesCPdiv}). This is only a necessary condition for QR-Markovianity; indeed, as we shall show later in a concrete 
example, CPTP propagators need not imply QR.

\begin{remark}\label{QRCR}
	Suppose that the system algebra is maximally commutative, namely, spanned by one-dimensional orthogonal projectors $P_i, P_i P_j = \delta_{ij}P_j$.
	Then,  QR condition reduces to CR. Indeed, let $C_k=B_k^\dag A_k=\sum_j c_j^{(k)} P_j$;
	due to commutativity, using~\eqref{theta_propagators}, we can rewrite~\eqref{pre_QR} as
	\begin{align*}
		&\omega_S\otimes \omega_E\left(\Theta_1 \bigg[\Theta_{2,1}\Big[\dots
		\Theta_{n-1,n-2}\big(C_{n-1}\otimes\mathds{1}_E\big)\dots\right.\\
		&\hskip+4cm\left.\dots
		\Big]C_1\otimes\mathds{1}_E\bigg]C_0\otimes\mathds{1}_E\right)
		\\
		&=\sum_{i_0 \dots i_{n-1}} c_{i_0}^{(0)}c_{i_1}^{(1)} \dots c_{i_{n-1}}^{(n-1)}\ p^{[0,n-1]}_{i_0 i_1 \dots i_{n-1}}\ ,
	\end{align*}
	where we let
	\begin{align}\label{lhs_cr}
	&p^{[0,n-1]}_{i_0 i_1 \dots i_{n-1}}:=\omega_S\otimes \omega_E\left(\Theta_1\bigg[\Theta_{2,1}\Big[\dots
	\right. \\
	&\hskip+1cm \left.
	\Theta_{n-1,n-2}\big[P_{i_{n-1}}\otimes\mathds{1}_E\big]\dots
	\Big]P_{i_1}\otimes\mathds{1}_E\bigg]P_{i_0}\otimes\mathds{1}_E\right).\nonumber
	\end{align}
	On the other hand,~\eqref{qrformula} can be rewritten as
	\begin{align}\nonumber &\omega_S\bigg(\Lambda_{1}\Big[\Lambda_{2,1}\Big[\dots\Lambda_{{n-1},{n-2}}\left[C_{{n-1}} \right]\dots
		\dots
		\Big]C_{1}\Big]C_{0}\bigg)\nonumber\\
		&=\sum_{i_0 \dots i_{n-1}} c_{i_0}^{(0)}c_{i_1}^{(1)} \dots c_{i_{n-1}}^{(n-1)} \ \omega_S\Big(P_{i_0}\Lambda_{1}\Big[P_{i_1}\Lambda_{2,1}\Big[\dots \nonumber\\
		& \hskip+1cm \,P_{i_{n-2}}\Lambda_{{n-1},{n-2}}\left[P_{i_{n-1}} \right]P_{i_{n-2}}\dots 
		\Big]P_{i_1}\Big]P_{i_0}\Big)\nonumber\\
		&=\sum_{i_0 \dots i_{n-1}} c_{i_0}^{(0)}c_{i_1}^{(1)} \dots c_{i_{n-1}}^{(n-1)} \prod_{k=0}^{n-1} T_{i_{k} i_{k-1}}(k,k-1)
		\,p_{i_0}\,,\label{rhs_cr}
	\end{align} 
	where $p_j=\omega_S(P_j)$, we defined $T_{ij}(k,k-1):=\Tr(P_j\Lambda[P_i])$ and used that $P_{i_{k-1}} \Lambda_{{k},k-1} [P_{i_{k}}]P_{i_{k-1}}=T_{i_k i_{k-1}}(k,k-1) P_{i_{k-1}}$. 
	By equating~\eqref{lhs_cr} and ~\eqref{rhs_cr}, we recover~\eqref{CRF}.
	
\end{remark}

\subsection{ALF Entropy}\label{sec:ALF}

The degree of instability of a classical dynamics, for instance the existence of a positive Lyapounov exponent, can be associated to dynamical entropy production
by means of the Kolmogorov-Sinai entropy~\cite{Walters1982}. The basic idea is to reduce the dynamics to the shift along sequences of symbols labeling 
the cells of a finite partition of the phase-space that are visited by the time-evolving system at each time-step. Any time-invariant classical state
assigns probabilities to the sequences with Shannon entropies tending  to a well-defined asymptotic rate.
The Kolmogorov-Sinai entropy is the largest among such rates.
The  ALF-entropy provides a quantum counterpart to it in that
the coarse-graining of the phase-space is replaced by a POVM-measurements at successive time-steps of the state dynamics.
Let a quantum dynamical system $(\mathcal{A}, \omega,\Theta)$ be endowed with a reversible automorphic dynamics $\Theta$  and with  a $\Theta$-invariant state $\omega$ on $\mathcal{A}$, $\omega\circ\Theta_n=\omega$.
 Then, any finite POVM $\mathcal{X}$ consisting of $|\mathcal{X}|$ operators $\mathcal{X}:=\{X_a\}_{a=1}^{|\mathcal{X}|} \,\subseteq\,\mc{A}$ 
such that $\sum_{a=1}^{\abs{\mathcal{X}}} X_a^\dagger X_a^{\phantom{*}} =\mathds{1}_{\mathcal{A}}$,
gives rise to a family of time-dependent POVM $\mathcal{X}^{(n)}=\left\{X_{\bs{a}}^{(n)}\right\}_{\bs{a}}$, where $\bs{a}=a_0a_1\cdots a_{n-1}$ and 
\begin{equation}\label{time_partitionelement}
	X_{\bs{a}}^{(n)}:= \Theta_{n-1}[X_{a_{n-1}}]\,\dots\,\Theta_1[X_{a_1}]\,X_{a_0}\,.
\end{equation}
Notice that this is not true if $\Theta_n$ is dissipative (see Remark~\ref{rmk:dissipativepovm}).
For such a reversible dynamical system, it follows that
\begin{equation}\label{coarse_grained}
\rho\big[\mathcal{X}^{(n)}\big]\ = \ \sum_{\bs{a},\bs{b}} \omega\left(X_{\bs{b}}^{(n)\dagger} \, X_{\bs{a}}^{(n)}\right) \ketbra{\bs{a}_{[0,n-1]}}{\bs{b}_{[0,n-1]}} 
\end{equation}
is a density matrix, that is self-adjoint, positive and normalized with  multi-time correlation functions as matrix elements:
\begin{align}\nonumber
	\left(\rho\big[\mathcal{X}^{(n)}\big]\right)_{\bs{a},\bs{b}}&=\omega\left(X_{\bs{b}}^{(n)\dagger} X_{\bs{a}}^{(n)}\right)\\&\hskip-1.5cm=\omega\bigg(X_{b_0}^\dagger\Theta_1\big[X_{b_1}^\dagger\big]\Theta_2\big[X_{b_2}^\dagger\big]\cdots\Theta_{n-1}\big[X_{b_{n-1}}^{\dagger} X_{a_{n-1}}^{\phantom\dagger}\big] \cdots\nonumber \\
	&\hskip+1.5cm \dots\Theta_2\big[X_{a_2}\big]\Theta_1\big[X_{a_1}\big]X_{a_0}\bigg)\,.\label{cmatrix_elements}
\end{align}
Using~\eqref{theta_propagators} and the fact  that $\Theta_{n,m}[XY]=\Theta_{n,m}[X]\,\Theta_{n,m}[Y]$, along with~\eqref{theta_propagators}, one can recast~\eqref{cmatrix_elements} in a ``nested'' fashion as in Remark~\ref{QRCR}:
\begin{multline}	\label{nestedcmatrix_elements}
	\left(\rho\big[\mathcal{X}^{(n)}\big]\right)_{\bs{a},\bs{b}}=\omega\bigg(X_{b_0}^\dagger\Theta_1\bigg[X_{b_1}^\dagger\Theta_{2,1}\Big[X_{b_2}^\dagger\cdots
	\\
	 \dots\Theta_{n-1,n-2}\big[X_{b_{n-1}}^{\dagger} X_{a_{n-1}}^{\phantom\dagger}\big] \cdots X_{a_2}\big]X_{a_1}\bigg]X_{a_0}\bigg).
\end{multline}
The state $\rho\big[\mathcal{X}^{(n)}]$ provides a description of the statistics of the evolving state as reconstructed by measuring a fixed POVM.

\begin{remark}	\label{rem:HeisSchr}
	By passing  from the Heisenberg to the Schr\"odinger picture via the following dual maps on the space of linear functionals $\omega$ over the algebra $\mathcal{A}$ of the system:
	$\omega\mapsto \mathbb{E}^\ddag_{ab}[\omega]$ and $\omega\mapsto \Theta_{k,k-1}^\ddag[\omega]$ such that
	\begin{align*}
		\mathbb{E}^\ddag_{ab}[\omega](X)&=\omega(X_b^\dag\,X\,X_a)\,,\\  \Theta_{k,k-1}^\ddag[\omega](X)&=\omega(\Theta_{k,k-1}[X])\, ,
	\end{align*}
	one rewrites
	\begin{multline}
		\left(\rho\big[\mathcal{X}^{(n)}\big]\right)_{\bs{a},\bs{b}}=\mathbb{E}^\ddag_{a_{n-1}b_{n-1}}\circ\Theta_{n-1,n-2}^\ddag\circ\cdots\\
		\cdots\circ\mathbb{E}^\ddag_{a_1b_1}\circ\Theta_1^\ddag\circ\mathbb{E}^\ddag_{a_0b_0}[\omega](\mathds{1})\, .
		\label{dualpicture}
	\end{multline}
	The diagonal entries of $\rho\big[\mathcal{X}^{(n)}\big]$ are then the probabilities of sequences of selective measurements of the various POVM terms followed by the one-step reversible dynamics. 
\end{remark}
The information gathered in the long run can then be characterized by the so-called ALF-entropy of the given POVM~\cite{AF1994defining}, by means of the following entropy rate
	\begin{align} \label{ALF_rate}
		\mathfrak{h}_\omega(\Theta, \mathcal{X}):=\limsup_{n} \frac{1}{n} S\left(\rho\big[\mc{X}^{(n)}\big]\right)\, ,	
		\end{align}
where $S(\rho)$ is the von Neumann entropy. One has to do with the $\limsup$ since the limit of the ratio could not exist. Indeed, even if the family of automorphisms $\{\Theta_n\}_n$ form a group, the family of density matrices $\rho\big[\mc{X}^{(n)}\big]$ is not generally stationary since measuring the POVM changes in general the state on which the POVM measurement is performed.
The entropy rate is finally made independent from the chosen POVM by maximizing over all of them chosen from a suitable subalgebra of physically admissible partitions $\mathcal{B}$, typically assumed to be globally $\Theta$-invariant and dense in $\mc{A}$: 
\begin{equation}\label{def:dynamicalentropy}
		\mathfrak{h}_\omega^{\mathcal{B}}(\Theta)\ :=\ \sup_{\mc{X}\subseteq\mathcal{B}} \ \mathfrak{h}_\omega(\Theta, \mathcal{X})\; .
	\end{equation}
A useful control of the ALF entropy is in some cases provided by the following bound whose proof can be found in~\cite{AlickiFannesBook}.
\begin{proposition}\label{prop:finitelevel}
	Let  $A \in \mathcal{A}$ and $\omega(A)=\Tr(\rho A)$ for some density matrix $\rho$ 
	and let $\mc{X}=\{{X}_a\}_{a=1}^{|\mc{X}|}$ be a POVM.
	Then,
	\begin{equation}
		\label{finite_h}S\left(\rho\left[\mathcal{X}\right]\right)\ \le\ S(\rho)\ + \ S\left(\sum_{a=1}^{\abs{\mc{X}}} \,X_a\, \rho X_a^\dagger\right)\, .
	\end{equation}
\end{proposition}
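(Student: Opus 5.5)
Here $\rho[\mathcal{X}]$ is the $|\mathcal{X}|\times|\mathcal{X}|$ matrix with entries $\omega(X_b^\dagger X_a)=\Tr(\rho X_b^\dagger X_a)$, i.e. the $n=1$ case of~\eqref{coarse_grained}. The plan is to exhibit $\rho[\mathcal{X}]$ and $\sum_a X_a\rho X_a^\dagger$ as the two marginals of one pure state, and then apply subadditivity (Araki–Lieb) of the von Neumann entropy.

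\emph{Purification.} Write the spectral decomposition $\rho=\sum_j r_j\ketbra{j}{j}$ on $\mathbb{C}^d$ and let $\ket{\Psi}=\sum_j\sqrt{r_j}\,\ket{j}\otimes\ket{j}\in\mathbb{C}^d\otimes\mathbb{C}^d$, so that $\Tr_2\ketbra{\Psi}{\Psi}=\rho$. Adjoin an auxiliary register $\mathbb{C}^{|\mathcal{X}|}$ with orthonormal basis $\{\ket{a}\}$ and define
\begin{equation*}
\ket{\Phi}=\sum_{a}\ket{a}\otimes (X_a\otimes\mathds{1})\ket{\Psi}\in\mathbb{C}^{|\mathcal{X}|}\otimes\mathbb{C}^d\otimes\mathbb{C}^d .
\end{equation*}
The normalization $\sum_a X_a^\dagger X_a=\mathds{1}$ gives $\braket{\Phi}{\Phi}=\Tr(\rho\sum_a X_a^\dagger X_a)=1$, so $\ket{\Phi}$ is a unit vector.

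\emph{Marginals.} There are three reduced states to compute.
\begin{itemize}
\item Tracing out both $\mathbb{C}^d$ factors gives the matrix $\sum_{a,b}\bra{\Psi}(X_b^\dagger X_a\otimes\mathds{1})\ket{\Psi}\,\ketbra{a}{b}=\sum_{a,b}\Tr(\rho X_b^\dagger X_a)\ketbra{a}{b}=\rho[\mathcal{X}]$.
\item Tracing out the register and the second copy gives $\sum_a X_a\rho X_a^\dagger$.
\item Tracing out the register and the first copy gives $\Tr_1\bigl((\sum_a X_a^\dagger X_a\otimes\mathds{1})\ketbra{\Psi}{\Psi}\bigr)$, after cyclicity of the partial trace over the first factor. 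This equals $\Tr_1\ketbra{\Psi}{\Psi}$, which is $\rho^T$ in the chosen basis and has the same spectrum as $\rho$.
\end{itemize}

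\emph{Entropy bound.} Since $\ket{\Phi}$ is pure, the entropy of the register marginal equals the entropy of the complementary marginal on $\mathbb{C}^d\otimes\mathbb{C}^d$. Subadditivity bounds the latter by the sum of the entropies of its two single-copy marginals, which are $\sum_a X_a\rho X_a^\dagger$ and a state isospectral to $\rho$. This yields~\eqref{finite_h}.

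The only delicate step is the third marginal. One has to check that the POVM normalization makes the $X_a$'s disappear there, via cyclicity inside the partial trace over the first factor. The remaining steps are routine.
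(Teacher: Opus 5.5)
Your proof is correct and is essentially the standard argument the paper relies on. The paper cites Alicki--Fannes for it, and the same construction reappears in its GNS appendix, where $\rho[\mathcal{X}]$ and $\widetilde{\mathbb{X}}^\ddagger[\ketbra{\Omega_\omega}]$ are the two marginals of the pure state $\sum_a \pi_\omega(X_a)\ket{\Omega_\omega}\otimes\ket{a}$, with $\ket{\Omega_\omega}=\ket{\sqrt{\rho}}$ in finite dimension; so, as you do, one purifies, appends a register, and bounds the register entropy by subadditivity across the two copies. One small terminological point: your last step is ordinary subadditivity, whereas the name Araki--Lieb usually refers to the companion triangle inequality, which applied to your same pure state would also give $\abs{S(\rho[\mathcal{X}])-S(\rho)}\le S\bigl(\sum_a X_a\rho X_a^\dagger\bigr)$.
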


   \begin{remark}\label{rmk:dissipativepovm}
   	 If the quantum dynamical system $(\mathcal{A},\omega,\Theta)$ is dissipative, the dynamics $\Theta$ stands for a family of  
   	 CPTP channels $\Lambda_n$. Then, given a POVM 
   	 $\mathcal{X}=\{X_i\}_{i=1}^{\abs{\mathcal{X}}}$, it is no longer true that $\mathcal{X}^{(n)}$ as in~\eqref{time_partitionelement} is a POVM as required by the
   	 ALF construction. Indeed, for general CPU maps the Schwarz inequality yields
   	 \begin{multline*}
   	 	\sum_{a_{n-1}}\Lambda_{n-1}[X^\dag_{a_{n-1}}]\Lambda_{n-1}[X_{a_{n-1}}]
   	 	\\
   	 	\le \sum_{a_{n-1}}\Lambda_{n-1}[ X^\dag_{a_{n-1}}X_{a_{n-1}}]=\mathds{1}\, ,
   	 \end{multline*}
   	 so that the quantity
   	 \begin{multline*}
   	 	\sum_{\bs{a}^{(n)}}X^\dag_{\bs{a}^{(n)}}X_{\bs{a}^{(n)}}= \sum_{\bs{a}^{(n)}}X_{a_0}\Lambda_1(X^\dag_{a_1})\cdots\\ \cdots\Lambda_{n-1}[X^\dag_{a_{n-1}}]\Lambda_{n-1}[X_{a_{n-1}}]
   	 	\cdots\Lambda_1[X_{a_1}]\,X_{a_0}\ 
   	 \end{multline*}
   	 does not generally sum  to the identity.  
   	 Nevertheless, consider the nested form~\eqref{nestedcmatrix_elements} of the coarse grained density matrix; it then appears natural to replace the automorphism $\Theta$ with CPU intertwiners $\Lambda_{n,m}$:
   	 \begin{multline}\label{Lambdanested}
   	 	\left(\rho\big[\mathcal{X}^{(n)}\big]\right)_{\bs{a},\bs{b}}=\omega\Big(X_{b_0}^\dagger\Lambda_1\big[X_{b_1}^\dagger\Lambda_{2,1}\big[X_{b_2}^\dagger\cdots \\
   	 	\cdots \Lambda_{n-1,n-2}\big[X_{b_{n-1}}^{\dagger} X_{a_{n-1}}^{\phantom\dagger}\big]X_{a_2}\big]X_{a_1}\big]X_{a_0}\Big)\ .
   	 \end{multline}
   	 One thus obtains a compatible family of legitimate density matrices, $\Tr_{n-1}\left(\rho[\mathcal{X}^{(n)}]\right)=\rho[\mathcal{X}^{(n-1)}]$.
   	 As seen in the Quantum Regression formula~\eqref{qrformula}, \eqref{Lambdanested} has the precise physical meaning of a coarse-grained density matrix of an open system undergoing a QR-Markovian dynamics.
   \end{remark}

\section{Open-system ALF entropy}

\label{sec:ALFDISS}

As seen in Section~\ref{sec:QR}, rather than through the reduced dynamics, the statistical properties of the dissipative behaviour are
better accessed by using 
multi-time correlation functions of the form $\omega_S\otimes\omega_E\left(\widetilde{B}^{(n)\dagger}\widetilde{A}^{(n)}\right)$ in~\eqref{pre_QR}.
These involve the compound automorphic dynamics $\Theta$ of system $S$ and environment $E$ together and operators $X_S\otimes\mathds{1}_E$  of the system $S$, only.

Within the ALF construction applied to the compound system $S+E$, as for instance in the case of a collisional model for an open $d$-level system, restricting to such particular operators is justified by the  experimental inaccessibility of the environment: we shall thus focus on POVMs $\mc{X}$ belonging to the subalgebra $\subseteq \Md{d}\otimes\mathds{1}_E$. 
\par
In practice, we shall then focus upon the dynamical triple $(\mathcal{A}_S\otimes\mathcal{A}_E,\omega_S\otimes\omega_E,\Theta)$ where
$\mathcal{A}_S$ is a finite matrix algebra of $S$ operators, $\mathcal{A}_E$ a quantum spin chain algebra, evolving under a family $\Theta$ of 
automorphisms, while $\omega_S\otimes\omega_E$ is a $\Theta$-invariant state over $\mathcal{A}_S\otimes\mathcal{A}_E$.

Following the construction described in Section~\ref{sec:ALF}, the elements of the coarse-grained density matrix are thus of the form~\eqref{pre_QR}: 
\begin{multline}\label{system_cmatrix}
\left(\rho_S\Big[\mc{X}^{(n)}\Big]\right)_{\bs{a},\bs{b}}
=\omega_{S}\otimes\omega_E\Big(X^{\dagger}_{b_0}\otimes\mathds{1}_E\, \Theta_1[X^{\dagger}_{b_1}\otimes\mathds{1}_E]\ldots\\
\dots\Theta_{n-1}[X_{b_{n-1}}^{\dagger} X_{a_{n-1}}\otimes \mathds{1}_E]\dots\Theta_1[X_{a_1}\otimes\mathds{1}_E]X_{a_0}\otimes\mathds{1}_E\Big).
\end{multline}
 Here, the subscript $S$ stressed that we are restricting all possible POVMs to the subalgebra $\mc{A}_S\otimes \mathds{1}_E$, only.
One can thus compute the entropy rate
\begin{equation}
\label{ALFrate}
	\mathfrak{h}_S(\Theta,\mc{X}):=
	\limsup_{n} \frac{1}{n} S\left(\rho_S\left[\mc{X}^{(n)}\right]\right), \quad \mc{X} \subseteq \mc{A}_S\otimes\mathds{1}_E\,,
\end{equation}
and maximize it over the POVMs of the open system to obtain the asymptotic entropy rate
\begin{equation}
\mathfrak{h}_S(\Theta)\quad :=\
\sup_{\mc{X}\,\subseteq\,\mathcal{A}_S \otimes \mathds{1}_E}\mathfrak{h}_{S}(\Theta, \mathcal{X})\, ,
\label{ALFopen}
\end{equation}
Note that, differently from what is usually done in most applications of the ALF entropy, the optimization is now over the subalgebra $\mathcal{A}_S\otimes \mathds{1}_E$, which is neither time-invariant nor, clearly, dense in the full $S+E$ algebra. Rather, \eqref{ALFopen} represents the information rate due to repeated measurements on $S$ only, intertwining the $S+E$ unitary evolution (see Section~\ref{sec:dynGNS_model}). Accordingly, we shall call it the \emph{open-system ALF entropy}.

\par

 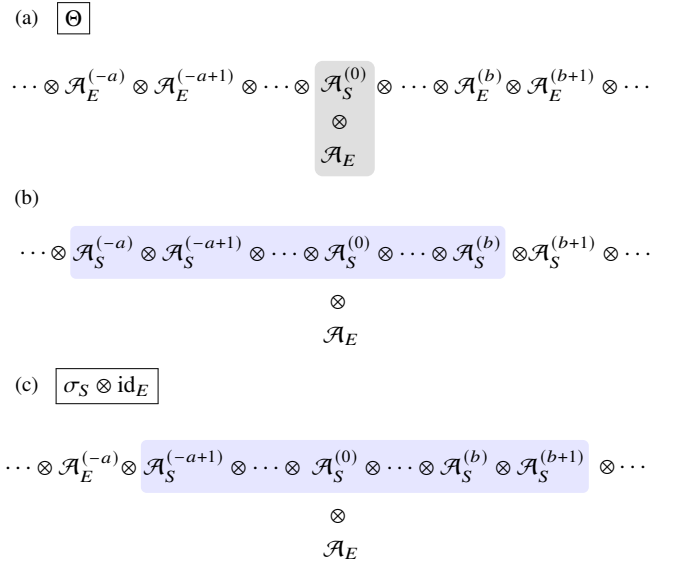
\begin{figure}[t]
	\begin{tikzpicture}[scale=0.89, transform shape, baseline=(current bounding box.center),
		every node/.style={font=\normalsize},
		tensor/.style={inner sep=1.1pt},
		graybox/.style={fill=gray!25, rounded corners=3pt, inner sep=2.6pt},
		purplebox/.style={fill=blue!10, inner xsep=10pt,   
			inner ysep=20pt, rounded corners=2.2pt,minimum height=2.5em},
		]

		\node[label] (int) at (-12.2,1.2) {\small{(a)}} ;
		\node[label] (phi) at (-11.5,1.2) {$\boxed{\Theta}$};

		\node[tensor] (left1) at (-10.2,0.2){$\cdots 
			\otimes
			\mathcal{A}_E^{(-a)} \otimes \mathcal{A}_E^{(-a+1)}  \otimes\cdots \otimes$};

		\node[graybox] (ASmid) at (-7.45,-0.3){
			$
			\begin{aligned}
				&\mathcal{A}_S^{(0)}\\
				&\;\, \otimes \\
				&\mathcal{A}_E
			\end{aligned}
			$
		};

		\node[tensor, right=.9cm of left1] (right1) {$\otimes\, 
			\cdots \otimes \mathcal{A}_E^{(b)} \!\otimes \mathcal{A}_E^{(b+1)} \otimes
			\cdots
			$
		};

		\node[label] (int) at (-12.2,-1.5) {\small{(b)}} ;

		\node[purplebox, tensor] (block1) at (-8.3,-2.3) {$ \mathcal{A}_S^{(-a)} \otimes \mathcal{A}_S^{(-a+1)} \otimes \cdots \otimes \mathcal{A}_S^{(0)} \otimes
			\cdots
			\otimes\mathcal{A}_S^{(b)} $};

		\node (ASmid2) at (-7.5,-3.3) {
			$
			\begin{aligned}
				&\; \otimes \\
				&\mathcal{A}_E
			\end{aligned}
			$
		};
		
		\node[tensor,left=0.0001cm of block1] (left2){$\cdots
			\otimes
			$};
		
		\node[tensor, right=0.1cm of block1] (right2) {$ \!\otimes \mathcal{A}_S^{(b+1)} 
			\otimes
			\cdots
			$
		};

		\node[label] (int) at (-12.2,-4.3) {\small{(c)}} ;
		\node[label] (ids) at (-11.,-4.3) {$\boxed{ \sigma_S\otimes\mathrm{id}_E }$};

		\node[purplebox, tensor] (block2) at (-7.15,-5.5) {$
			\mathcal{A}_S^{(-a+1)}  \otimes 
			\cdots 
			\otimes \ \mathcal{A}_S^{(0)} \otimes
			\cdots\otimes\mathcal{A}_S^{(b)}
			\otimes\mathcal{A}_S^{(b+1)}
			$};

		\node (ASmid2) at (-7.5,-6.5) {
			$
			\begin{aligned}
				&\; \otimes \\
				&\mathcal{A}_E
			\end{aligned}
			$
		};
		
		\node[tensor,left=0.001cm of block2] (left2){$
			\cdots 
			\otimes
			\mathcal{A}_E^{(-a)} 
			\otimes $};
		
		\node[tensor, right=0.1cm of block2] (right2) {$ 
			\otimes 
			\cdots
			$
		};
\end{tikzpicture}
\caption{Dynamical system underlying the construction of the map~\eqref{mapdefinition_enunciato}.  (a) $\Theta$ acts non trivially on the $0$-th copy of the system $\mc{A}_S^{(0)}=M_d^{(0)}(\mathbb{C})$ and on the algebra of the environment. (b) An operator, initially localized in $\mc{A}_S^{[-a,b]}$, is then translated to the right in (c) by the shift automorphism $\sigma_S$.
	Note the similarity with the algebraic collisional model of Figure~\ref{fig:collisionalfigure}
}
\label{fig:dualfigure}
\end{figure}
Consider an open quantum system with reference state $\omega_S$ given by a faithful density matrix $\rho_S$, namely,  without zero eigenvalues (faithful) and select the following  special POVM  from $\mc{A}_S$,
\begin{equation}\label{specialOPU}
	\mc{F}:=\{F_{a,a'}	\otimes \mathds{1}_E\}_{a,a'=1}^d\,, \quad F_{a,a'}=\sqrt{r_a} \ketbra{r_a}{r_{a'}},
\end{equation}
where $r_a>0, \ket{r_a}$ are the eigenvalues, respectively, eigenvectors of $\rho_S$.  
It preserves $\rho_S$: $\sum_{a,a'=1}^d F_{a,a'}\,\rho_S\, F^\dag_{a,a'}=\rho_S$.
The speciality of such POVM resides in the fact that  the structure of the coarse-grained matrix~$\rho\big[\mathcal{F}^{(n)}\big]$ can be explicitly worked out. 
To this end, it is useful to 
introduce a quantum spin chain algebra  $\bigotimes_{k=-\infty}^\infty\mc{A}_S=M_d^{\mathbb{Z}}(\mathbb{C})$, each site thus being a matrix algebra identical to the system algebra $\mc{A}_S$. Denote by $\sigma_S$ the right shift on such infinite algebra. 
Then, an operator $\bigotimes_{k=0}^{n-1}A_k\in M_d^{\otimes n}(\mathbb{C})$ can be thought of as embedded into the chain algebra $M_d^{\mathbb{Z}}(\mathbb{C})$ by localizing it within the interval $[0,n-1]$. 

Assume also that the Heisenberg propagators $\Theta_{j,j-1}$, defined in~\eqref{theta_propagators}, act non-trivially only on $ M_d^{(0)}(\mathbb{C})\otimes \mathcal{A}_E$.
To compactly denote the ordered composition of maps $\Phi_{j}, j=1,\dots,n$, we introduce the  symbol
$$\bigcomp_{j=1}^n\, \Phi_j:= \Phi_1 \circ \Phi_2 \circ \cdots\circ \Phi_n\,.$$

\begin{proposition}\label{thm:partitionF_general}
	With respect to a time-invariant state $\omega_S\otimes \omega_E$, 
	consider 
	the CPU map $\mathbb{T}_n : M_{d}^{\otimes n}(\mathbb{C})\to M_{d}^{\otimes n}(\mathbb{C})$,  defined by:
	\begin{align}\label{mapdefinition_enunciato}
		&\mathbb{T}_n\left[\bigotimes_{k=0}^{n-1}A_k 
		\right]\\
		&\hskip-1mm:=\,
		\!\omega_E\Bigg(
		\displaystyle\bigcomp_{j=1}^{n}\left( \Theta_{j,j-1} \circ
		 (\sigma_S\otimes\mathrm{id}_E)\right)
		\left[\bigotimes_{k=0}^{n-1}A_k^{(-n+k)}\otimes \mathds{1}_E
		\right]
		\!\Bigg) .\nonumber
	\end{align}
	The system coarse-grained density matrix with respect to the POVM $\mc{F}$ is 
	given, up to a rearrangement of tensor factors, by
	\begin{equation}\label{thm1_1}
		\rho_S\left[\mc{F}^{(n+1)}\right] =\rho_S\otimes \rho_S \otimes  \left(\mathbb{T}_n^\ddag \otimes \mathrm{id} \left[\ketbra{\sqrt{\rho_S^{\otimes n}}}\right]\right),
	\end{equation}
	where $\mathbb{T}_n^\ddag$ is the CPTP map dual of $\mathbb{T}_n$
	and $\ket{\sqrt{\rho}}$ denotes the standard purification of a faithful density matrix $\rho=\sum_a r_a \ketbra{r_a}$,
	\begin{equation*}
		\ket{\sqrt{\rho}}
		=\sum_a \sqrt{r_a} \ket{r_a\otimes r_a}\,, 	\qquad  0< r_a < 1.
	\end{equation*}
\end{proposition}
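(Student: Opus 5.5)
The plan is to compute the matrix elements~\eqref{system_cmatrix} of $\rho_S\big[\mc{F}^{(n+1)}\big]$ directly and match them, entry by entry, with those of the right-hand side of~\eqref{thm1_1}.

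\emph{Multi-index convention.} Each outcome of $\mc{F}$ is a pair $(a,a')$. Hence an $(n+1)$-step index is $\bs{a}=(a_0a_0')(a_1a_1')\cdots(a_na_n')$, and the same holds for $\bs{b}$. First I rewrite~\eqref{system_cmatrix} in the nested form~\eqref{nestedcmatrix_elements}, using~\eqref{theta_propagators} and the automorphism property of the $\Theta_{j,j-1}$.

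\emph{The two outermost factors.} Two elementary identities will be used.
\begin{itemize}
\item At the innermost (last) time, $F^\dagger_{b_nb_n'}F_{a_na_n'}=\delta_{a_nb_n}\,r_{a_n}\ketbra{r_{b_n'}}{r_{a_n'}}$. Summing the diagonal index $a_n$ with weight $r_{a_n}$ reproduces one factor $\rho_S$, in the variables $(a_n,b_n)$.
\item At the outermost (first) time, the state gives
$$\omega_S\big(F^\dagger_{b_0b_0'}\,Y\,F_{a_0a_0'}\big)=\delta_{a_0'b_0'}\,r_{a_0'}\sqrt{r_{a_0}r_{b_0}}\,\langle r_{b_0}|\omega_E(Y)|r_{a_0}\rangle .$$
This produces the other factor $\rho_S$, in the primed variables $(a_0',b_0')$.
\end{itemize}
After these two steps, the remaining "middle" data is a nested expression in the matrix units
$$\sqrt{r_{a_k}r_{b_k}}\,\ketbra{r_{b_k'}}{r_{b_k}}\;\cdot\;\ketbra{r_{a_k}}{r_{a_k'}},\qquad k=1,\dots,n-1,$$
together with $\ketbra{r_{b_0}}{\cdot}$, $\ketbra{\cdot}{r_{a_0}}$ and $\ketbra{r_{b_n'}}{r_{a_n'}}$. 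These matrix units interleave the propagators $\Theta_{j,j-1}$. The $n$ "glued" pairs $(b_{k-1},a_{k-1})\mapsto(b'_k,a'_k)$ are what should assemble into an operator $\bigotimes_k\ketbra{r_{\cdot}}{r_{\cdot}}$ on $M_d^{\otimes n}$.

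\emph{The key step: dilating the nesting onto the chain.} I would show that the nested expression
$$\omega_E\Big(C_0\,\Theta_1\big[C_1\,\Theta_{2,1}[\cdots\Theta_{n,n-1}[C_n]\cdots]\big]\Big)$$
equals a matrix element of $\mathbb{T}_n\big[\bigotimes_kA_k\big]$ for suitable rank-one $A_k$. The mechanism is that each $\Theta_{j,j-1}$ acts nontrivially only on $M_d^{(0)}\otimes\mc{A}_E$. An operator placed at site $-n+k$ of the auxiliary chain is moved by the successive shifts $\sigma_S\otimes\mathrm{id}_E$ in~\eqref{mapdefinition_enunciato}. It therefore reaches site $0$ exactly at the moment when the corresponding propagator acts, and elsewhere it is just carried along.

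The products of matrix units in the nested form are then reorganized as follows. The site-$0$ operator is successively replaced by fresh tensor factors, and the "left" and "right" outer products $\ketbra{r_{b}}{\cdot}$, $\ketbra{\cdot}{r_a}$ become a bra and a ket on the auxiliary copy. This is exactly what the pairing with the purification in~\eqref{thm1_1} produces. Here I would use the identity
$$\big(\mathbb{T}_n^\ddag\otimes\mathrm{id}\big)\big[\ketbra{\sqrt{\rho^{\otimes n}}}\big]=\sum_{\bs{i},\bs{j}}\sqrt{r_{\bs{i}}r_{\bs{j}}}\;\mathbb{T}_n^\ddag\big[\ketbra{\bs{i}}{\bs{j}}\big]\otimes\ketbra{\bs{i}}{\bs{j}},$$
so that its matrix elements equal
$$\sqrt{r_{\bs{i}}r_{\bs{j}}}\;\Tr\Big(\ketbra{\bs{i}}{\bs{j}}\,\mathbb{T}_n\big[\ketbra{\bs{j}'}{\bs{i}'}\big]\Big).$$
Matching the square-root weights $\sqrt{r_{a_k}r_{b_k}}$ and the indices then identifies, up to a rearrangement of tensor factors, the middle block with $\mathbb{T}_n^\ddag\otimes\mathrm{id}$ applied to $\ketbra{\sqrt{\rho_S^{\otimes n}}}$.

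\emph{Expected obstacle.} The main difficulty is this last identification: proving by induction on $n$ that the nested composition with interleaved products equals $\omega_E$ of the single composed map $\bigcomp_{j}\big(\Theta_{j,j-1}\circ(\sigma_S\otimes\mathrm{id}_E)\big)$. The subtle point is that the products of system operators in the nested form are not tensor products. I would handle this by an inductive step that peels off the outermost $\Theta_1$. At that step I use multiplicativity of the automorphism together with the locality of $\Theta_{j,j-1}$ at site $0$, so that the product at site $0$ with the shifted remainder factorizes as a tensor product of commuting sites. The $\Theta$-invariance of $\omega_S\otimes\omega_E$ is needed to make the $\omega_E$-reduction independent of the time at which it is taken. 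Once this is done, the remaining index bookkeeping is routine.
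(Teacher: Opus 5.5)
Your overall route is the paper's. You write the entries of $\rho_S[\mathcal{F}^{(n+1)}]$ in nested form and split off $\delta_{a_nb_n}r_{a_n}$ and $\delta_{a_0'b_0'}r_{a_0'}$, which are the two factors $\rho_S$. You then recognize the remaining block $\prod_{m=0}^{n-1}\sqrt{r_{a_m}r_{b_m}}\,\Tr\big(R_{a_0b_0}\,\omega_E(\cdots)\big)$, with $R_{ab}=\ketbra{r_a}{r_b}$, as $\sqrt{r_{\bs{a}}r_{\bs{b}}}\,\Tr\big(\ketbra{\bs{a}}{\bs{b}}\,\mathbb{T}_n[\ketbra{\bs{b}'}{\bs{a}'}]\big)$, where $\bs{a}=a_0\cdots a_{n-1}$ and $\bs{a}'=a_1'\cdots a_n'$. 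Reading this directly as a matrix element of $(\mathbb{T}_n^\ddag\otimes\mathrm{id})[\ketbra{\sqrt{\rho_S^{\otimes n}}}]$ is a legitimate shortcut for the paper's detour through the doubled-chain state $\Omega_S^{\mathbb{Z}\mathbb{Z}}$.

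The key lemma, however, does not work as you formulate it. An induction on the claim ``nested expression $=\omega_E$ of the composed map'' does not close. Once you peel off $\Theta_1$, the inner block $Y=\Theta_{2,1}[\cdots]\in\mathcal{A}_S\otimes\mathcal{A}_E$ is sandwiched and then acted on by $\Theta_1$ jointly with its environment part. So you need $Y$ itself, not $\omega_E(Y)$. Using $\Theta$-invariance to take the $\omega_E$-reduction ``at an intermediate time'' would replace $\omega_E(\Theta_1[X\,Y\,Z])$ by $\omega_E(\Theta_1[X\,\omega_E(Y)\,Z])$. That is exactly a QR-type factorization, and it fails in general; in fact invariance plays no role in this proposition. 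What is needed is an identity between $\mathcal{A}_E$-valued expressions, with $\omega_E$ applied only at the very end and an arbitrary environment operator $E$ in the innermost slot:
\begin{multline*}
\Tr_{[0,n-1]}\Big(\bigotimes_{l=0}^{n-1}R^{(l)}_{a_lb_l}\,\Theta_1\big[R^{\dagger(0)}_{a_1'b_1'}\otimes\cdots\Theta_{n,n-1}[R^{\dagger(n-1)}_{a_n'b_n'}\otimes E]\cdots\big]\Big)\\
=\Tr\Big(R_{a_0b_0}\,\Theta_1\big[R^\dagger_{b_1b_1'}\cdots\Theta_{n,n-1}[R^\dagger_{a_n'b_n'}\otimes E]\cdots R_{a_1a_1'}\big]\Big).
\end{multline*}
The paper proves this by peeling off the innermost propagator, in three moves:
\begin{itemize}
\item expand $\Theta_{k+1,k}[R^\dagger_{a'b'}\otimes E]=\sum_{ij}R_{ij}\otimes E_{ij}$;
\item the trace over site $k$ against $R_{a_kb_k}$ leaves $E_{b_ka_k}$, to which the $k$-step hypothesis applies;
\item the identity $R^\dagger_{a_k'b_k'}\otimes E_{b_ka_k}=(R^\dagger_{b_kb_k'}\otimes\mathds{1}_E)\big(\sum_{ij}R_{ij}\otimes E_{ij}\big)(R_{a_ka_k'}\otimes\mathds{1}_E)$ restores the nesting.
\end{itemize}
Peeling off the outermost $\Theta_1$, as you propose, also works. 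For that, the hypothesis must be stated at the operator level and for an arbitrary family of automorphisms localized on $M_d^{(0)}(\mathbb{C})\otimes\mathcal{A}_E$, since the inner block involves $\Theta_{2,1},\dots,\Theta_{n,n-1}$. The step then uses only two facts: $\Theta_1$ commutes with the partial trace over sites $1,\dots,n-1$, and the linear sandwich identity $(R_{b'b}\otimes\mathds{1}_E)\,Y\,(R_{aa'}\otimes\mathds{1}_E)=R_{b'a'}\otimes\langle r_b|Y|r_a\rangle$. Multiplicativity of the $\Theta$'s is needed only for the initial nested rewriting.
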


\begin{remark}\label{rmk:collisional modelcomparison}
	The interaction-shift automorphism appearing in~\eqref{mapdefinition_enunciato} is characteristic of the collisional approach of Section~\ref{sec:collisionalmodels_intro}, wherein the open system $S$ iteratively interacts with a new site of a spin-chain environment. In~\eqref{mapdefinition_enunciato} the emerging picture is, in a sense, dual, in that it is the same environment~$E$ that interacts with a fresh copy of the system at each time step. Such dynamics is depicted in Figure~\ref{fig:dualfigure}.
\end{remark}

An independent and self-contained proof, based on the formalism of quantum spin chains, is reported in Appendix~\ref{app:proof_partitionF_general}.
Let us  consider again the open system symbolic dynamics 
	in the Quantum Regression regime. This means that the entries of the coarse-grained density matrix
	$$\left(\rho_S\Big[\mc{X}^{(n)}\Big]\right)_{\bs{a},\bs{b}}=\omega_S\otimes\omega_E\left(X_{\bs{b}}^{(n)\dagger} X_{\bs{a}}^{(n)}\right),$$ 
	become, for all $n\ge 1$,
	\begin{multline}
		\left(\rho_S\Big[\mc{X}^{(n)}\Big]\right)_{\bs{a},\bs{b}}=\omega_S\Bigg(X_{b_0}^\dagger\,\Lambda_{1}\bigg[X_{b_1}^\dagger\,\Lambda_{2,1}\Big[\dots
		\\
	\dots\,\Lambda_{{n-1},{n-2}}\left[X_{b_{n-1}}^{\dagger} X_{a_{n-1}}^{\phantom{\dagger}}\right]\dots\Big]X_{a_1}\bigg]\,X_{a_0}\Bigg).
	\end{multline}

		By choosing the POVM as in~\eqref{specialOPU}, one is lead to the following alternative characterization of QR-Markovianity, thereby recovering the results of Lindblad~\cite{Lindblad1979}.

	\begin{proposition}	\label{prop:QRprop_NEW}
		The  dynamical system  $(\mc{A}_S\otimes\mc{A}_E,\omega_S\otimes \omega_E, \Theta)$
		satisfies QR-Markovianity  if and only if
	\begin{equation}\label{statement:factorizedTn}
		\mathbb{T}_n=\bigotimes_{j=1}^n \Lambda_{j,j-1}\,,
	\end{equation}
	with $\Lambda_{j,j-1}$ CPU maps on $\mathcal{A}_S=\Md{d}$.
	\end{proposition}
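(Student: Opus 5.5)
The plan is to use Proposition~\ref{thm:partitionF_general} as a dictionary between multi-time correlations and the map $\mathbb{T}_n$. I will first make explicit the identity behind \eqref{thm1_1}. For matrix units $A_k=\ketbra{r_{a_k}}{r_{b_k}}$ (or any $A_k\in\mc{A}_S$), the correlation function \eqref{pre_QR} built from the special POVM $\mc{F}$ is, up to the positive weights $\sqrt{r_a r_b}$, the matrix element of $\mathbb{T}_n[\bigotimes_k A_k]$ paired with $\ket{\sqrt{\rho_S^{\otimes n}}}$. Concretely, one writes
\[
\left(\rho_S\big[\mc{F}^{(n+1)}\big]\right)_{\bs{a},\bs{b}}\propto \bra{\sqrt{\rho_S^{\otimes n}}}\,\mathbb{T}_n\Big[\bigotimes_{k} \ketbra{r_{\cdot}}{r_{\cdot}}\Big]\otimes\mathds{1}\,\ket{\sqrt{\rho_S^{\otimes n}}}.
\]
Since $\rho_S$ is faithful, all weights $r_a>0$. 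The vector $\ket{\sqrt{\rho_S^{\otimes n}}}$ is therefore cyclic and separating, so the coarse-grained matrices $\rho_S[\mc{F}^{(n+1)}]$ for all $n$ determine $\mathbb{T}_n$ completely, and conversely.

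\emph{Sufficiency.} Assume $\mathbb{T}_n=\bigotimes_j\Lambda_{j,j-1}$. I will unwind the definition \eqref{mapdefinition_enunciato} on the multi-time correlation \eqref{pre_QR}. Every $X_b^\dagger$ and $X_a$ can be moved inside the nested propagators, as in \eqref{nestedcmatrix_elements}. After this, each of the $n$ fresh system copies placed by the shift $\sigma_S$ carries one factor $B_k^\dagger(\cdot)A_k$, and the $\Theta_{j,j-1}$ couple them only through $E$. Reading the factorized $\mathbb{T}_n$ back, the contraction of consecutive copies through the purification $\ket{\sqrt{\rho_S}}$ (transpose trick: $\bra{\sqrt\rho}(X\otimes\mathds 1)(Y\otimes \mathds 1)\ket{\sqrt\rho}$ relations) reproduces the nested form \eqref{qrformula}. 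Since the $\Lambda_{j,j-1}$ are CPU, this is exactly QR.

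\emph{Necessity.} Assume QR with CPU propagators $\Lambda_{j,j-1}$. By the sufficiency computation, $\mathbb{T}'_n:=\bigotimes_j\Lambda_{j,j-1}$ yields the same correlation functions as $\mathbb{T}_n$ for all choices of $A_k,B_k$. Specializing to the operators $F_{a,a'}$, whose products span all matrix units, shows that $\mathbb{T}_n$ and $\mathbb{T}'_n$ have identical matrix elements against $\ket{\sqrt{\rho_S^{\otimes n}}}$ with arbitrary local operators on both tensor legs. Because this vector is cyclic, the two maps coincide.

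The main obstacle is the bookkeeping step that turns the reversed-order, shifted composition $\bigcomp_j(\Theta_{j,j-1}\circ(\sigma_S\otimes\mathrm{id}_E))$, acting on operators localized at sites $-n+k$, into the nested QR expression. One has to check that the index reversal and the transposes coming from the standard purification match the ordering $\Lambda_1\circ\Lambda_{2,1}\circ\cdots$, and not its reverse. I would first verify the cases $n=1,2$ by hand and then argue by induction on $n$, peeling off the outermost $\Theta_{1}$ and the site $0$ at each step.
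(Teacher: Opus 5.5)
Your proposal is correct in outline and follows essentially the paper's own proof in Appendix~\ref{app:propQRnew}. The ``contraction of consecutive copies'' you describe is made explicit there as the identity \eqref{tn_togeneral}$=$\eqref{tn_togeneral2}: one feeds $B_k^\dagger R_{i_kj_k}A_k$ into slot $k-1$ of $\mathbb{T}_n$ and contracts site $k$ against $R^\dagger_{i_kj_k}$ through $\sum_{ij}B^\dagger R^{(k-1)}_{ij}A\otimes\Tr_k\big(R^{\dagger(k)}_{ij}Z^{(k)}\big)=(B^\dagger\otimes\mathds{1}_E)\,Z^{(k-1)}\,(A\otimes\mathds{1}_E)$. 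Sufficiency then follows by inserting the product form, and necessity by taking $A_k=R_{a_ka_k'}$, $B_k=R^\dagger_{b_kb_k'}$ to read off every matrix element of $\mathbb{T}_n$, with faithfulness of $\rho_S$ removing the outer $\omega_S(B_0^\dagger\,\cdot\,A_0)$.

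Two small corrections. First, your displayed dictionary needs an arbitrary operator (e.g.\ a matrix unit), not $\mathds{1}$, on the second tensor leg; with $\mathds{1}$ it only returns $\Tr\big(\rho_S^{\otimes n}\,\mathbb{T}_n[\,\cdot\,]\big)$, which does not determine $\mathbb{T}_n$. Second, the link between consecutive sites is this $\rho_S$-independent matrix-unit contraction rather than a purification/transpose trick, and slot $j-1$ carries $\Lambda_{j,j-1}$ exactly in the nesting order of \eqref{qrformula}, so the ordering worry you raise does not arise.
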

	The proof is reported in Appendix~\ref{app:propQRnew}.

	\begin{remark}\label{rmk:QRimpliesCPdiv}
		From Definition~\ref{mapdefinition_enunciato}, map $\mathbb{T}_n$ is a CPU map on $M_d^{\otimes n}(\mathbb{C})$.
		Complete positivity of $\Lambda_{j,j-1}$ in \eqref{statement:factorizedTn} then follows from that of $\mathbb{T}_n$. Moreover, this explicitly shows that 
		QR-Markovianity implies CP-divisibility of the reduced dynamics $\Lambda_n$, thus its P-divisibility from which follows that there is no back-flow of information as distinguishability revivals.
	\end{remark}

 Let us now focus upon the case in which the system-environment dynamics is given by a one-parameter group of automorphisms, $\Theta_n=\Theta^n$. Under such fairly general conditions, the QR condition  actually implies a semigroup dynamics. 

\begin{proposition}\label{prop:QRsg}
	Suppose that the QR condition~\eqref{qrformula} holds for a $\Theta$-invariant state $\omega_S\otimes\omega_E$, with $\omega_S$ represented by a faithful density matrix. Then, the dissipative Heisenberg propagators are all the same:
	\begin{equation}\label{allsemigroup}
		\Lambda_{n+1,n}=\Lambda_1\equiv\Lambda\,,\qquad \forall\; n\ge0\,.
	\end{equation}
\end{proposition}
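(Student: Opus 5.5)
The plan is to compare two QR expressions for the same multi-time correlation function. One of them uses the $\Theta$-invariance of $\omega_S\otimes\omega_E$ together with the group property $\Theta_n=\Theta^n$ to shift all times back by a fixed number of steps. Fix $n\ge1$ and observables $A_k,B_k\in\mathcal{A}_S$, and choose $A_0=B_0=\mathds{1}_S$ so that the first factor is trivial. Since $\Theta$ is an automorphism with $\Theta_k=\Theta^k$, we get
\[
\widetilde B^{(n+1)\dagger}\widetilde A^{(n+1)}=\Theta\big[\,\widetilde B'^{(n)\dagger}\widetilde A'^{(n)}\big],
\]
where the primed products are built from the shifted lists $A'_k=A_{k+1}$ and $B'_k=B_{k+1}$. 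Invariance $\omega_S\otimes\omega_E\circ\Theta=\omega_S\otimes\omega_E$ then makes the $(n+1)$-time correlation with a trivial first entry equal to the $n$-time correlation of the shifted observables. Applying QR~\eqref{qrformula} to both sides yields
\begin{multline*}
\omega_S\Big(\Lambda_1\big[B_1^\dagger\Lambda_{2,1}[\cdots\Lambda_{n,n-1}[B_n^\dagger A_n]\cdots]A_1\big]\Big)\\
=\omega_S\Big(B_1^\dagger\Lambda_1\big[B_2^\dagger\cdots\Lambda_{n-1,n-2}[B_n^\dagger A_n]\cdots\big]A_1\Big).
\end{multline*}

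Next, use invariance of $\omega_S$ under the reduced dynamics. Taking $n=1$ in the QR formula with trivial entries, together with $\Theta$-invariance, gives $\omega_S\circ\Lambda_1=\omega_S$. The left-hand side above therefore equals $\omega_S\big(B_1^\dagger\Lambda_{2,1}[\cdots]A_1\big)$. Since $A_1,B_1$ are arbitrary and $\rho_S$ is faithful, the map $X\mapsto \omega_S(B^\dagger X A)$, ranging over all $A,B$, separates points of $M_d(\mathbb{C})$: it gives $\Tr(A\rho_S B^\dagger X)$, and the operators $A\rho_S B^\dagger$ span $M_d$ because $\rho_S$ is invertible. It follows that
\[
\Lambda_{2,1}\big[B_2^\dagger\Lambda_{3,2}[\cdots]A_2\big]=\Lambda_1\big[B_2^\dagger\Lambda_{2,1}[\cdots]A_2\big]
\]
as operators, for all remaining observables.

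Finally, argue by induction on the depth. At depth two, that is with only $B_2^\dagger A_2$ inside, this gives $\Lambda_{2,1}[X]=\Lambda_1[X]$ for all $X$, since products $B^\dagger A$ span $M_d$. Suppose $\Lambda_{k+1,k}=\Lambda_{k,k-1}$ for all $k<m$, with $\Lambda_{1,0}:=\Lambda_1$. The general identity says that the nested string $\Lambda_{2,1}[\,\cdot\,\Lambda_{3,2}[\cdots\Lambda_{m+1,m}]]$ equals the same string with every index lowered by one. I would peel off the outer layers using the inductive hypothesis together with faithfulness (or the separating argument again, applied layer by layer) to conclude $\Lambda_{m+1,m}=\Lambda_{m,m-1}$.

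I expect the main obstacle to be this peeling step. Equality of the outer compositions does not immediately give equality of the inner maps, because $\Lambda_1$ need not be injective. To avoid this, I would instead apply the shift argument directly at general depth: take the first $m$ entries to be trivial, shift by $m$ steps, and use $\omega_S\circ\Lambda_m=\omega_S$, where $\Lambda_m$ denotes the composed reduced map $\Lambda_1\circ\cdots\circ\Lambda_{m,m-1}$, invariant by the same argument. This gives
\[
\omega_S\big(B_m^\dagger\Lambda_{m+1,m}[B_{m+1}^\dagger A_{m+1}]A_m\big)=\omega_S\big(B_m^\dagger\Lambda_1[B_{m+1}^\dagger A_{m+1}]A_m\big),
\]
and by faithfulness $\Lambda_{m+1,m}=\Lambda_1$ directly, with no induction needed.
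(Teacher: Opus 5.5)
Your argument is correct, and its final form takes a more direct route than the paper's. The paper works with the special POVM $\mathcal{F}$ of~\eqref{specialOPU}. It uses invariance of $\omega_S\otimes\omega_E$ under $\Theta$ and under the measurement map $\mathbb{F}[X]=\Tr(\rho_S X)\mathds{1}_d$ to show that the coarse-grained matrices are shift-consistent, $\Tr_I\rho_S[\mathcal{F}^{(n+1)}]=\rho_S[\mathcal{F}^{(n)}]$. It then writes both sides through QR in the eigenbasis of $\rho_S$ and compares matrix elements. Because tracing out one factor shifts time by a single step (producing $\Lambda_1^\ddagger[\rho_S]=\rho_S$), the paper must argue by induction: it uses $\Lambda_{k,k-1}=\Lambda$ for $k<n$ to match the remaining factors before reading off $\Lambda_{n,n-1}=\Lambda$.

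You instead set the first $m$ entries to $\mathds{1}$. This is effectively what the partial trace over an $\mathcal{F}$ factor achieves, since $\mathbb{F}$ preserves the state. You then shift by $m$ steps at once using $\Theta^m$-invariance. The outer layers are removed by invariance of $\omega_S$ under the whole composite $\Lambda_1\circ\cdots\circ\Lambda_{m,m-1}$, which you correctly derive from~\eqref{complaw} and $\Theta$-invariance. Faithfulness then yields $\Lambda_{m+1,m}=\Lambda_1$ for each $m$ independently, so neither the induction nor the non-injectivity issue you flagged ever arises.

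Each approach has an advantage. Yours shows that only correlations with nontrivial insertions at two consecutive times are needed, so a much weaker form of QR already forces homogeneity of the propagators. The paper's version stays inside the coarse-grained density-matrix framework of $\rho_S[\mathcal{F}^{(n)}]$ used for the entropy bounds (e.g.\ Corollary~\ref{cor:partition}).
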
	

	The proof is reported in Appendix~\ref{app:QRsg}.
	In this scenario, it is instructive to compare the behavior of the ALF entropy under reversible and non-reversible dynamics.
	
	\paragraph*{Finite, closed quantum system.}
	For a closed system $S$ not interacting with its environment, the dynamics is such that
	$\Theta=\Theta_S\otimes\Theta_E$, 
	with $\Theta_S$ and $\Theta_E$ automorphisms of $\mc{A}_S$ and $\mc{A}_E$.
	In this case QR holds trivially with $\Lambda_{j,j-1}=\Theta_S$ and~\eqref{system_cmatrix} becomes
	\begin{align}\label{noninteracting}
		&\left(\rho_S\Big[\mc{X}^{(n)}\Big]\right)_{\bs{a},\bs{b}}\\&\quad=\Tr(\rho_S X^{\dagger}_{b_0} \Theta_S\big[X^{\dagger}_{b_1}\dots\Theta_S\big[X_{b_n}^\dagger X_{a_n}^{\phantom{\dagger}}\big] \dots X_{a_1}\big] X_{a_0})\,.\nonumber
	\end{align}
	Then, the dynamics maps the system matrix algebra into itself and $\mathfrak{h}_S(\Theta)=0$ as for all closed finite-level systems~\cite{AlickiFannesBook}.
	
\paragraph*{Finite, open quantum system in the QR regime.} 
	From the above Proposition~\ref{thm:partitionF_general}, the following result allows to ascertain QR-Markovianity by looking at POVM-specific entropy rates.
\begin{corollary}
	\label{cor:partition}
	The ALF entropy associated with the POVM $\mc{F}$ in~\eqref{specialOPU} is bounded from above by
	\begin{equation}\label{Lindblad_criterion}
		\mathfrak{h}_S(\Theta,\mc{F})\ \le \ S\left(\Lambda^\ddag\otimes \mathrm{id}_d\left[\ketbra{\sqrt{\rho_S}}\right]\right) .
	\end{equation}
	Furthermore, the bound saturates if and only if $\mathbb{T}_n=\bigotimes_{k=1}^n \Lambda$ or, equivalently, if and only if QR condition holds.
\end{corollary}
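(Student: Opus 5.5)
We use the structure given by Proposition~\ref{thm:partitionF_general}. By \eqref{thm1_1}, the coarse-grained matrix $\rho_S[\mc{F}^{(n+1)}]$ equals, up to a unitary rearrangement of tensor factors, $\rho_S\otimes\rho_S\otimes\sigma_n$ with
\[
\sigma_n:=\mathbb{T}_n^\ddag\otimes\mathrm{id}\left[\ketbra{\sqrt{\rho_S^{\otimes n}}}\right].
\]
Hence $S(\rho_S[\mc{F}^{(n+1)}])=2S(\rho_S)+S(\sigma_n)$. The constant $2S(\rho_S)$ does not survive division by $n$, so
\[
\mathfrak{h}_S(\Theta,\mc{F})=\limsup_n \tfrac1n S(\sigma_n).
\]
The whole task is therefore to bound the entropy of $\sigma_n$, which lives on $n$ system copies ($\mathbb{T}_n^\ddag$ acts on copies $1,\dots,n$) and $n$ reference copies.

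\textbf{Upper bound.} The purification $\ket{\sqrt{\rho_S^{\otimes n}}}$ factorizes as $\bigotimes_{j=1}^n\ket{\sqrt{\rho_S}}$. Let $\sigma_n^{(j)}$ be the marginal of $\sigma_n$ on the pair (system copy $j$, reference copy $j$); by subadditivity,
\[
S(\sigma_n)\le\sum_{j=1}^n S(\sigma_n^{(j)}).
\]
The key step is to identify each two-site marginal. Take $A_k=\mathds{1}$ for $k\neq j$ in \eqref{mapdefinition_enunciato}. Then the unital $\Theta_{i,i-1}$ and the shifts act trivially on the identities, and $\mathbb{T}_n$ restricted to site $j$ collapses to a single-site CPU map. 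I expect that unwinding the nested composition, using the time-invariance $\omega_S\otimes\omega_E\circ\Theta=\omega_S\otimes\omega_E$ and Proposition~\ref{prop:QRsg}-type invariance arguments, this map is the one-step reduced map $\Lambda=\Lambda_1$, namely the $\omega_E$-conditional expectation of $\Theta$ on the $0$-th copy.

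This identification is the main obstacle. One must check that the position of site $j$ in the nesting does not produce $\Lambda_j$ instead of $\Lambda$. For a group dynamics $\Theta_{j,j-1}=\Theta$, the extra $\Theta$'s acting after the site has been shifted away from $0$ are trivial on it. The $\Theta_{j,j-1}$ are assumed to act nontrivially only on $M_d^{(0)}\otimes\mc{A}_E$, and the shift moves each site past $0$ exactly once. So each site sees exactly one interaction, with $\omega_E$ evaluated on an environment state that, by invariance of $\omega_E$ under the environment part, is always $\omega_E$. Granting this, $\sigma_n^{(j)}=\Lambda^\ddag\otimes\mathrm{id}[\ketbra{\sqrt{\rho_S}}]$ for all $j$, and dividing by $n$ gives \eqref{Lindblad_criterion}.

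\textbf{Saturation.} If QR holds, Proposition~\ref{prop:QRprop_NEW} together with Proposition~\ref{prop:QRsg} gives $\mathbb{T}_n=\Lambda^{\otimes n}$. Then $\sigma_n=(\Lambda^\ddag\otimes\mathrm{id}[\ketbra{\sqrt{\rho_S}}])^{\otimes n}$, whose entropy is exactly $n$ times the bound.

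Conversely, equality in the limit must force $\mathbb{T}_n=\Lambda^{\otimes n}$. Equality in subadditivity for fixed $n$ means $\sigma_n$ is the product of its marginals, i.e. $\sigma_n=\bigotimes_j\sigma_n^{(j)}$. The Choi–Jamiołkowski correspondence with the faithful $\ket{\sqrt{\rho_S^{\otimes n}}}$ is injective, so this product form translates into $\mathbb{T}_n^\ddag=(\Lambda^\ddag)^{\otimes n}$.

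The delicate point is passing from saturation of the $\limsup$ to exact equality at each $n$. I would use that the defect $\sum_j S(\sigma_n^{(j)})-S(\sigma_n)$ is a multipartite mutual information. Compatibility of the family $\mathbb{T}_n$ (tracing the last site gives $\mathbb{T}_{n-1}$), combined with strong subadditivity, should make the increments $S(\sigma_n)-S(\sigma_{n-1})$ non-increasing, mirroring \eqref{meanentropy2}. Then if any finite-$n$ correlation is present, shift-covariance makes it recur in every block, so the average defect per site stays strictly positive and the rate stays strictly below the bound. This contradiction yields exact factorization, and Proposition~\ref{prop:QRprop_NEW} closes the equivalence with QR.
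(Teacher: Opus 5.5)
You follow the same route as the paper: split off $2S(\rho_S)$ via Proposition~\ref{thm:partitionF_general}, apply subadditivity to $\sigma_n$ over the $n$ system--reference pairs, identify each pair marginal with $\Lambda^\ddag\otimes\mathrm{id}_d[\ketbra{\sqrt{\rho_S}}]$, and get saturation from equality in subadditivity plus Propositions~\ref{prop:QRprop_NEW} and~\ref{prop:QRsg}. The gap is the identification of the marginals. You only ``grant'' it, and the mechanism you give for it is wrong.

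Setting $A_k=\mathds{1}$ for $k\neq j$ does \emph{not} make $\mathbb{T}_n$ collapse to a single-site map. In the nesting \eqref{def_proofn_2}, only the inner sites $k>j$ drop out by unitality; this is the compatibility relation \eqref{tnid_comp}. The outer maps $\Theta_{j,j-1},\dots,\Theta_1$ act after $A_j$ has been shifted off site $0$. The first of them acts on $\mathds{1}^{(0)}\otimes E_{ab}$, where $\Theta[A_j\otimes\mathds{1}_E]=\sum_{ab}R_{ab}\otimes E_{ab}$. These maps are not trivial on the environment factor that carries the correlation with site $j$. For example, let $\Theta$ swap $S$ with a $d$-level environment factor in the state $\rho_S$, uncorrelated with the rest of $E$; this leaves $\omega_S\otimes\omega_E$ invariant. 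Then $\mathbb{T}_2[A_0\otimes A_1]=\Tr(\rho_S A_0)\,A_1\otimes\mathds{1}$, so $\mathbb{T}_2[\mathds{1}\otimes A]=A\otimes\mathds{1}$, which is not of the form $\mathds{1}\otimes\Psi[A]$. Also, the invariance that matters is that of the joint state, not of $\omega_E$ alone.

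What removes the sites $k<j$ is the following. Their reference partners carry the identity, so those system copies are evaluated in $\rho_S$. Then $\omega_S\otimes\omega_E\circ\Theta=\omega_S\otimes\omega_E$, together with $\Theta_{k,k-1}=\Theta$, gives $\Tr_{\{0\}}\big(\rho_S\,\mathbb{T}_m[\mathds{1}\otimes Y]\big)=\mathbb{T}_{m-1}[Y]$. This is the stationarity relation \eqref{tnid_staz}; in the swap example it reads $\Tr_{\{0\}}(\rho_S\,A\otimes\mathds{1})=\Tr(\rho_S A)\mathds{1}=\Lambda[A]$. Iterating \eqref{tnid_comp} and \eqref{tnid_staz} gives $\Tr_{\text{all but } j}\,\sigma_n=\sigma_1=\Lambda^\ddag\otimes\mathrm{id}_d[\ketbra{\sqrt{\rho_S}}]$, which is how the paper obtains \eqref{ineq_compare_n}.

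Your Schr\"odinger-picture intuition is the right one, stated correctly as: later collisions never touch pair $j$, and earlier collisions with fresh copies in $\rho_S$ return the environment marginal to $\omega_E$, uncorrelated with copy $j$. But it has to be argued this way, not as a Heisenberg-picture collapse. Your saturation argument needs the same lemma, since the ``shift-covariance'' of $\sigma_n$ you invoke is exactly \eqref{tnid_staz}. Leaning on Proposition~\ref{prop:QRsg} for the bound itself would be circular, because that proposition assumes QR.

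Once \eqref{tnid_staz} is established, your converse is more complete than the paper's. The paper only asserts that equality in \eqref{bound_F_n} holds iff equality holds in \eqref{ineq_compare_n} for every $n$. Your block argument supplies the missing step. By stationarity and subadditivity, $S(\sigma_{km})\le k\,S(\sigma_m)$. So a defect $\delta_m:=mS(\sigma_1)-S(\sigma_m)>0$ at some $m$ keeps the rate at most $S(\sigma_1)-\delta_m/m$. Equivalently, strong subadditivity makes the increments $S(\sigma_n)-S(\sigma_{n-1})$ non-increasing. Hence saturation forces $S(\sigma_n)=nS(\sigma_1)$ for all $n$, so $\sigma_n$ is a product state. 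Injectivity of the Choi map at the faithful vector $\ket{\sqrt{\rho_S^{\otimes n}}}$ then gives $\mathbb{T}_n=\Lambda^{\otimes n}$.
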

See  Appendix~\ref{proof:corollary_part} for the proof.
From its definition in~\eqref{ALFopen}, the POVM-dependent ALF entropy~\eqref{Lindblad_criterion} provides a lower bound to the open-system ALF entropy. 
\begin{corollary}\label{lowerboundALF}
	In the QR regime one has
	\begin{equation}
		\mathfrak{h}_S(\Theta) \ge \mathfrak{h}_S(\Theta,\mathcal{F}) = S\left(\Lambda^\ddag \otimes \mathrm{id}_d\left[\ketbra{\sqrt{\rho_S}}\right]\right)\,,
	\end{equation}
which are strictly positive quantities unless $\Lambda$ is an isometry, $\Lambda[X]=V^\dagger X V$, $V^\dagger V=\mathds{1}_d$. Thus,
for finite quantum systems, the dynamical entropy vanishes in all reversible dynamical settings,
while it is strictly positive for an irreversible QR-Markovian one.
\end{corollary}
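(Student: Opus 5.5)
The first inequality $\mathfrak{h}_S(\Theta)\ge\mathfrak{h}_S(\Theta,\mathcal{F})$ is immediate from the definition~\eqref{ALFopen}, since $\mathcal{F}\subseteq\mathcal{A}_S\otimes\mathds{1}_E$ is an admissible POVM. For the equality I will invoke Corollary~\ref{cor:partition}. In the QR regime, Proposition~\ref{prop:QRprop_NEW} together with Proposition~\ref{prop:QRsg} gives $\mathbb{T}_n=\Lambda^{\otimes n}$, so the bound~\eqref{Lindblad_criterion} is saturated. To make the computation explicit, I would insert this into~\eqref{thm1_1}. Using $\ket{\sqrt{\rho_S^{\otimes n}}}=\ket{\sqrt{\rho_S}}^{\otimes n}$ up to reordering of tensor factors, one finds
\begin{equation*}
\rho_S\left[\mathcal{F}^{(n+1)}\right]\simeq\rho_S\otimes\rho_S\otimes\left(\Lambda^\ddag\otimes\mathrm{id}_d\left[\ketbra{\sqrt{\rho_S}}\right]\right)^{\otimes n}.
\end{equation*}
By additivity of the von Neumann entropy this gives $S=2S(\rho_S)+n\,S(\Lambda^\ddag\otimes\mathrm{id}_d[\ketbra{\sqrt{\rho_S}}])$. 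Dividing by $n$ and letting $n\to\infty$ yields the stated value; the limit exists, so the $\limsup$ is harmless.

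The substantive part is strict positivity. Let $\Xi:=\Lambda^\ddag\otimes\mathrm{id}_d[\ketbra{\sqrt{\rho_S}}]$. Then $S(\Xi)=0$ if and only if $\Xi$ is a pure state. I would use the Kraus form $\Lambda^\ddag[\rho]=\sum_k L_k\rho L_k^\dagger$, which gives $\Xi=\sum_k\ketbra{\psi_k}$ with $\ket{\psi_k}=(L_k\otimes\mathds{1})\ket{\sqrt{\rho_S}}$. Purity forces every $\ket{\psi_k}$ to be proportional to a single vector $\ket{\psi}$.

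Since $\rho_S$ is faithful, the map $M\mapsto(M\otimes\mathds{1})\ket{\sqrt{\rho_S}}$ is injective; it is essentially $M\mapsto M\sqrt{\rho_S}$ read as a vector. Hence all $L_k$ are proportional to a single operator $L$, and $\Lambda^\ddag[\rho]=c\,L\rho L^\dagger$. Trace preservation then gives $cL^\dagger L=\mathds{1}$. Setting $V:=\sqrt{c}\,L^\dagger$, the Heisenberg map is $\Lambda[X]=V^\dagger XV$ with $V^\dagger V=\mathds{1}_d$, that is, an isometry; for square $V$ this means $V$ is unitary. Conversely, if $\Lambda$ has this form then $\Xi$ is pure and the rate vanishes. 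This is the step where care is needed: the injectivity relies precisely on the faithfulness of $\rho_S$.

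Finally, the closing dichotomy follows by combining this with the closed-system discussion around~\eqref{noninteracting}, where $\mathfrak{h}_S(\Theta)=0$. A reversible dynamics therefore has vanishing entropy. An irreversible QR-Markovian one has $\Lambda$ non-unitary, hence $S(\Xi)>0$, and the lower bound $\mathfrak{h}_S(\Theta)\ge\mathfrak{h}_S(\Theta,\mathcal{F})$ makes $\mathfrak{h}_S(\Theta)$ strictly positive as well.
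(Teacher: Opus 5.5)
Your proposal is correct and follows the paper's own route. The inequality holds by definition, and the equality is the saturation case of Corollary~\ref{cor:partition} once $\mathbb{T}_n=\Lambda^{\otimes n}$. The vanishing case uses faithfulness of $\rho_S$ to make $\Lambda\mapsto\xi_\Lambda:=\Lambda^\ddag\otimes\mathrm{id}_d[\ketbra{\sqrt{\rho_S}}]$ injective, so that a rank-one $\xi_\Lambda$ forces a single Kraus operator. You phrase this through Kraus vectors, while the paper writes out the explicit reconstruction of $\Lambda$ from $\xi_\Lambda$; it is the same Choi-type argument.

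One small slip: from $\Lambda^\ddag[\rho]=cL\rho L^\dagger$ the Heisenberg map is $\Lambda[X]=cL^\dagger XL$. You should therefore set $V=\sqrt{c}\,L$ rather than $\sqrt{c}\,L^\dagger$, which gives $V^\dagger V=cL^\dagger L=\mathds{1}_d$ directly.
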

\begin{proof}
	The inequality is evident by definition of $\mathfrak{h}_S(\Theta)$. 
	On the other hand, the lower bound vanishes if and only if $\xi_\Lambda:=\Lambda^\ddag \otimes \mathrm{id}_d\left[\ketbra{\sqrt{\rho_S}}\right]$ is a one-dimensional projector,
	 which is obvious if $\Lambda^\ddag$  is an isometry.
	To prove the converse, let $\rho_S=\sum_{a} r_a \ketbra{r_a}, 0<r_a<1$. Notice that the action of the dual map $\Lambda$ (Heisenberg picture) can be written as
	\begin{equation}
		\Lambda[\ketbra{\phi}{\psi}]=\sum_{ab} \frac{1}{\sqrt{r_a r_b}}  \ave{\psi\otimes r_b}{\xi_\Lambda}{\phi\otimes r_a}\ketbra{r_a}{ r_{b}}\,.
	\end{equation}
	If $\xi_\Lambda=\ketbra{\psi_\Lambda}$, we have
	$\Lambda[\ketbra{\psi}{\phi}]= V^\dag\ketbra{\psi}{\phi} V$ with
	$V^\dag:\ket{\phi}\mapsto \braket{\psi_\Lambda}{\phi\otimes a}\ket{a}$. Finally, unity preservation yields $V^\dag V=\mathds{1}_d$.
\end{proof}

\begin{remark}
\label{rem:QR-BFI}
For these reasons, we propose as a measure of the back-flow of information from environment to system the difference between the value $S\left(\Lambda^\ddag \otimes \mathrm{id}_d\left[\ketbra{\sqrt{\rho_S}}\right]\right)$ of the entropy rate that identifies QR-Markovianity and the POVM-dependent  entropy rate $\mathfrak{h}_S(\Theta,\mathcal{F})$.
Indeed, as we shall stress in Section~\ref{sec:dynGNS_model}, the decrease of the remaining ignorance about the system dynamics signals information coming back from the environment. Moreover, such depletion of gathered information per unit time-step is a stronger witness of back-flow of information than lack of contractivity; indeed, if distinguishability  od evolving states increase, than there cannot be P-divisibility of propagators, therefore no QR is possible  which, in turn, implies a dynamical entropy rate strictly smaller than the QR-rate. 
\end{remark}

In the following section, we exploit the collisional approach to open systems to get more insight regarding the behaviour of $\mathfrak{h}_S(\Theta)$ beyond the Markovian regime, i.e. beyond Quantum Regression. 
We now begin with a general upper bound to the ALF entropy $\mathfrak{h}_S(\Theta)$ of the collisional model discussed in Section~\ref{sec:preliminaries}.

\begin{proposition}\label{prop:upperboundchain}
	Let $\omega_{S}\otimes\omega_E$ be a $\Theta$-invariant state. Then,
	\begin{equation}\label{upperboundchain}
		\mathfrak{h}_S(\Theta)\ \le 
		\ \mathfrak{S}_{\omega_E}\,+\,\log(D)\,,
	\end{equation}
	where $\mathfrak{S}_{\omega_E}$
	is the mean von Neumann entropy of the chain~\eqref{meanentropy}.
\end{proposition}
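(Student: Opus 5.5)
The plan is to bound $S(\rho_S[\mc{X}^{(n)}])$ for an arbitrary POVM $\mc{X}\subseteq\mc{A}_S\otimes\mathds{1}_E$ by the entropy of a suitable state on $\mc{A}_S\otimes\mc{A}_E^{[\,\cdot\,]}$ supported on roughly $n$ sites, and then to use subadditivity. First I will use the collisional structure $\Theta=(\mathrm{id}_S\otimes\sigma_E)\circ\Phi$ with $\Theta_n=\Theta^n$. Each $\Theta_k[X_{a_k}\otimes\mathds{1}_E]$ is then localized in $\mc{A}_S\otimes\mc{A}_E^{[1,k]}$: the site labels depend on the shift convention, but they range over an interval of length $k$. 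Hence $X^{(n)}_{\bs{a}}$, for every $\bs{a}$, lies in the finite matrix algebra $\mc{A}_S\otimes\mc{A}_E^{[1,n-1]}$, on which $\omega_S\otimes\omega_E$ is given by the density matrix $\rho_S\otimes\rho_E^{[1,n-1]}$. One also checks that $\{X^{(n)}_{\bs{a}}\}_{\bs{a}}$ is a POVM there. This holds because the $\Theta_k$ are automorphisms, which is exactly the reversible setting of Section~\ref{sec:ALF}.

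Next I apply Proposition~\ref{prop:finitelevel} with $\rho=\rho_S\otimes\rho_E^{[1,n-1]}$ and the POVM $\mc{X}^{(n)}$:
\begin{equation*}
S\left(\rho_S\left[\mc{X}^{(n)}\right]\right)\le S(\rho_S)+S\big(\rho_E^{[1,n-1]}\big)+S\Big(\sum_{\bs{a}}X^{(n)}_{\bs{a}}\,\rho\,X^{(n)\dagger}_{\bs{a}}\Big).
\end{equation*}
The last term is the entropy of a state on $M_d(\mathbb{C})\otimes M_D^{\otimes(n-1)}(\mathbb{C})$. It is therefore at most $\log d+(n-1)\log D$. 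Dividing by $n$ and taking the $\limsup$, the terms $S(\rho_S)/n$ and $\log d/n$ vanish. The term $S(\rho_E^{[1,n-1]})/n$ tends to $\mathfrak{S}_{\omega_E}$ by \eqref{meanentropy}, using the shift invariance of $\omega_E$, which follows from the $\Theta$-invariance of the product state. The remaining contribution is $\log D$. This bound is uniform in $\mc{X}$, so taking the supremum in \eqref{ALFopen} gives \eqref{upperboundchain}.

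The main obstacle I expect is the localization bookkeeping. I must verify carefully, from \eqref{coupling} and \eqref{reduceddynamicsn}, that $\Theta^k[X\otimes\mathds{1}_E]$ touches only $k$ chain sites (the interaction always acts at site $0$, and the shift moves previously touched sites away), so that all of $X^{(n)}_{\bs{a}}$ fits in an interval of length $n-1$ or $n$. I must also check that the coarse-grained matrix computed in the finite algebra coincides with \eqref{system_cmatrix}. If instead the support grows like $n+c$ for a fixed constant $c$, this only adds $O(1)$ terms, which disappear in the rate.
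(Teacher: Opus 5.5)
Your proof is correct and follows the paper's argument essentially step for step: you localize every $X^{(n)}_{\bs{a}}$ in the finite algebra $\mathcal{A}_S\otimes\mathcal{A}_E^{[1,n-1]}$ (the paper uses $[1,n]$, which works equally well), apply Proposition~\ref{prop:finitelevel} to $\rho_S\otimes\rho_E^{[1,n-1]}$, bound the post-measurement entropy by the logarithm of the dimension, divide by $n$ and take the $\limsup$. One small precision: $\Theta$-invariance of $\omega_S\otimes\omega_E$ does not by itself force shift invariance of all of $\omega_E$, but because $\Phi$ is unital and acts only on $S$ and site $0$, it does give $\omega_E(\sigma_E[X_E])=\omega_E(X_E)$ for every $X_E$ localized in $[1,\infty)$, and that is all you need for $\frac{1}{n}S(\rho_E^{[1,n-1]})\to\mathfrak{S}_{\omega_E}$.
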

\begin{proof}
	Consider a POVM involving operators of the system $S$ only, as in~\eqref{system_cmatrix}:
	\begin{equation*}
		\mathcal{X}=\left\{{X}_a\otimes \mathds{1}_E\right\}_a, \qquad 
		X_a \in \mathcal{A}_S,\qquad \sum_{a=1}^d X_a^\dagger X_a=\mathds{1}_d\,.
	\end{equation*}
	After $n$ steps of the dynamics the algebra of the system is mapped into the local algebras $\mathcal{A}_S\otimes\mathcal{A}_E^{[1,n]}$, due to the action of the shift (see~\eqref{reduceddynamicsn}): $\Theta^n(\mathcal{A}_S\otimes\mathds{1}_E)\ \subseteq\ \mathcal{A}_S\otimes\mathcal{A}_E^{[1,n]}$.
	With respect to a $\Theta$-invariant state, the correlation matrix reads:
	\begin{align*}
		\left(\rho\left[\mc{X}^{(n)}\right]\right)_{\bs{a},\bs{b}}&=\omega_{SE}\left({X}_{\bs{b}}^{(n)\dagger}\,{X}_{\bs{a}}^{(n)}\right)\\&=\Tr(\Omega_{S[1,n]}\,
		{X}_{\bs{b}}^{(n)\dagger}\,{X}_{\bs{a}}^{(n)})\,,
	\end{align*}
	with $\Omega_{S[1,n]}=\rho_S\otimes\rho_E^{[1,n]}\in M_d(\mathbb{C})\otimes M_D^{[1,n]}(\mathbb{C})$. Hence, 
	from Proposition~\ref{prop:finitelevel}, together with the fact that $S(\Omega_{S[1,n]})=S(\rho_S)+S(\rho_E^{[1,n]})$ and
	$$
	S\left(\sum_{\bs{a}}{X}_{\bs{a}}^{(n)}\,\Omega_{S[1,n]}\,{X}_{\bs{a}}^{(n)\dagger}\right) \leq
	\log(dD^n)\ ,
	$$
	the following upper bound ensues:
	\begin{align*}
		S(\rho[\mc{X}^{(n)}]) \le S(\rho_S)+S\big(\rho_E^{[1,n]}\big)+\log(d)+n\log(D)\ .
	\end{align*}
	Dividing both sides by $n$ and taking the $\limsup$, one gets inequality~\eqref{upperboundchain}.
\end{proof}

\begin{remark}\label{rmk:mutual_corr}
	The so-called quantum mutual information, 
	\begin{align*}
		I_{n}(\omega_E):=S(\rho_E^{[1,n]})+S(\rho_E^{\{1\}})-S(\rho_E^{[1,n+1]})
	\end{align*}
	quantifies the correlations between the first $n$ sites of the spin chain and the $n+1$-th. From~\eqref{meanentropy}, the large $n$ limit yields
	\begin{equation}\label{asympt_corr}
		\mathfrak{I}_{\omega_E}:=\lim_n I_{n}(\omega_E)=S(\rho_E^{\{1\}})-\mathfrak{S}_{\omega_E} \ge 0 \,,
	\end{equation}
	the inequality saturating for a Bernoulli source. Thus, the stronger the correlations in the environment,  the lower results its mean entropy  and thus the  tighter becomes the bound in~\eqref{upperboundchain} on the maximal entropy production of the open system $S$. Notice that the r.h.s. of bound~\eqref{upperboundchain} amounts to the ALF entropy of the shift on a quantum spin chain as proved in~\cite{AF1994defining}. However that quantum bound is obtained using all possible POVMs of the chain, whereas here we restricted to those of the open system appended to the chain. It is an open question whether for special system POVMs bound~\eqref{upperboundchain} could be saturated.
\end{remark}

\section{Dynamical entropy and collisional models}\label{sec:collisionalmodels}

In this section, we compute the ALF entropy $\mathfrak{h}_S(\Theta)$ for the collisional model studied in~\cite{FBGN_PhysicaScr} which in its simplicity exhibits  different types of memory effects directly related to the “microscopic” parameters of the environment $E$. The latter is taken to be a classical stationary source; namely, each site of the chain carries  a same commutative, that is diagonal, matrix algebra $\mathcal{A}=D_d(\mathbb{C})$, spanned by $1$-dimensional orthogonal projections $\{\Pi_i\}_{i=0}^{D-1}$, $\sum_{k=0}^{D-1}\Pi_k=\mathds{1}_D$.
Furthermore, the chain is endowed  with a shift-invariant state $\omega_E$ whose restrictions to any local algebra $\mathcal{A}_E^{[1,2]}$ are diagonal density matrices: 
\begin{equation}
	\label{ergodic_chainstate}
	\rho_E^{[a,b]}=\sum_{\bs{i}_{[a,b]}}p_{\bs{i}_{[a,b]}}\,\Pi_{\bs{i}_{[a,b]}}^{[a,b]}\, ,\quad \Pi^{[a,b]}_{\bs{i}_{[a,b]}}=\bigotimes_{k=a}^{b} \Pi_{i_k}^{(k)}\in\mathcal{A}_E^{[a,b]},
\end{equation} 
with the family of  probabilities $\pi_{[1,n]}=\{p_{\bs{i}_{[1,n]}}\}_{\bs{i}_{[1,n]}}$ satisfying the consistency and stationarity conditions 
\begin{equation*}
	\sum_{i_b} p_{\bs{i}_{[a,b]}}=p_{\bs{i}_{[a,b-1]}}\quad\hbox{and}\quad \sum_{i_a} p_{\bs{i}_{[a,b]}}=p_{\bs{i}_{[a-1,b]}}\, .
\end{equation*}
A finite $d$-level open quantum system is coupled to such a classical environment by means of an interaction with the chain  $0$-th site via a 
controlled-unitary type,
\begin{align}
	\label{collision}
	\Phi[X_S\otimes A_{i_0}^{(0)}]&=\sum_{k=0}^{D-1}\phi_{k}[X_S]\otimes \Pi_k^{(0)} A_{i_0}^{(0)} \Pi_k^{(0)}\\
	&\hskip-1.2cm=\sum_{k=0}^{D-1}\phi_{k}[X_S]\otimes A_{i_0}^{(0)} \Pi_k^{(0)}, \quad \phi_k[X_S]:=U_k^\dagger\, X_S\, U_k	\nonumber
\end{align}
where  $U_k U_k^\dagger=U_k^\dagger U_k=\mathds{1}_d$ so that the dual maps on the states of $S$, $\phi^{\ddagger}_{k}[\rho_S]=U_k\, \rho_S \,U^\dagger_k=\phi^{-1}_{k}[\rho_S]$, coincide with the inverse  maps  as well as the the dual map of $\Phi^{\ddag}$ extended to $\mathcal{A}_S\otimes\mathcal{A}_E$:
\begin{align*}
	\Phi^{\ddag}[X_S\otimes A_{i_0}^{(0)}]&=\sum_{k=0}^{D-1}\phi^{\ddagger}_{k}[X_S] \otimes \Pi_k^{(0)} A_{i_0}^{(0)} \Pi_k^{(0)}\\
	&= \Phi^{-1}[X_S\otimes A_{i_0}^{(0)}]\,.
\end{align*}
Finally, the compound $SE$-automorphic dynamics is given by powers of $\Theta=({\rm id}_S\otimes\sigma_E)\circ\Phi$ obtained by acting with the chain right-shift after the interaction of the system $S$ with the $0$-th site of the classical chain (see~\eqref{reduceddynamicsn}).

We then take as state of the system $S$ the maximally mixed state $\omega_S(X_S)={\rm Tr}(\rho_S X_S)$, where $\rho_S = \mathds{1}_d/d$, thus enforcing  the time-invariance of $\omega_S\otimes\omega_E$. Indeed, leveraging the shift-invariance of $\omega_E$ and the commutativity of the algebra $\mathcal{A}_E$, one gets, for all $X_S\in\mathcal{A}_S$ and $X_E\in\mathcal{A}_E$,
\begin{align*}
	&\omega_S\otimes\omega_E(\Theta(X_S\otimes X_E))\\
	&=\frac{1}{d}\sum_{k=0}^{D-1} \Tr(\phi_k[X_S]) \,\omega_E\left(\sigma_E\left(\Pi_k^{(0)}\,X_E\, \Pi_k^{(0)}\right)\right)\\
	&=\frac{\Tr(X_S) }{d} \omega_E\!\left(\sum_k\Pi_k^{(0)}X_E \Pi_k^{(0)}\right)=
	\omega_S\otimes\omega_E(X_S\otimes X_E).
\end{align*}
\subsection{Structure of the coarse grained density matrix}
The $n+1$-th step coarse-grained density matrix based upon a system $S$ POVM  ${\mc{X}}=\left\{X_a\otimes\mathds{1}_E\right\}_{a=1}^{\abs{\mathcal{X}}}$, $X_a\in\mathcal{A}_S=M_d(\mathbb{C})$, and the $\Theta$-invariant state $\omega_{SE}=\omega\otimes\omega_E$ takes the form 
\begin{align}
	\rho_S\left[{\mc{X}}^{(n+1)}\right]\ =\ \sum_{\bs{i}_{[1,n]}} \, p_{\bs{i}_{[1,n]}}\ \rho_S\left[\mc{X}_{\bs{i}_{[1,n]}}\right]\,,
	\label{cmatrix_mixture}
\end{align}
where $\bs{i}_{[1,n]}=i_1\ldots i_{n}$; namely, of a convex combination of sub-coarse-grained matrices $\rho_S\left[\mc{X}_{\bs{i}_{[1,n]}}\right]$ with entries 
\begin{align}	
	&\left(\rho_S\left[\mc{X}_{\bs{i}_{[1,n]}}\right]\right)_{\bs{a},\bs{b}} \nonumber\\
	&=\omega_S\bigg(X_{b_0}^\dagger \phi_{\bs{i}_{\{1\}}}[X_{b_1}^\dagger]\phi_{\bs{i}_{[1,2]}}[X_{b_2}^\dagger]\ldots	\nonumber\\
	&\hskip+1cm\ldots\phi_{\bs{i}_{[1,n]}}[X_{b_{n}}^\dagger X_{a_{n}}^{\phantom{\dagger}}]\ldots \phi_{\bs{i}_{[1,2]}}[X_{a_1}]\phi_{\bs{i}_{\{1\}}} [X_{a_1}] X_{a_0}\bigg)
	\nonumber	
	\\
	&= \omega_S\Bigg(X_{b_0}^\dagger \phi_{i_1}\bigg[X_{b_1}^\dagger \phi_{i_2}\Big[X_{b_2^\dagger}\ldots \nonumber\\
	&\hskip+2cm \ldots\phi_{i_n}\left[X_{b_{n}}^\dagger X_{a_{n}}^{\phantom{\dagger}}\right]\ldots X_{a_2}\Big] X_{a_1}\bigg] X_{a_0}\Bigg)\label{cmatrix_mixtureelements2}
\end{align}
specified by the automorphisms $\phi_{\bs{i}_{[1,k]}}=\phi_{i_1}\circ\,\cdots\,\circ\,\phi_{i_k}$ and by the $n$-step POVM $\mc{X}_{\bs{i}_{[1,n]}}$ 
from $\mathcal{A}_S$ with components
$X_{\bs{i}_{[1,n]}}^{\bs{a}}:=\phi_{\bs{i}_{[1,n]}}[X_{a_{n}}]\cdots\phi_{\bs{i}_1} [X_{a_1}] X_{a_0}\in\mathcal{A}_S$.
\begin{proof}
	Consider $X_{a_1}\in\mathcal{A}_S$; then,
	\begin{eqnarray*}
		\Theta[X_{a_1}\otimes\mathds{1}_E]&=&\sum_{k_1}\phi_{k_1} [X_{a_1}]\otimes \Pi_{k_1}^{(1)}\ ,\\
		\Theta^2[X_{a_2}\otimes\mathds{1}_E]&=&\sum_{k_1,k_2}\phi_{k_1}\circ\phi_{k_2} [X_{a_2}]\otimes \Pi_{k_1}^{(1)}\otimes\Pi_{k_2}^{(2)}\,,
	\end{eqnarray*}
	so that the $2$-step POVM  consists of operators
	\begin{multline*}
		\Theta^2[X_{a_2}\otimes\mathds{1}_E]\Theta[X_{a_1}\otimes\mathds{1}_E]X_{a_0}\otimes\mathds{1}_E\\=\sum_{k_1,k_2}\phi_{k_1}\circ\phi_{k_2} [X_{a_2}]\phi_{k_1}[X_{a_1}]X_{a_0}\otimes \Pi_{k_1}^{(1)}\otimes\Pi_{k_2}^{(2)}\,.
	\end{multline*}
	Thus, one has
	\begin{align}\nonumber
	{X}_{\bs{a}}^{(n+1)}&=\Theta^{n}[X_{a_{n}}\otimes\mathds{1}_E]\ldots\Theta[X_{a_1}\otimes\mathds{1}_E]X_{a_0}\otimes\mathds{1}_E\\
		&\hskip-1cm=\sum_{\bs{k}_{[1,n]}}\phi_{\bs{k}_{[1,n]}}[X_{a_n}]\phi_{\bs{k}_{[1,n-1]}} [X_{a_{n-1}}]\ldots\phi_{\bs{k}_{\{1\}}}[X_{a_1}]X_{a_0}\otimes \nonumber\\
		&\hskip+5cm \otimes\bigotimes_{j=1}^n \Pi_{k_j}^{(j)}\,.\nonumber
	\end{align}
	The elements of the coarse-grained density matrix then become
	\begin{align*}
		\omega_{SE}\left({X}_{\bs{b}}^{(n
			+1)\dagger} {X}_{\bs{a}}^{(n+1)}\right)&=\sum_{\bs{k}_{[1,n]}} p_{\bs{k}_{[1,n]}}\, \omega_S\bigg(X_{b_0}^\dagger \phi_{k_1}[X_{b_1}^\dagger]\ldots\\ 
			&\hskip-1cm\ldots\phi_{\bs{k}_{[1,n]}}[X_{b_{n}}^\dagger X_{a_{n}}^{\phantom{\dagger}}]\ldots\phi_{{k}_{1}} [X_{a_1}] X_{a_0}\bigg)\,.
	\end{align*}
	Given a fixed multi-index $\bs{k}_{[1,n]}$, the corresponding operators
	${X_{\bs{k}_{[1,n]}}^{\bs{a}}:=\phi_{\bs{k}_{[1,n]}}[X_{a_{n}}]\cdots\phi_{\bs{k}_{[1,2]}}[X_{a_2}]\phi_{{k}_{1}} [X_{a_1}] X_{a_0}\in\mathcal{A}_S}$ form a POVM. Indeed, the maps $\phi_{\bs{k}_{[1,n-1]}}$ are automorphisms; namely, they are invertible and  such that $\phi_{\bs{k}_{[1,n]}}[XY]=\phi_{\bs{k}_{[1,n]}}[X]\phi_{\bs{k}_{[1,n]}}[Y]$ and $\phi_{\bs{k}_{[1,n]}}[X^\dag]=
	(\phi_{\bs{k}_{[1,n]}}[X])^\dag$. In the case $k=2$, for example,
	\begin{align*}
		&\sum_{\bs{a}} \left(X_{\bs{k}_{[1,2]}}^{\bs{a}}\right)^\dag  X_{\bs{k}_{[1,2]}}^{\bs{a}}\\&=\sum_{a_0,a_1}  X_{a_0}^\dagger\phi_{k_1} [X_{a_1}^\dagger] \phi_{\bs{k}_{[1,2]}} \left[\sum_{a_2} X_{a_2}^\dagger X_{a_2}\right] \phi_{{k}_{1}}[X_{a_1}]X_{a_0}\\
		&=\sum_{a_0} X_{a_0}^\dagger  \phi_{{k}_{1}}\left[\sum_{a_1}X_{a_1}^\dagger X_{a_1}\right] X_{a_0}=\sum_{a_0} X_{a_0}^\dagger X_{a_0}=\mathds{1}_S\,.\qedhere
	\end{align*}
\end{proof}

\begin{remark}
	\label{rmk:concrete_QR}
	Similarly, the reduced Heisenberg dynamics 
	$\Lambda_n[X_S]=\mathrm{id}_S\otimes\omega_E(\Theta^n[X_S\otimes\mathrm{1}_E])$ is implemented by the CPU map
	\begin{equation}\label{CPUdynamical_map}
		\Md{d}\,\ni X \ \longmapsto\ \Lambda_n[X]\,=\,\sum_{\bs{k}_{[1,n]}}\,p_{\bs{k}_{[1,n]}}\ \phi_{\bs{k}_{[1,n]}}[X]\,.
	\end{equation}
	If the classical chain represents a Bernoulli process, then
	$p_{\bs{k}_{[1,n]}}=\prod_{j=1}^n\, p_{k_j}$.
	The collisional environment is fully uncorrelated and the reduced dynamics is a semigroup, $\Lambda_n=\Lambda_1^n$, that is the reduced dynamics at time-step $n$ is the $n$-th power of the single time-step reduced dynamics $\Lambda_1=\sum_k p_k\, \phi_k$. Further, QR  holds. For example, after two iterations of the dynamics,
	\begin{align*}
		&\left( \rho_S\big[\mc{X}^{(2)}\big]\right)_{\bs{a},\bs{b}}\\&=\sum_{k_2,k_1} p_{k_1}p_{k_2} \ \omega_S\left(X_{b_0}^\dagger \phi_{k_1}\left[X_{b_1}^\dagger \phi_{k_2}[X_{b_2}^\dagger X_{a_2}] X_{a_1}\right]X_{a_0}\right)\\
		&=\omega_S\left(X_{b_0}^\dagger \sum_{k_1}p_{k_1}\phi_{k_1}\Bigg[X_{b_1}^\dagger \sum_{k_2}p_{k_2}\phi_{k_2} \left[X_{b_2}^\dagger X_{a_2}\right]X_{a_1}\Bigg]\right)\\&=\omega_S\left(X_{b_0}^\dagger \Lambda_1\left[ X_{b_1}^\dagger \Lambda_1\left[X_{b_2}^\dagger X_{a_2}\right]X_{a_1}\right]X_{a_0}\right)\,,
	\end{align*}
	so that the knowledge of $\Lambda_1$ suffices to reconstruct the full coarse-grained density matrix. 
\end{remark}

\subsection{Upper and lower bounds to the ALF entropy} 
Given the convex combination~\eqref{cmatrix_mixture}, its von Neumann entropy can then be bounded from above as:
\begin{align}\nonumber
	S\Big(\rho_{S}\left[{\mc{X}}^{(n+1)}\right]\Big)&\le \ H\left(\pi_{[1,n]}\right) +\sum_{\bs{i}_{[1,n]}} p_{\bs{i}_{[1,n]}} S\left(\rho_S\left[\mc{X}_{\bs{i}_{[1,n]}}\right]\right)\nonumber\\
	& \le\ H\left(\pi_{[1,n]}\right)\, +2\,\log(d)\,,
	\label{bound_mixture}
\end{align}
where we used the well-known inequality for the entropy of a convex mixture~\cite[Theorem 11.10]{MikeandIke}, along with Proposition~\ref{prop:finitelevel} and  the fact that, for all states of $d$-level systems, $S(\rho)\leq\log d$.
Dividing both sides of~\eqref{bound_mixture} by $n$ and taking the $\limsup$, one gets that the dynamical entropy of the open system is upper-bounded by the 
entropy rate of the chain,
\begin{equation}\label{deupperbound}
	\mathfrak{h}_S(\Theta)\
	\le\ \lim_n \,\frac{1}{n}H\left(\pi_{[1,n]}\right) =\ \mathfrak{S}_{\omega_E}\,,
\end{equation}
We now show that inequality~\eqref{deupperbound} can be saturated. 

\begin{proposition}\label{prop:model_Tn}
	Consider a $d$-level system collisionally interacting with a classical stationary spin chain through~\eqref{collision}.  With respect to the invariant state $\mathds{1}_d / d \otimes \omega_E$ and the special POVM $\mathcal{F}$ in~\eqref{specialOPU}, the coarse-grained density matrix takes the form
	\begin{align}\label{specialpartition_model_prop}
		\rho_S\left[\mathcal{F}^{(n+1)}\right]\\& \hskip-1.4cm=\frac{\mathds{1}_d}{d}\otimes\frac{\mathds{1}_d}{d}\otimes\left( \mathbb{T}_{n}^\ddagger\otimes\mathrm{id}_{d}^{\otimes n}\left[\ketbra{\sqrt{\frac{\mathds{1}_d}{d}^{\otimes n}}}\right]\right)\,, \nonumber
		\\
		&\hskip-1.8cm\hbox{with}\nonumber\\
		 & \hskip-1.34cm \mathbb{T}_n\ =
		\ \sum_{\bs{i}_{[1,n]}} \,p_{\bs{i}_{[1,n]}} \,\phi_{\bs{i}_{[1,n]}}^{\otimes [1,n]} \, , \quad \phi_{\bs{i}_{[1,n]}}^{\otimes [1,n]}:=\bigotimes_{k=1}^n \phi_{i_k}\, ,\label{map_randomunitary_prop}
	\end{align}
	Moreover, if the unitaries in~\eqref{collision}
	are such that $\Tr({U_{j}^\dagger}{U_{k}})=d$, 
	\begin{equation}\label{statement_whenequalityholds}
		\mathfrak{h}_{S}(\Theta)=\mathfrak{h}_S(\Theta,\mc{F})=\lim_{n}\frac{1}{n}H\big(\pi_{[1,n]}\big)=\mathfrak{S}_{\omega_E}\,,
	\end{equation}
	that is, the ALF entropy of the system is equal to the chain entropy rate.
\end{proposition}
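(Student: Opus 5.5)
The plan has two parts. First I compute $\mathbb{T}_n$ explicitly by specialising Proposition~\ref{thm:partitionF_general} to the collisional dynamics. Then I show that the open-system ALF entropy is pinched between the upper bound~\eqref{deupperbound} and the $\mathcal{F}$-dependent rate, and that the latter equals the Shannon entropy rate of the source.

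\emph{Structure of $\mathbb{T}_n$.} Here $\Theta_n=\Theta^n$, so every propagator $\Theta_{j,j-1}$ equals $\Theta=({\rm id}_S\otimes\sigma_E)\circ\Phi$, and $\rho_S=\mathds{1}_d/d$ is faithful and invariant. Proposition~\ref{thm:partitionF_general} therefore applies and gives the first display with $\rho_S=\mathds{1}_d/d$. It remains to evaluate~\eqref{mapdefinition_enunciato}. I will proceed by induction on the number of applications of $\Theta\circ(\sigma_S\otimes{\rm id}_E)$ to $\bigotimes_k A_k^{(-n+k)}\otimes\mathds{1}_E$.
\begin{itemize}
\item Each shift $\sigma_S$ moves the system-chain factors one step to the right, so a given factor $A_k$ sits at system-site $0$ exactly once.
\item At that moment, by~\eqref{collision}, $\Phi$ replaces $A_k$ by $\sum_{i}\phi_{i}[A_k]\otimes\Pi_{i}^{(0)}$.
\item The environment shifts that follow move the projector $\Pi_i$ to a fresh environment site, distinct for each $k$. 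Because $\Phi$ acts trivially on system sites $\neq0$ and on environment sites $\neq0$, no factor is ever hit by more than one $\phi$, in contrast with the nested $\phi_{\bs{i}_{[1,k]}}$ of~\eqref{cmatrix_mixtureelements2}.
\end{itemize}
After $n$ steps the operator is $\sum_{\bs{i}}\bigotimes_k\phi_{i_k}[A_k]\otimes\Pi^{[1,n]}_{\bs{i}}$, up to an overall relabelling of sites. Applying $\omega_E$ with~\eqref{ergodic_chainstate} then gives~\eqref{map_randomunitary_prop}. The main bookkeeping difficulty is to check that the order of the $k$'s matches the order of the environment sites, so that the weight is exactly $p_{\bs{i}_{[1,n]}}$ rather than a permutation of it. Since $\omega_E$ is shift invariant, only the relative ordering matters. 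I expect this index tracking to be the most delicate step, although it involves no real difficulty.

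\emph{Entropy.} By~\eqref{deupperbound}, $\mathfrak{h}_S(\Theta)\le\mathfrak{S}_{\omega_E}$. By definition~\eqref{ALFopen}, $\mathfrak{h}_S(\Theta)\ge\mathfrak{h}_S(\Theta,\mathcal{F})$. It therefore suffices to show that $\frac1n S(\rho_S[\mathcal{F}^{(n+1)}])\to\mathfrak{S}_{\omega_E}$.

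The factors $\mathds{1}_d/d\otimes\mathds{1}_d/d$ contribute $2\log d$, which vanishes after division by $n$. Let $\ket{\Omega_d}$ be the maximally entangled vector, so that $\ket{\sqrt{(\mathds{1}_d/d)^{\otimes n}}}=\ket{\Omega_d}^{\otimes n}$. The dual of $\phi_i$ is $X\mapsto U_iXU_i^\dagger$, hence
$$\mathbb{T}_n^\ddagger\otimes{\rm id}\Big[\ketbra{\Omega_d}^{\otimes n}\Big]=\sum_{\bs{i}}p_{\bs{i}}\ketbra{\psi_{\bs{i}}},\qquad \ket{\psi_{\bs{i}}}=\bigotimes_{k}(U_{i_k}\otimes\mathds{1})\ket{\Omega_d}.$$
The overlaps are $\braket{\psi_{\bs{i}}}{\psi_{\bs{j}}}=\prod_k\Tr(U_{i_k}^\dagger U_{j_k})/d$. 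I read the hypothesis on the unitaries as the orthogonality condition $\Tr(U_j^\dagger U_k)=d\,\delta_{jk}$; this requires $D\le d^2$, for example Weyl unitaries. Under this condition the $\ket{\psi_{\bs{i}}}$ are orthonormal. The state is then diagonal with eigenvalues $p_{\bs{i}_{[1,n]}}$, so its entropy equals $H(\pi_{[1,n]})$ exactly.

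Dividing by $n$, the limit exists for a stationary classical source and equals $\mathfrak{S}_{\omega_E}$ by~\eqref{meanentropy}, because $S(\rho_E^{[1,n]})=H(\pi_{[1,n]})$ for the diagonal states~\eqref{ergodic_chainstate}. Hence $\mathfrak{h}_S(\Theta,\mathcal{F})=\mathfrak{S}_{\omega_E}$. Combined with the upper bound, all the quantities in~\eqref{statement_whenequalityholds} coincide.
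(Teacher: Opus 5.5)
Your proposal is correct and follows the paper's proof in Appendix~\ref{app:model_Tn} essentially step by step. The steps match one for one: you iterate $\Theta\circ(\sigma_S\otimes\mathrm{id}_E)$ to obtain $\sum_{\bs{i}}\phi^{\otimes[1,n]}_{\bs{i}}[\,\cdot\,]\otimes\Pi^{[1,n]}_{\bs{i}}$, with system site $k$ paired to environment site $k+1$ so the weight is exactly $p_{\bs{i}_{[1,n]}}$; you apply $\omega_E$; you diagonalize $\mathbb{T}_n^\ddagger\otimes\mathrm{id}$ applied to $\ketbra{\psi_+^{[1,n]}}$ in the orthonormal family $\ket{\psi_{\bs{i}_{[1,n]}}}$, reading the hypothesis as $\Tr(U_j^\dagger U_k)=d\,\delta_{jk}$ exactly as the paper's proof does; and you sandwich $\mathfrak{h}_S(\Theta,\mathcal{F})$ against~\eqref{deupperbound}.
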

The proof is reported in Appendix~\ref{app:model_Tn}.

	\begin{remark}\label{rmk:clarify_saturation}
	Suppose that $\Tr({U_{j}^\dagger}{U_{k}})=d$. As shown in Remark~\ref{rmk:mutual_corr},
	$$\mathfrak{h}_S(\Theta)=\mathfrak{S}_{\omega_E}=S(\rho_E^{(1)})-\mathfrak{I}_{\omega_E}\,.$$
	If the environment is a Bernoulli source, one  has simply
	\begin{equation*}
		\mathfrak{h}_S(\Theta)=S(\rho_E^{(1)})=H(\pi_{\{1\}})\,.
	\end{equation*}
	Note also that
	$\ket{\sqrt{\rho_S}}=\ket{\sqrt{\mathds{1}_d/d}}=d^{-1/2}\sum_i\ket{i\otimes i}=:\ket{\psiplus{d}}$, namely the fully symmetric maximally entangled state. Then, 
	\begin{equation*}
		\Lambda_1^\ddag\otimes\mathrm{id}_d\left[\ketbra{\psiplus{d}}\right]=\sum_i p_i \, U_i\otimes \mathds{1}_d\ketbra{\psiplus{d}} U_i^\dagger\otimes \mathds{1}_d.
	\end{equation*}
	Since $U_i\otimes \mathds{1}_d\ket{\psiplus{d}}$, $i=1, \dots, d^2$ form a basis of $\mathbb{C}^d\otimes \mathbb{C}^d$, the  von Neumann entropy of the latter state amounts to $H(\pi_{\{1\}})=-\sum_i p_i \log p_i$  and thus, consistently, equals to  $\mathfrak{h}_S(\Theta)$ since QR holds (compare Corollary~\ref{cor:partition}).
\end{remark}

\subsection{Markov chain environment}\label{sec:Markov}
We now let the classical environment to be a stationary Markov chain; namely, we choose the probabilities $\pi_{[1,n]}=\big\{p_{\bs{i}_{[1,n]}}\big\}_{\bs{i}_{[1,n]}}$ to form a stationary Markov process,
\begin{equation}
	\label{Markovprocess}
	p_{\bs{i}_{[1,n]}} = \prod_{k=2}^{n} \,T_{i_{k}\,i_{k-1}}\ p_{i_1}\,,
\end{equation}
defined by the stochastic matrix
$T=[T_{ij}]$, $T_{ij}\ge0$, with
$ \sum_i T_{ij}=1$, $\sum_j T_{ij} \,p_j=p_i$. The mean entropy rate of the Markov source is equal to the two-site conditional entropy of the 
chain~\cite{Walters1982},
\begin{equation}
	\label{MarkovEnt}
	\mathfrak{S}_{\omega_E}=-\sum_{ij}p_j \, T_{ij} \log(T_{ij})=H(\pi_1)-I(\pi_1; \pi_2)\ ,
\end{equation}
where
\begin{equation}
	\label{condent}
	I(\pi_1;\pi_2)=H({\pi_1})+H(\pi_2)\,-\,H(\pi_{[1,2]})\,,
\end{equation}
denotes the mutual information between the first two subsequent sites of the classical chain. Because of stationarity, the  latter quantity does not depend on where the chosen pair is located along the chain and measures the correlations between any two subsequent sites. Then, one sees that the entropy rate decreases with increasing correlations between subsequent sites. 
In the following, we construct a process with $U_k$ as in~Remark~\ref{rmk:clarify_saturation} and for which $T_{ij}\in\{0,1\}$, so that 
$\mathfrak{h}_S(\Theta)=\mathfrak{S}_{\omega_E}=0$ as for reversible time-evolutions. 

\paragraph*{Qubit  Pauli dynamics.}
\label{ex:Markov1}
We now let the open quantum system $S$ to be a qubit ($d=2$), we couple it to a Markov chain with $4\times 4$ diagonal matrices at each site and choose the
local probability distributions to be generated by the following $4\times 4$ stochastic matrix
\begin{gather}\label{tmatrix}
	T=\begin{pmatrix}
		p_0 & p_0 & p_0 & p_0 \\
		p   & p+\Delta& p-\Delta & p\\
		p   & p-\Delta & p+\Delta & p\\ 
		r        & r& r     & r \\ 
	\end{pmatrix}, \\  0\le\Delta\le p\leq \frac{1}{2}\,,\quad  p_0+2p+r=1\,,\nonumber
\end{gather}
with invariant probability vector $\bs{p}=(p_0,p,p,r)$,
guaranteeing stationarity $T\bs{p}=\bs{p}$. Let the coupling be as in~\eqref{collision} with
\begin{equation}
	\phi_k[X]=\sigma_k\, X\, \sigma_k\,, \qquad k=0,\dots,3\,,
\end{equation}
 $\sigma_0=\mathds{1}_2$ and $\sigma_j, j=1,2,3$ being the Pauli matrices so that the reduced dynamics corresponds to a Pauli map
\begin{equation}\label{1_qubitPaulikraus}
	\Lambda_n=\sum_{\bs{i}_{[1,n]}} p_{\bs{i}_{[1,n]}} \, \phi_{\bs{i}_{[1,n]}}=\Lambda_n^\ddagger \,, \qquad \phi_{\bs{k}_{[1,n]}}= \phi_{k_1}\circ\cdots\circ\phi_{k_n}\,.
\end{equation}
The latter can be explicitly computed for the Markov chain generated by~\eqref{tmatrix}. Indeed, for all $n\in\mathbb{N}$,~\eqref{1_qubitPaulikraus} is diagonal in the Pauli basis,  
\begin{equation}\label{1qubit_eigen}
	\Lambda_n[\sigma_\alpha]=\lambda_n^{(\alpha)}\sigma_\alpha\,, \qquad \alpha=0,\dots,3\,,
\end{equation}
where $\lambda_n^{(0)} = 1$ follows from the preservation of unity and the other eigenvalues evolving according to recursive equations,
\begin{align}\label{onequbitrecursion}
	\lambda_n^{(\alpha)}&=\left(p_0+p(\mu_1^{(\alpha)}+\mu_2^{(\alpha)})+r\mu_3^{(\alpha)}\right) \lambda^{(\alpha)}_{n-1}\\
	&+ \Delta \, p\left(\mu_1^{(\alpha)}-\mu_2^{(\alpha)}\right)^2 \lambda^{(\alpha)}_{n-2}\,,\quad \alpha=0,\dots,3\,, \nonumber
\end{align}
where $\phi_\beta[\sigma_\alpha]=\mu_\beta^{(\alpha)}\sigma_\alpha$, $\mu_0^{(\alpha)}=\mu_\beta^{(0)}=1$ and $\mu_\beta^{(\alpha)}=-1$ for $\beta\ne \alpha\in\{1,2,3\}$. Accordingly,~\eqref{onequbitrecursion} are rewritten as
\begin{align*}
	\lambda_n^{(1,2)}&=\left(1-2(p+r)\right)\ \lambda_{n-1}^{(1,2)}
	+ 4\,  p\, \Delta\ \lambda_{n-2}^{(1,2)}\,,  \\ 
	\lambda_{n}^{(3)}&=\,(1-4p)\ \lambda_{n-1}^{(3)}\,,
\end{align*}
which are easily solvable, yielding
\begin{align}
	\lambda_n:=\lambda_n^{(1,2)}&
	=\left(\frac{B_{p,r,\Delta}+A_{p,r}}{2\,B_{p,r,\Delta}}\right)\left(\frac{A_{p,r}+B_{p,r,\Delta}}{2}\right)^{n}\nonumber\\&\quad +\left(\frac{B_{p,r,\Delta}-A_{p,r}}{2 \,B_{p,r,\Delta}}\right)\left(\frac{A_{p,r}-B_{p,r,\Delta}}{2}\right)^{n}, \nonumber \\
	\lambda_n^{(3)}&=(1-4p)^n \,,\label{1qubit_recursion_explicit}   
\end{align}
where we set $A_{p,r}=1-2(p+r)$ and $B_{p,r,\Delta}=\sqrt{A_{p,r}^2+16\, p \,\Delta}$.
In~\cite{FBGN_PhysicaScr}, it was proved that the divisibility features of the one-qubit dynamical map $\Lambda_n$ are 
controlled by the parameter $\Delta$ as follows:
\begin{align} \nonumber
	\Lambda_n \;\textrm{is CP-divisible iff}& 
	& \frac{\Delta}{A_{p,r}} &\le \frac{r}{2p} \,,\\
	\Lambda_n \otimes \Lambda_n \;\textrm{is P-divisible iff}& & &\nonumber\\
	& & & \hskip-3cm  \frac{\Delta}{A_{p,r}} \le \frac{r}{2p}+\frac{1}{2}-\frac{1-\sqrt{1-4p(1-2p)}}{4p}\,,   	\nonumber \\
	\Lambda_n \;\textrm{is P-divisible iff}&
	& \frac{\Delta}{A_{p,r}} &\le \frac{r}{2p}+\frac{1}{2}\, .\label{conditions_Delta}
\end{align}
Hence, by tuning $\Delta$ and keeping fixed the other parameters,  conditions~\eqref{conditions_Delta} establish that one can change the divisibility degree of the dynamics by increasing the correlations of the environment. 
Indeed, by increasing $\Delta$ one first loses CP-divisibility, then the P-divisibility of the second tensor power $\Lambda_n 	\otimes	\Lambda_n$
and finally the P-divisibility of the one qubit dynamics $\{\Lambda_n\}_{n\geq 0}$. Notice that only in the latter case one can detect back-flow of information as distinguishability revivals for one qubit. Indeed, lack of P-divisibility means lack of contractivity of some intertwiner $\Lambda_{n,n-1}$ and thus the possibility that the trace-distances of suitable states increase from time-step $n-1$ to time-step $n$.
Furthermore, the map $\mathbb{T}_n: M_2^{\otimes n}(\mathbb{C})\to M_2^{\otimes n}(\mathbb{C}) $ becomes an $n$-qubit Pauli channel,
\begin{equation}
	\mathbb{T}_n[X]=\sum_{\bs{i}_{[1,n]}} \,p_{\bs{i}_{[1,n]}} \,\sigma^{[1,n]}_{\bs{i}_{[1,n]}} \, X \,\sigma^{[1,n]}_{\bs{i}_{[1,n]}}\,, 
	\quad \sigma^{[1,n]}_{\bs{i}_{[1,n]}}:=\bigotimes_{k=1}^n \sigma_{i_k}\,,
\end{equation}
so that $\mathbb{T}_n=\mathbb{T}_n^\ddag$. 
The purification of $ \left({\mathds{1}_2}/{2}\right)^{\otimes n} $ corresponds to the maximally entangled state 
 	$$
 	\ket{\psi_+^{{[1,n]}}}=2^{-n/2}\sum_{\bs{k}_{[1,n]}} \ket{\bs{k}_{[1,n]}\otimes\bs{k}_{[1,n]}}\,.
 $$ Set then $\ket{\psi_{\bs{i}_{[1,n]}}}:=\left(\bigotimes_{k=1}^n \sigma_{i_k}\right)\otimes\mathds{1}_2^{\otimes n}\ket{\psi_+^{[1,n]}}$. It is easy to check that such vectors are orthonormal, so that the state
	\begin{equation*}\label{pseudo_spectraltext}
		\mathbb{T}_n^\ddag\otimes\mathrm{id}_{d}^{\otimes n}\left[\ketbra{\psi_+^{[1,n]}}\right]=\sum_{\bs{i}_{[1,n]}}\,p_{\bs{i}_{[1,n]}} \ketbra{\psi_{\bs{i}_{[1,n]}}} 
	\end{equation*}	
	 is diagonal with eigenvalues $\,p_{\bs{i}_{[1,n]}}$ and associated eigenvectors $\ket{\psi_{\bs{i}_{[1,n]}}}$. Then, as in~\eqref{statement_whenequalityholds}, the dynamical entropy of the system equals the entropy rate of the Markov source.
With $T$ parametrized as in~\eqref{tmatrix}, one gets
\begin{multline}\label{specificexample}
	\mathfrak{h}_{S}(\Theta)=\eta(p_0)+\eta(r)\\+2 \left(2p\,  \frac{\eta(p+\Delta)+\eta(p-\Delta)}{2}+(1-2p)\eta(p)\right),
\end{multline}
where $\eta(x)=-x\log x$, $\eta(0):=0$. 
The associated two-site mutual information reads
\begin{align}
	I(\pi_{1};\pi_{2})&=2p\Big(2\eta(p)-\eta(p+\Delta)-\eta(p-\Delta)\Big)\nonumber\\&=4 p^2 \left(\log 2- h\left(\frac{1}{2}+\frac{\Delta}{2p}\right)\right),
	\label{concrete_I}
\end{align}
where $h(x):=\eta(x)+\eta(1-x)$, $0\le x\le 1$ is the Shannon binary entropy, which is a decreasing function of $x$ for $1/2 \le x \le 1$. Therefore, the two-site correlations increase with $\Delta$ and are maximal at $p=\Delta$. Correspondingly, the system's dynamical entropy~\eqref{specificexample} is a monotonically decreasing function of the chain correlations $\Delta$.
Consider $\Delta=p=1/2$, namely the parameters for which the two site correlations~\eqref{concrete_I} are maximal. Also, from~\eqref{tmatrix} it follows that $r=p_0=0$. Then,~\eqref{specificexample} yields
\begin{equation}
	p=\Delta=\frac{1}{2}\quad \Longrightarrow\quad \mathfrak{h}_S(\Theta)=0\,, 
\end{equation}
as one would have for a closed finite system. It is instructive to study the dynamical map $\Lambda_n$ in the same regime. One obtains the spectrum by setting $p=\Delta=1/2$ in~\eqref{1qubit_recursion_explicit}, so that $A_{\frac{1}{2},0}=0$, $B_{\frac{1}{2},0,\frac{1}{2}}=2$. The eigenvalues are
\begin{equation}
	\lambda_n=\frac{1+(-1)^n}{2}\,, \qquad  \lambda_n^{(3)}=(-1)^n\,.
\end{equation}
They correspond to the dynamical map 
\begin{equation}\label{extreme_dyn} 
	\Lambda_n=\begin{cases} \displaystyle\mathrm{id}_2\, & n \ \mathrm{even}\,,\\
		\displaystyle \Lambda_1\, & n \ \mathrm{odd}\,,  \qquad 
	\end{cases} \Lambda_1= \frac{\phi_1+\phi_2}{2}\,.
\end{equation}
Note that the dynamics is not algebraically invertible since $\lambda_{2m+1}=0$. Nevertheless, we can explicitly study the contractivity of the trace norm under $\Lambda_n$. In particular, the trace norm $\normt{X}=\Tr\sqrt{X^\dagger X}$ is always contractive between step $2m$ and $2m+1$,
\begin{equation}
	\normt{\Lambda_{2m+1}[X]}-\normt{\Lambda_{2m}[X]}=\normt{\Lambda_{1}[X]}-\normt{X}\le 0\, ,
\end{equation}
since $\Lambda_1$ is CPTP. Between steps $2m-1$ and $2m$, instead, we have the converse inequality:
\begin{equation}
	\normt{\Lambda_{2m}[X]}-\normt{\Lambda_{2m-1}[X]}=\normt{X}-\normt{\Lambda_{1}[X]}\ge 0\,,
\end{equation}
so that the trace norm always revives at odd times, signaling back-flow of information.

\begin{remark}
		 \label{rmk:QRCPdiv_model}
		When the classical environment is a Markov process we can write the coarse-grained density matrix as (compare with Remark~\ref{rmk:concrete_QR})
		\begin{multline*}
			\left(\rho_S\big[\mc{X}^{(2)}\big]\right)_{\bs{a},\bs{b}}=\sum_{i_2 i_1} \,T_{i_2 i_1} \,p_{i_1} \\ \omega_S\left( X_{b_0}^\dagger \phi_{i_1}\big[X_{b_1}^\dagger\big] \phi_{i_1} \circ\phi_{i_2} \big[X_{b_2}^\dagger X_{a_2}\big] \phi_{i_1}\big[X_{a_1}\big] X_{a_0} \right).
		\end{multline*}
		For the concrete case of $T_{i_2 i_1}$ as in ~\eqref{tmatrix},
		we have 
		\begin{align*}
			\left(\rho_S\big[\mc{X}^{(2)}\big]\right)_{ \bs{a},\bs{b}}&=
			\omega_S\Big(X_{b_0}^\dagger \Lambda_1\left[ X_{b_1}^\dagger \Lambda_1\left[X_{b_2}^\dagger X_{a_2}\right]X_{a_1}\right]X_{a_0}\Big)\\
			&\hskip-2cm+p \,\Delta\, \bigg(\omega_S(X_{b_0}^\dagger \phi_1[X_{b_1}^\dagger] (\mathrm{id}_2-\phi_3)[X_{b_2}^\dagger X_{a_2}] \phi_1[X_{a_1}^{\phantom\dagger}]X_{a_0} )\\
			&\hskip-2cm+\omega_S(X_{b_0}^\dagger \phi_2[X_{b_1}^\dagger] (\mathrm{id}_2-\phi_3)[X_{b_2}^\dagger X_{a_2}] \phi_2[X_{a_1}^{\phantom\dagger}]X_{a_0} )\bigg)\,.
		\end{align*}
		Thus, $\Delta$, that is the strength of the environment correlations, signals the deviation of the evolution from the QR (and semigroup) regime. In particular, notice that the  conditions~\eqref{conditions_Delta} show that the reduced dynamics can be P or even CP-divisible while, for $\Delta>0$, QR does not hold.	
\end{remark}
\begin{remark}\label{rmk:Weyl}
	A natural generalization of  the concrete model just investigated to a $d$-level system is obtained by choosing $D=d^2$ and $U_{i_k}$ as the discrete Weyl operators,
	\begin{equation}
		U_{a,b}=\sum_{k=0}^{d-1}\omega^{k b} \ketbra{a+k}{k}\,, \qquad \omega=e^{2\pi i/d}\,,
	\end{equation}
	which are unitary and satisfy  $\Tr(U_{a,b}^\dagger U_{c,d}^{\phantom{\ddagger}})=d\, \delta_{a c} \delta_{bd}$.    
\end{remark}

\section{ALF entropy: operational interpretation}
\label{sec:dynGNS_model}

We now discuss a natural physical interpretation of the ALF entropy provided by the so-called GNS representation and
apply it to the collisional model of Section~\ref{sec:collisionalmodels}. 

Given an operator algebra $\mathcal{A}$ equipped with a state $\omega$, the familiar Hilbert space setting of quantum mechanics
emerges through the  approach developed by Gelfand, Neimark and Segal (GNS construction). It consists in associating to the pair 
$(\mc{A},\omega)$ a triple $(\mathcal{H}_\omega,\pi_\omega,\ket{\Omega_\omega})$ consisting of a Hilbert space $\mathcal{H}_\omega$, a representation $\pi_\omega:	\mc{A}\to\mathcal{B}(\mathcal{H}_\omega)$ of observables as bounded operators on $\mathcal{H}_\omega$ and a state vector $\ket{\Omega_\omega}$ enjoying the cyclic property, i.e. such that the subspace
$$
\pi_\omega(\mc{A})\ket{\Omega_\omega}:=\{\pi_\omega(X)\ket{\Omega_\omega},\ X\in\mathcal{A} \} \,.
$$
is dense in $\mathcal{H}_\omega$. The expectation value of any observable $X\in\mc{A}$ can be then expressed as 
\begin{equation}
	\omega(X)=\braket{\Omega_\omega}{\pi_\omega(X)|\Omega_\omega},\quad X\in \mathcal{A}\,.
\end{equation}
The GNS triple $(\mathcal{H}_\omega,\pi_\omega,\ket{\Omega_\omega})$ is unique up to a unitary equivalence. From this fact, one deduces that 
every automorphism $\Theta:\mc{A}\to\mc{A}$ that leaves $\omega$ invariant, namely such that $\omega\circ \Theta=\omega$, can be implemented by a unitary operator
\begin{equation}
	\mathbb{U}_\omega[\pi_\omega(X)]:=U^\dagger_\omega\,\pi_\omega(X) \,U_\omega=\pi_\omega(\Theta(X))\,, \quad X\in \mathcal{A}\,.
\end{equation}
where $U_\omega^\dagger$ is defined on the dense subspace $\pi_\omega(\mc{A})\ket{\Omega_\omega}$ as
\begin{equation}\label{unitaryimplementation_d}
	U_\omega^\dagger \,\pi_\omega(X)\ket{\Omega_\omega}:=\pi_\omega\circ\Theta(X)\ket{\Omega_\omega}\,,\quad  
	 U_\omega^\dagger \ket{\Omega_\omega}=\ket{\Omega_\omega}\,.
\end{equation}
\begin{example}\label{ex:GNS_finite}
	Finite $d$-level systems are described by the $C^*$-algebra of $d\times d$ matrices $X\in M_d(\mathbb{C})$ (endowed with the matrix norm) and by states given by density matrices
	\begin{equation}
		\omega(X)=\Tr(\rho\,X)\,, \qquad \Md{d}\,\ni\,\rho\ge0\,,\ \Tr(\rho)=1\,.
	\end{equation}
	For faithful density matrices $\rho=\sum_\alpha r_\alpha \ketbra{\alpha}$, $0<r_\alpha<1$, the GNS construction is achieved by state purification. Indeed, through the Hilbert-Schmidt scalar product $\Tr(X^\dagger Y)=\braket{X}{Y}$ operators $X\in\Md{d}$ are readily identified as vectors in $\mathbb{C}^d\otimes\mathbb{C}^d$. Then, one introduces the following representations of $M_d(\mathbb{C})$:
	\begin{equation}
		\pi(A)\ket{B}=\ket{AB}\,, \qquad \pi'(A)\ket{B}=\ket{BA}\ .
	\end{equation} 
	Then, by vectorizing matrices $A=\sum_{\alpha\beta} A_{\alpha\beta} \ketbra{\alpha}{\beta}\in M_d(\mathbb{C})$ as $\ket{A}=\sum_{\alpha\beta} A_{\alpha \beta}\ket{\alpha\otimes \beta}$ the above ones become factor representations:
	\begin{equation}
		\pi(A)=A\otimes\mathds{1}_d\,,\qquad\pi'(A)=\mathds{1}_d\otimes\overline{A}\,,\qquad A\in\Md{d}\,,
	\end{equation}
	where $\braket*{\alpha}{\overline{A}|\beta}\equiv\overline{\braket{\alpha}{A|\beta}}$.
	Then, for  faithful density matrices $\rho>0$,
	\begin{equation}
		\Tr(\rho \, A)=\Tr(\sqrt{\rho}\,A\,\sqrt{\rho})=\bra{\sqrt{\rho}}{\pi(A)}\ket{\sqrt{\rho}}\ ,
	\end{equation}
	where the purification of $\rho$,
	\begin{equation}
		\ket{\sqrt{\rho}}=\sum_{\alpha} \sqrt{r_\alpha}\ket{\alpha\otimes \alpha} \in \mathbb{C}^d\otimes \mathbb{C}^d\,,
	\end{equation} 
	is a cyclic vector for $\pi(\mathcal{A})$, which is also separating, namely, $\Tr(\rho\,X^\dag X)=0$ if and only if $X=0$, thus, equivalently, cyclic for the commutant $\pi'(\mathcal{A})$.
	Let the automorphism of $\Md{d}$ that leaves $\rho$ invariant  be $\Theta(A)=V^\dagger A V$ with dual map $\Theta^\ddag$ such that 
	$\Theta^\ddag[\rho]=V \rho\, V^{\dagger}=\rho$. 
	Then its GNS implementation reads 
	$ U_\rho=V\otimes \overline{V}$. 
	Indeed, as $V$ commutes with $\rho$,
	\begin{align*}\nonumber
		&U_\rho^\dagger	\pi_\rho(A)\ket{\sqrt{\rho}}=\sum_\alpha\sqrt{r_\alpha}\,V^\dag\,A\,\ket{\alpha}\otimes\,\overline{V}^\dag\ket{\alpha}
		\\&=\sum_{\alpha,\beta}\sqrt{r_\alpha}\, V^\dagger A \ket{\alpha}\bra{\alpha}V\ket{\beta}\otimes\ket{\beta}\\
		 &=\sum_{\beta}V^\dagger A \sqrt{\rho}\,V\ket{\beta}\otimes\ket{\beta}=V^\dag\,A\,V\otimes\mathds{1}_d\ket{\sqrt{\rho}}\\
		&=\pi_\rho(\Theta(A))\ket{\sqrt{\rho}}\,.
	\end{align*}
	Setting $A=\mathds{1}_d$ yields $U_\rho^\dagger \ket{\sqrt{\rho}}=U_\rho \ket{\sqrt{\rho}}=\ket{\sqrt{\rho}}$. As a concrete instance,  let $V=e^{i H}$ with
	$H=\sum h_\alpha \ket{\alpha}\bra{\alpha}$, that is $H$ and $V$ are diagonal in the eigenbasis $\{\ket{\alpha}\}_\alpha$ of any fixed faithful $\rho$. 
	Then, $\overline{V}=V^\dagger$.
\end{example}
If a time-invariant state $\omega\circ\Theta=\omega$ is considered, the physical interpretation of the ALF entropy 
\eqref{def:dynamicalentropy}
 is neatly obtained in the GNS representation as an alternating sequence of measurement processes and reversible one-step dynamics. This possibility already emerged in Remark~\ref{rem:HeisSchr}; however, there the interpretation is only valid for the diagonal entries of $\rho\big[\mc{X}^{(n)}\big]$.
One indeed proves that the von Neumann entropy of the coarse-grained density matrix can  equivalently be computed, in the GNS representation, as~\cite{Alicki2002,AlickiFannesBook},
\begin{equation}S\left(\rho\big[\mc{X}^{(n)}\big]\right)=S\left(\left(\mathbb{U}_\omega^\ddagger\circ\wt{\mathbb{X}}^\ddagger\right)^{n}[\ketbra{\Omega_\omega}]\right),
	\label{equivalenceGNS}
\end{equation}
where  $\mathbb{U}_\omega[\pi_\omega(X)]=U_\omega^\dagger\pi_\omega(X)U_\omega=\pi_\omega(\Theta[X])$ is the unitary implementation of the dynamics in the GNS representation and,  for any given POVM $\mathcal{X}=\{X_a\}_{a=1}^{\abs{\mc{X}}}$,
\begin{equation}
	\label{OPU-map}
	\widetilde{\mathbb{X}}[A]:=\sum_a \pi_\omega(X_a)^\dagger \, A \, \pi_\omega(X_a)\,,\qquad A \in\mathcal{B}(\mathcal{H}_\omega)\ ,
\end{equation} 
while $\mathbb{U}_\omega^\ddagger$ and $\wt{\mathbb{X}}^\ddagger$ are the dual maps of $\mathbb{U}_\omega$ and $\wt{\mathbb{X}}$ in the Schr\"odinger picture.
The derivation of~\eqref{equivalenceGNS} is reported in Appendix~\ref{app:GNSbig} and its structure supports an interpretation of the ALF entropy similar to that of the classical KS entropy that is as the maximal information per time-step extracted from repeated measurements on the system intertwined with the time evolution. Only, measurements and dynamics do not commute in a quantum context.

\begin{remark}\label{rmk:measurement_apparatus}
	The completely positive map associated to an POVM $\mc{X}$ as in~\eqref{OPU-map} can be dilated as to comprehend the measurement apparatus~\cite{MikeandIke}. In such case, the von Neumann entropy of $\rho[\mc{X}]=\Big[\omega(X_b^\dagger X_a)\Big]$ can be also seen as the entropy exchanged with  the apparatus; this puts a bound upon the possible reduction of the ignorance about the initial state $\omega$ due to the measurement~\cite{Lindblad1973,Schumacher96}. Similarly, through the GNS construction,~\eqref{equivalenceGNS} can  also be seen as the entropy  exchanged with the measurement apparatus due to repeated measurements. More details can be found in Appendix~\ref{app:GNSbig}.
\end{remark}

An analogous operational interpretation can be developed for the open-system ALF entropy~\eqref{ALFopen}. 
When the state $\omega_S$ corresponds to a  faithful density matrix $\rho_S>0$, the GNS representation associated with the state
$\omega_{SE}=\omega_{S}\otimes\omega_E$ is the tensor product
$\left(\mathcal{H}_S\otimes\mathcal{H}_E,\pi_S\otimes\pi_E,\left|\sqrt{\rho_S}\otimes\Omega_E\right\rangle\right)$ of the GNS representations of $S$ and $E$, respectively~\cite[Theorem~6.4.7]{MurphyCstar}. Here, $\ket{\sqrt{\rho_S}}$ is the canonical purification of $\rho_S$ as in Example~\ref{ex:GNS_finite}. Assuming a semigroup global evolution $\Theta_n=\Theta^n$ as in Section~\ref{sec:collisionalmodels}, with $\Theta$ leaving the state $\omega_S\otimes\omega_E$ invariant, we can implement the joint $S+E$ evolution by a unitary
\begin{gather}\nonumber
\hskip-3cm	 U_\Theta^{\dagger}\pi_S\otimes\pi_E(X_S\otimes X_E)\ket{\sqrt{\rho_S}\otimes\Omega_E}\\ :=\pi_S\otimes\pi_E(\Theta[X_S\otimes X_E])\ket{\sqrt{\rho_S}\otimes\Omega_E},\nonumber\\
	\mathbb{U}_{\Theta}[\pi_S\otimes\pi_E(X_{S}\otimes X_E)]
 :=U_{\Theta}^\dagger\pi_S\otimes\pi_E(X_{S}\otimes X_E) U_{\Theta}\nonumber\\=\pi_S\otimes\pi_E(\Theta[X_{S}\otimes X_E])\,,\nonumber\\ U_{\Theta}\ket{\sqrt{\rho_S}\otimes\Omega_E}=\ket{\sqrt{\rho_S}\otimes\Omega_E}\ .
	\label{SE_unitary}
\end{gather}
Then, by~\eqref{equivalenceGNS}, the entropy of $\rho_S\big[\mc{X}^{(n)}\big]$ turns out to be the same as that of 
\begin{equation}\label{GNS_SE}
	\left(\mathbb{U}_{\Theta}^\ddagger\circ \left(\wt{\mathbb{X}}^\ddagger\otimes\mathrm{id}_{E}\right)\right)^{n}\left[\ketbra{\sqrt{\rho_S}\otimes\Omega_E}\right]\,.
\end{equation}
where   
\begin{equation*}\label{concrete_X}
\wt{\mathbb{X}}=\mathbb{X}\otimes\mathrm{id}_d	\quad	\hbox{and}\quad
	{\mathbb{X}}[ Y ]=\sum_{a=1}^{\abs{\mc{X}}}  X_a^\dagger\, Y\, X_a\,,\ Y \in\Md{d}\,.
\end{equation*} 
	$\mathfrak{h}_S(\Theta)$ is then the  entropy production rate due to the discrete semi-group evolution \eqref{GNS_SE}, that describes repeated measurements on $S$ intertwining the $S+E$ unitary evolution: it quantifies the maximal rate of information about the dynamics that is extractable by measurements on the open system $S$ only.
	
	\begin{figure}
		\centering
		\includegraphics[width=1\linewidth]{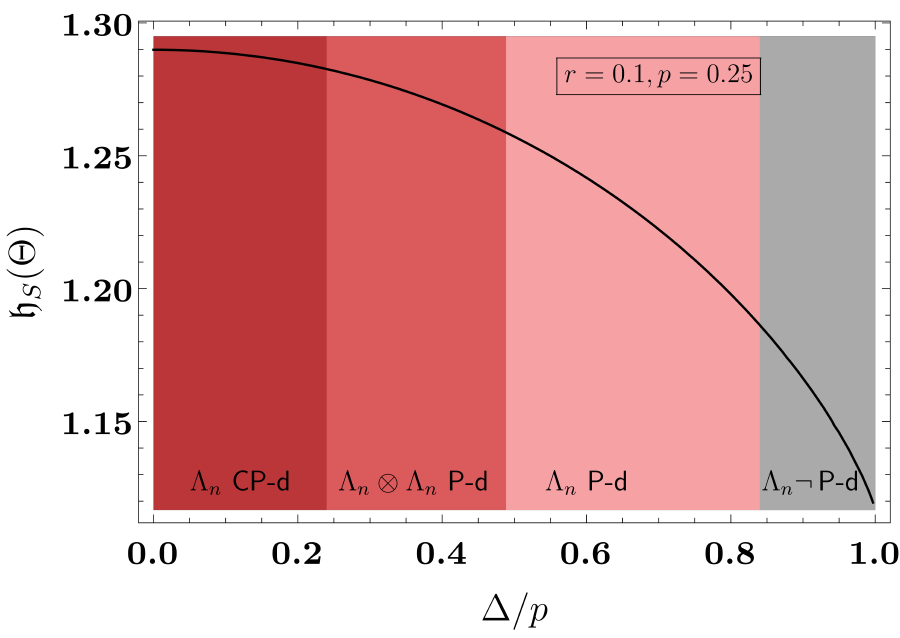}
		\caption{Open-system dynamical entropy~\eqref{specificexample} in Example~\ref{ex:Markov1} as a function of $\Delta/p$ for fixed values $r=0.1$ and $p=0.25$. The coloured regions correspond to different divisibility degrees of the reduced dynamics $\Lambda_n$, as determined by conditions~\eqref{conditions_Delta}: in the grey region $\Lambda_n$ is not P-divisible and shows distinguishability revivals; in the pink region $\Lambda_n$ is P-divisible and does not show revivals; while in the dark-red region it is also CP-divisible.}
		\label{fig:regionsentropy}
	\end{figure}
	 
	\begin{remark}
		The previous discussion allows to better interpret the results of Section~\ref{sec:Markov}, where the open-system dynamical entropy was found to be a monotonically decreasing function of the bath correlations $\Delta$. Such behaviour is also reflected by the divisibility degree of the reduced dynamics. 
		Remarkably, in Section~\ref{ex:Markov1} the dynamical map~\eqref{extreme_dyn} exhibits an extreme back-flow of information since the trace norm revives after each odd time, while the dynamical entropy is zero, as one would have for a reversible evolution (see Section~\ref{sec:ALFDISS}). This means that asymptotically, no new information can be gathered by further probing the open system. Such an interpretation can be generalized through inequality~\eqref{upperboundchain}: 
		for a strongly correlated environment, we expect the  dynamics of the system to be affected by strong memory effects. On the other hand,  the occurrence  of a low, possibly zero, entropy production for a dissipative system is compatible with the interpretation of information having flown back to the open system since, asymptotically, repeated measurements on the system become little informative.
	\end{remark}
\subsection{Reduced Dynamics in the GNS representation}\label{sec:reducedGNS}
	In the GNS approach to the ALF entropy, the measurements affect only $\pi_S$ and not the commutant $\pi_S'$, 
	while the dynamics affects non-trivially also the commutant, as emerges from~\eqref{GNS_SE}.  We now remove the degrees of freedom of the environment by the expectation	
	\begin{widetext}
		\begin{align}\label{generalGammamap}
			&\bra{\sqrt{\rho_S}\otimes \Omega_E} \pi_S\otimes\pi_E(X^\dag\otimes\mathds{1}_E) \left(\left(\wt{\mathbb{X}}\otimes \mathrm{id}_E\right)\circ\mathbb{U}_\Theta\right)^n \left[ \pi_{S}\otimes\pi_E(A\otimes\mathds{1}_E)\pi_{S}'\otimes\pi_E(B\otimes\mathds{1}_E) \right] \, \pi_S\otimes \pi_E(X\otimes\mathds{1}_E) \ket{\sqrt{\rho_S}\otimes\Omega_E}\\&\hskip+7cm=\Tr(\Gamma_n^{\mathbb{X}}
			\big[(X\otimes \mathds{1}_d) \ketbra{\sqrt{\rho_S}} (X^\dagger \otimes \mathds{1}_d)\big] A\otimes B)\nonumber
		\end{align}
	\end{widetext}
	where we introduced a CPTP map $\Gamma_n^{\mathbb{X}}$ on $\Md{d}\otimes \Md{d}$, describing the joint dissipative evolution of $\pi_S$ and $\pi_S'$ together.  Such map is well defined from~\eqref{generalGammamap} by the choice of a reference state of the form $\omega_S\otimes\omega_E$. Accordingly, as in the previous Section, the GNS triple can be identified with the triple $(\pi_S\otimes\pi_E, \mathcal{H}_S\otimes\mathcal{H}_E,\ket{\sqrt{\rho_S}\otimes\Omega_E})$. In particular, the representations $\pi_S$ and $\pi_S'$, amount to factor representations; namely, for $A\in \Md{d}$ (see Example~\ref{ex:GNS_finite})
	\begin{equation*}\label{modelstate_GNS}
		\pi_S(A)=A\otimes \mathds{1}_d\,, 	\qquad \pi_S'(A)=\mathds{1}_d\otimes\overline{A}\,.
	\end{equation*}
	The factorization of the $S+E$ cyclic vector, along with the complete positivity of $\left(\mathbb{U}_\Theta^\ddagger \circ\left(\wt{\mathbb{X}}^\ddagger\otimes\mathrm{id}_E\right)\right)^n$, ultimately ensure the complete positivity of~$\Gamma_n^{\mathbb{X}}$.
	We expect that such dissipative bipartite dynamics could help in assessing the lack of Markovianity of the stochastic process, for example by revealing memory effects that cannot be simply ascertained through the reduced dynamics $\Lambda_n^\ddag$.  
	In the case of the collisional model considered in Section~\ref{sec:collisionalmodels}, we fix the system $S$ reference state to be the maximally mixed one, 
	$\rho_S=\mathds{1}/{d}$.
The system GNS vector then corresponds to the maximally entangled state  $\ket{\sqrt{\rho_S}}\equiv\ket{\psiplus{d}}=\sum_{i=1}^d \ket{i\otimes i}/\sqrt{d}$.
	\color{black}
The tensor shift on the chain is implemented by the unitary operator defined by
\begin{align}\label{Ushift}
	U_\sigma^\dagger\,\pi_E\left(\Pi_{\bs{i}^{[a,b]}}^{[a,b]}\right)\ket{\Omega_E}:&=\pi_E\left(\Pi_{\bs{i}^{[a,b]}}^{[a+1,b+1]}\right)\ket{\Omega_E}\,,\\ U_\sigma^\dagger\ket{\Omega_E}&=U_\sigma\ket{\Omega_E}=\ket{\Omega_E}\,.\nonumber
\end{align}
Furthermore, we implement unitarily the automorphism $\Theta$ by the operator
\begin{align}\label{Udag_def}
	&U_\Theta^\dagger\,\pi_S(A) \pi_S'(B)\otimes\pi_E\left(\Pi_{\bs{i}^{[a,b]}}^{[a,b]}\right)\ket{\psiplus{d}\otimes\Omega_E}\\&\hskip0.2cm:=\sum_k U_k^\dagger A \otimes \overline{U_k}^\dagger  \overline{B} \otimes  U_\sigma^\dagger\pi_E\left(\Pi_k^{(0)}\Pi_{\bs{i}^{[a,b]}}^{[a,b]}\right) \ket{\psiplus{d}\otimes\Omega_E}.\nonumber
\end{align}
We then explicitly consider the evolution intertwined by measurements on the open system as appears in the coarse-grained density matrix, and evaluate explicitly~\eqref{generalGammamap}
\begin{widetext}
	\begin{align}	\label{exp_n}
	&\bra{\psiplus{d}\otimes\Omega_E} \pi_S\otimes\pi_E(X^\dag\otimes\mathds{1}_E) \big(\big(\wt{\mathbb{X}}\otimes\mathrm{id}_E\big)\circ\mathbb{U}_\Theta\big)^n [\pi_S\otimes\pi_E(A\otimes\mathds{1}_E)\pi_S'\otimes\pi_E(B\otimes\mathds{1}_E) ] \pi_S\otimes\pi_E(X\otimes\mathds{1}_E) \ket{\psiplus{d}\otimes\Omega_E}
	\\	&\hskip+2.5cm
	=\sum_{k_1\cdots k_n}  p_{k_1\cdots k_n} \Tr\bigg(A\otimes \overline{B}\,\phi_{k_n}^\ddagger\otimes\overline{\phi}_{k_n}^\ddagger\circ \wt{\mathbb{X}}^\ddagger\circ \cdots 
	 \circ\phi_{k_1}^\ddagger\otimes\overline{\phi}_{k_1}^{\ddagger}\circ\wt{\mathbb{X}}^\ddagger \, [X\otimes \mathds{1}_d\pplus{d} X^\dagger\otimes \mathds{1}_d]\bigg)\,.\nonumber
\end{align}
\hfill\break
\end{widetext}
where $\pplus{d}=\ketbra{\psiplus{d}}$.
Accordingly, the dissipative GNS dynamics in the Schrödinger picture, resulting from the iterated measurement scheme and unitary evolution, is described by the following CPTP map
\begin{equation}\label{GammaXmodel}
	\Gamma_n^{\mathbb{X}}:=\sum_{k_1\cdots k_n}  p_{k_1\cdots k_n} \phi_{k_n}^\ddagger\otimes\overline{\phi}_{k_n}^\ddagger \circ\wt{\mathbb{X}}^\ddagger \circ 
	\cdots
	 \circ \phi_{k_1}^\ddagger\otimes\overline{\phi}_{k_1}^{\ddagger} \circ\wt{\mathbb{X}}^\ddagger\,.
\end{equation}
 where again $\wt{\mathbb{X}}^\ddagger=\mathbb{X}^\ddagger\otimes\mathrm{id}_d$. The detailed derivation of~\eqref{GammaXmodel} is reported in Appendix~\ref{app:GNSdissipative}.
The choice $\mathbb{X}=\mathrm{id}_d$, in particular,~yields
\begin{equation}\label{intrinsic_GNSdyn}
	\Gamma_n^{\mathrm{id}}\\
	=\sum_{\bs{k}_{[1,n]}} \, p_{\bs{k}_{[1,n]}}\, \phi_{\bs{k}_{[1,n]}}^\ddagger\otimes  \overline{\phi}_{\bs{k}_{[1,n]}}^\ddagger \,,  
\end{equation}
which is the intrinsic reduced evolution in the GNS space, having traced away the environment degrees of freedom. Formally,~\eqref{intrinsic_GNSdyn} corresponds to the joint evolution of two subsystems coupled in a common environment. 
Notice that the partial trace over the commutant gives,
\begin{align*}
	\Tr_{II}\left(\Gamma_n^{\mathrm{id}}[X\otimes Y]\right)&=\sum_{\bs{k}_{[1,n]}} \, p_{\bs{k}_{[1,n]}}\, \phi_{\bs{k}_{[1,n]}}^\ddagger[X] \Tr(Y)\\&=\Lambda_n^\ddag[\Tr(Y) \,X]=\Lambda_n^\ddag[\Tr_{II}(X\otimes Y)]\,.
\end{align*}
Hence, consistently, $\Gamma_n^{\mathrm{id}}$ provides a dissipative CP dilation of the dynamical map $\Lambda_n^\ddag$ of the system.

\subsection{Pauli GNS reduced dynamics}

We now consider the environment to be a classical Markov chain, as in Section~\ref{ex:Markov1}:
$$
p_{\bs{k}_{[1,n]}}=T_{k_n k_n-1}\dots T_{k_2 k_1} p_{k_1}\ ,
$$
 the open quantum system as a qubit $d=2$ and the environment as consisting of a one-dimensional lattice of classical four level systems $D=d^2=4$. Furthermore, we 
shall choose the  stochastic matrix $T$ as in~\eqref{tmatrix}, while setting the unitaries $U_k=\sigma_k$, $k=0,\ldots,3$. 
Accordingly, 
$\Gamma_n^{\mathrm{id}}:\Md{4}\to\Md{4}$ defined in~\eqref{intrinsic_GNSdyn} becomes a two-qubit 
Pauli map, 
namely for all $n\ge0$ it is  diagonal in the Pauli matrices
\begin{equation}
	\Gamma_n^{\mathrm{id}}[\sigma_\alpha\otimes\sigma_\beta]=\gamma_n^{(\alpha,\beta)}\sigma_\alpha\otimes\sigma_\beta\,, \qquad\alpha,   \beta=0,\dots,3\,.
\end{equation}
The spectrum of $\Gamma_n^{\mathrm{id}}$  
follows a similar rule than that of $\Lambda_n$
\begin{multline}\label{twoqubitrecursion}
	\gamma_n^{(\alpha,\beta)}=\left(p_0+p(\mu_1^{(\alpha,\beta)}+\mu_2^{(\alpha,\beta)})+r\mu_3^{(\alpha,\beta)}\right) \gamma_{n-1}^{(\alpha,\beta)}\\+ \Delta\, p\left(\mu_1^{(\alpha,\beta)}-\mu_2^{(\alpha,\beta)}\right)^2 \gamma_{n-2}^{(\alpha,\beta)}\ ,
\end{multline}
where  $\phi_k\otimes\phi_k[\sigma_\alpha\otimes\sigma_\beta]=\mu_k^{(\alpha,\beta)}\sigma_\alpha\otimes\sigma_\beta$, $ \mu_k^{(\alpha,\beta)}:=\mu_k^{(\alpha)}\mu_k^{(\beta)}$. The solutions of \eqref{twoqubitrecursion} are the same as that of~\eqref{onequbitrecursion}. The eigenvalues are associated to their respective eigenvectors $\sigma_\alpha\otimes\sigma_\beta$ according to the following $4\times 4$ matrix 
\begin{equation}
	G_n:=\left[\gamma_n^{(\alpha,\beta)}\right]=\begin{pmatrix}
		1 & \lambda_n &  \lambda_n &  \lambda_n^{(3)} \\
		\lambda_n & 1 & \lambda_n^{(3)} &  \lambda_n \\
		\lambda_n & \lambda_n^{(3)} & 1 & \lambda_n \\
		\lambda_n^{(3)} & \lambda_n &\lambda_n & 1 \\
	\end{pmatrix}\ .
\end{equation}
Note that all the Bell-diagonal states, which are linear combinations of $\sigma_\alpha\otimes\sigma_\alpha$, as well as the GNS reduced initial state $\pplus{d}$, are invariant under $\Gamma_n^{\mathrm{id}}$.
The spectrum of the intertwining propagators $\Gamma_{n,n-1}^{\textrm{id}}=\Gamma_{n}^{\textrm{id}}\circ\left(\Gamma_{n-1}^{\textrm{id}}\right)^{-1}$ is determined multiplicatively, 
\begin{equation}\nonumber
	\Gamma_{n,n-1}^{\textrm{id}}[\sigma_\alpha\otimes\sigma_\beta]=\gamma_{n,n-1}^{(\alpha,\beta)}\,\sigma_\alpha\otimes\sigma_\beta\,, \qquad \gamma_{n,n-1}^{(\alpha,\beta)}=\frac{\gamma_{n}^{(\alpha,\beta)}}{\gamma_{n-1}^{(\alpha,\beta)}}	\,,
\end{equation}
and given by
\begin{align*}
	G_{n,n-1}:&=\left[\gamma_{n,n-1}^{(\alpha,\beta)}	\right]\\&=\begin{pmatrix}
		1 & \lambda_{n,n-1} &  \lambda_{n,n-1} &  \lambda_{n,n-1}^{(3)} \\
		\lambda_{n,n-1} & 1 & \lambda_{n,n-1}^{(3)} &  \lambda_{n,n-1} \\
		\lambda_{n,n-1} & \lambda_{n,n-1}^{(3)} & 1 & \lambda_{n,n-1} \\
		\lambda_{n,n-1}^{(3)} & \lambda_{n,n-1} &\lambda_{n,n-1} & 1 \\
	\end{pmatrix},
\end{align*}
where $\lambda_{n,n-1}^{\phantom{(1)}}=\lambda_{n}/\lambda_{n-1}$,
$\lambda_{n,n-1}^{(3)}=\lambda_{n}^{(3)}/\lambda_{n-1}^{(3)}$ are the eigenvalues of $\Lambda_{n,n-1}$. Then, as derived in Appendix~\ref{rmk:Pauli1and2}, the action of $\Gamma_n^{\mathrm{id}}$  can be recast as
\begin{equation}
	\Gamma_n^{\mathrm{id}}[X]=\sum_{\alpha} \ q_n^{(\alpha)}\ \sigma_\alpha \otimes \sigma_\alpha\, X \,\sigma_\alpha \otimes \sigma_\alpha 
	\,.
\end{equation}
with
\begin{gather*}
	q_n^{(0)}=\frac{1}{4}\left(1+2\lambda_n+\lambda_n^{(3)}\right),\quad q_n^{(1)}=q_n^{(2)}=\frac{1}{4}\left(1-\lambda_n^{(3)}\right),\\  q_n^{(3)}=\frac{1}{4}\left(1+2\lambda_n-\lambda_n^{(3)}\right).
\end{gather*}
We now aim at characterizing the divisibility properties of the GNS discrete dynamics $\Gamma_n^{\textrm{id}}$. 
\begin{proposition} 
	Let 
	$\Gamma_n:\Md{4}\to\Md{4}$ be the Pauli map given by
	$$
	\Gamma_n[X]=\sum_{\alpha=0}^3 \, q_n^{(\alpha)} \,\sigma_\alpha\, \otimes \sigma_\alpha \,X\, \sigma_\alpha \otimes \sigma_\alpha \,, \qquad X\in\Md{4}\,.
	$$
	Then, consider the one-qubit CPU map obtained as $$\Lambda_n\circ\Tr_{II}[X]=\Tr_{II}(  \Gamma_n[X])\,.$$
	where $\Tr_{II}$ is the partial trace over the second qubit.
	Then, following conditions are equivalent
	\begin{enumerate}[label=\roman*.]
		\item $\Lambda_n$ is CP-divisible;
		\item  $\Gamma_n$ is P-divisible;
		\item  $\Gamma_n$ is CP-divisible.
	\end{enumerate}
\end{proposition}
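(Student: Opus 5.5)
The strategy is to show that the propagators $\Gamma_{n,n-1}$ and $\Lambda_{n,n-1}$ are both Pauli-diagonal and share the same Kraus weights; the three conditions then reduce to the nonnegativity of those weights. Throughout, I assume the eigenvalues $\lambda_{n-1},\lambda^{(3)}_{n-1}$ are nonzero, so that the intertwiners $\Lambda_{n,n-1}$ and $\Gamma_{n,n-1}$ exist, as in the spectral description preceding the statement.

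First I identify $\Lambda_n$. Since $\Tr_{II}\big((\sigma_\alpha\otimes\sigma_\alpha)X(\sigma_\alpha\otimes\sigma_\alpha)\big)=\sigma_\alpha\Tr_{II}(X)\sigma_\alpha$, one has $\Lambda_n[Y]=\sum_\alpha q_n^{(\alpha)}\sigma_\alpha Y\sigma_\alpha$. Hence $\Lambda_n$ is a one-qubit Pauli map with eigenvalues $\lambda_n^{(\beta)}=\sum_\alpha q_n^{(\alpha)}\mu_\alpha^{(\beta)}$. On $\sigma_\beta\otimes\sigma_\gamma$, the map $\Gamma_n$ has eigenvalue $\sum_\alpha q_n^{(\alpha)}\mu_\alpha^{(\beta)}\mu_\alpha^{(\gamma)}$. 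Because $\mu_\alpha^{(\beta)}\mu_\alpha^{(\gamma)}=\mu_\alpha^{(\delta)}$, where $\sigma_\delta\propto\sigma_\beta\sigma_\gamma$, this eigenvalue equals $\lambda_n^{(\delta)}$; this is exactly the structure of the matrix $G_n$. The eigenvalues of $\Gamma_{n,n-1}$ are obtained as ratios, so they are again $\lambda^{(\delta)}_{n,n-1}$ arranged by the same rule. Inverting the Pauli–Fourier relation (the $4\times4$ matrix $[\mu_\alpha^{(\beta)}]$ is invertible) then gives
$$\Gamma_{n,n-1}[X]=\sum_\alpha q_{n,n-1}^{(\alpha)}(\sigma_\alpha\otimes\sigma_\alpha)X(\sigma_\alpha\otimes\sigma_\alpha),\qquad \Lambda_{n,n-1}[Y]=\sum_\alpha q_{n,n-1}^{(\alpha)}\sigma_\alpha Y\sigma_\alpha,$$
with the same real weights $q_{n,n-1}^{(\alpha)}$, which sum to one by trace preservation.

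With this in hand, the implications follow in turn.

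(i)$\Rightarrow$(iii). By the standard Pauli-channel criterion, $\Lambda_{n,n-1}$ is CP if and only if all $q_{n,n-1}^{(\alpha)}\ge0$. In that case the displayed form of $\Gamma_{n,n-1}$ is a Kraus decomposition, so $\Gamma_{n,n-1}$ is CP.

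(iii)$\Rightarrow$(ii). This is trivial.

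(ii)$\Rightarrow$(i). This is the only step needing an idea, since positivity of $\Gamma_{n,n-1}$ must be shown to force each $q^{(\alpha)}_{n,n-1}\ge0$. I would look for a vector $\ket{v}\in\mathbb{C}^2\otimes\mathbb{C}^2$ whose four images $(\sigma_\alpha\otimes\sigma_\alpha)\ket{v}$ are mutually orthogonal. This requires $\bra{v}\sigma_l\otimes\sigma_l\ket{v}=0$ for $l=1,2,3$, because $\sigma_\alpha\sigma_\beta\otimes\sigma_\alpha\sigma_\beta$ is, up to sign, some $\sigma_l\otimes\sigma_l$. The product vector $\ket{v}=\ket{0}\otimes\ket{+}$ works: $\bra{0}\sigma_{1,2}\ket{0}=0$ and $\bra{+}\sigma_3\ket{+}=0$. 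Then
$$\Gamma_{n,n-1}[\ketbra{v}]=\sum_\alpha q_{n,n-1}^{(\alpha)}\ketbra{v_\alpha},\qquad \ket{v_\alpha}=(\sigma_\alpha\otimes\sigma_\alpha)\ket{v},$$
is already written in an orthonormal eigenbasis, with eigenvalues $q^{(\alpha)}_{n,n-1}$. Positivity of $\Gamma_{n,n-1}$ therefore gives $q^{(\alpha)}_{n,n-1}\ge0$ for every $\alpha$, hence CP of $\Lambda_{n,n-1}$ for all $n$, which is (i).

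The main point requiring care is the first step: checking that the weights of $\Gamma_{n,n-1}$ coincide with those of $\Lambda_{n,n-1}$. This amounts to verifying that the Pauli–Fourier inversion on the four-dimensional label group $\{0,1,2,3\}$ is consistent with the pattern of $G_{n,n-1}$.
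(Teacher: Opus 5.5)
Your proof is correct and follows essentially the paper's route. Both arguments reduce everything to the common Pauli weights $q^{(\alpha)}_{n,n-1}$ of $\Gamma_{n,n-1}$ and $\Lambda_{n,n-1}$, then test positivity of $\Gamma_{n,n-1}$ on a product vector whose four images under $\sigma_\alpha\otimes\sigma_\alpha$ are orthonormal: the paper uses $\ket{+}\otimes\ket{0}$, you use $\ket{0}\otimes\ket{+}$. Your explicit check via $\mu_\alpha^{(\beta)}\mu_\alpha^{(\gamma)}=\mu_\alpha^{(\delta)}$ that the two maps share the same weights replaces the paper's partial-trace intertwining relation and its remark that intertwiners of Pauli maps are Pauli maps, and is a cleaner justification of that step.
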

\begin{proof}
	Take $X\in\Md{4}$ and let $Y=\Gamma_n[X]$. The partial trace of $X$ over the second qubit reads
	\(
		\Tr_{II}(X)=\Tr_{II}(\Gamma_n^{-1}[Y])
	\)
	but also
	\begin{multline*}
		\Tr_{II}(X)=\Lambda_n^{-1}\circ\Lambda_n[\Tr_{II}(X)]=\Lambda_n^{-1}[\Tr_{II}(\Gamma_n [X])]\\=\Lambda_n^{-1}[\Tr_{II}(Y)]\,,
	\end{multline*}
	Hence, for all $Y\in\Md{4}$,
	\(
		\Lambda_n^{-1}[\Tr_{II}(Y)]= \Tr_{II}(\Gamma_n^{-1}[Y]).
	\)
	Applying $\Lambda_m$, $m\ge n$, to both sides, yields
	\begin{equation}
		\Lambda_{m,n}[\Tr_{II}(Y)]=\Tr_{II}(\Gamma_{m,n}[Y])\,,\qquad m\ge n\,.
	\end{equation}
	Intertwiners of Pauli maps are Pauli maps, so that
	\begin{equation}
		\Gamma_{n,n-1}[X]=\sum_{\alpha=0}^3 q^{(\alpha)}_{n,n-1}\sigma_\alpha\otimes\sigma_\alpha X \sigma_\alpha\otimes\sigma_\alpha \,,
	\end{equation} 
	and 
	\(
		\Tr_{II}\left(\Gamma_{n,n-1}[X]\right)=\Lambda_{n,n-1}\left[\Tr_{II}X\right],
	\)
	$\forall\, n\ge1$, with 
	\begin{equation}
		\Md{2}\ni Y\longmapsto\Lambda_{n,n-1}[Y]=\sum_{\alpha=0}^3 q^{(\alpha)}_{n,n-1}\sigma_\alpha\,Y\,\sigma_\alpha\, .
	\end{equation}
	Positivity of $\Gamma_{n,n-1}$, $n\ge1$, is expressed by the condition
	\begin{equation}
		\Tr(Q\,\Gamma_{n,n-1}[P])\ge0 \,, 	\qquad	\forall \,P,Q\ge0\,,
	\end{equation}
	where $P,Q$ can be chosen as projectors in $\Md{4}$. Pick separable projectors
	\begin{equation}
		P=Q=\ketbra{+}{+}\otimes\ketbra{0}{0}\,,\qquad \ket{\pm}=\frac{\ket{0}\pm\ket{1}}{\sqrt{2}}\,,
	\end{equation}
	with $\sigma_3\ket{0}=\ket{0}$, $\sigma_3\ket{1}=-\ket{1}$. One readily finds
	\begin{equation}
		\Tr(Q \,\Gamma_{n,n-1}[P])=q_{n,n-1}^{(0)}\ge0 \ .
	\end{equation}
	The choice
	\begin{equation}
		P=\ketbra{+}{+}\otimes\ketbra{0}{0}\,,\qquad Q=\ketbra{-}{-}\otimes\ketbra{0}{0}\,, \quad PQ=0\,,
	\end{equation}
	yields instead the condition
	\begin{equation}
		\Tr(Q\,\Gamma_{n,n-1}[P])=q_{n,n-1}^{(3)}\ge0 \ .
	\end{equation}
	Similarly, one finds 
	$q_{n,n-1}^{(\alpha)}\ge0$, $\alpha=0,\dots,3$, so that $ii. \implies i\implies iii.$
	Clearly, $iii. \implies ii.$  
\end{proof}
Interestingly, for the model under consideration, the intrinsic evolution $\Gamma_n^{\mathrm{id}}$ in the GNS construction can display memory effects not present in the one-qubit reduced dynamics $\Lambda_n$. 
\begin{corollary}\label{SBFI_remark}
	Let $\Delta$ be such that $\Lambda_n$ is P-divisible but not CP-divisible. Then, the evolution of the system does not display back-flow of information in terms of revivals. 
	On the other hand, as the two-qubit dilation $\Gamma_n^{\mathrm{id}}$ is not P-divisible, it will show it.
\end{corollary}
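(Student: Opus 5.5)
The proof combines the divisibility conditions~\eqref{conditions_Delta} with the equivalence in the preceding Proposition, applied to $\Gamma_n=\Gamma_n^{\mathrm{id}}$. The partial trace over the commutant of $\Gamma_n^{\mathrm{id}}$ is exactly the one-qubit reduced dynamics $\Lambda_n^\ddag=\Lambda_n$ (Pauli maps are self-dual). Moreover, the weights $q_n^{(\alpha)}$ given above make $\Gamma_n^{\mathrm{id}}$ a Pauli map of the form required by that Proposition.

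\emph{First claim.} By assumption $\Delta$ lies in the window
\[
\frac{r}{2p}<\frac{\Delta}{A_{p,r}}\le\frac{r}{2p}+\frac12 ,
\]
so by the third line of~\eqref{conditions_Delta} the one-qubit map $\Lambda_n$ is P-divisible. The one-step intertwiners $\Lambda_{n,n-1}$ are therefore positive and trace preserving, hence trace-norm contractive: $\normt{\Lambda_{n,n-1}[X]}\le\normt{X}$ for Hermitian $X$. Applying this to $X=\Lambda_{n-1}[\rho-\sigma]$ gives
\[
\normt{\Lambda_n[\rho-\sigma]}\le\normt{\Lambda_{n-1}[\rho-\sigma]} ,
\]
so the trace distance is monotone and there are no revivals. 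This step needs invertibility of $\Lambda_{n-1}$ to define the intertwiners. One should note that in this window $\lambda_n,\lambda_n^{(3)}\neq0$, which is guaranteed since $\Delta<p$ strictly away from the extreme case, or alternatively rely on the framework of~\cite{FBGN_PhysicaScr}.

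\emph{Second claim.} Since $\Delta/A_{p,r}>r/(2p)$, the first line of~\eqref{conditions_Delta} says $\Lambda_n$ is not CP-divisible. By the equivalence $i\Leftrightarrow ii$ of the preceding Proposition, $\Gamma_n^{\mathrm{id}}$ is then not P-divisible: there is some $n$ with $\Gamma^{\mathrm{id}}_{n,n-1}$ not positive. Concretely, some $q^{(\alpha)}_{n,n-1}<0$, and the separable projectors $P,Q$ used in that proof exhibit $\Tr(Q\,\Gamma_{n,n-1}[P])<0$.

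The main obstacle is turning non-positivity of an intertwiner into an actual revival of trace distance. For a trace-preserving, algebraically invertible Hermiticity-preserving map, I would invoke the standard equivalence (Kossakowski/\cite{BLP}-type, as recalled in Section~\ref{sec:QR}) that P-divisibility is equivalent to $\normt{\Gamma_n[X]}$ being non-increasing for all Hermitian $X$. Failure of positivity then gives a Hermitian traceless $X$, e.g.\ a weighted difference of two states, with $\normt{\Gamma_n^{\mathrm{id}}[X]}>\normt{\Gamma_{n-1}^{\mathrm{id}}[X]}$.

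To make this explicit in the Pauli setting, I would pick a Pauli direction $\sigma_\alpha\otimes\sigma_\beta$ on which $|\gamma^{(\alpha,\beta)}_{n,n-1}|>1$. Negativity of some $q^{(\alpha)}_{n,n-1}$ forces such an eigenvalue, because the $\gamma$'s are Hadamard-type transforms of the $q$'s with $\sum_\alpha q^{(\alpha)}_{n,n-1}=1$. Then take
\[
\rho_\pm=\tfrac14\bigl(\mathds{1}\pm\sigma_\alpha\otimes\sigma_\beta\bigr).
\]
Their trace distance scales by $|\gamma^{(\alpha,\beta)}_n|$, so it increases from step $n-1$ to $n$. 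This revival exists in the two-qubit dilation although it is absent in the one-qubit dynamics, which is the super-activation phenomenon.
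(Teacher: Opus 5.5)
Your abstract route is the paper's. P-divisibility of $\Lambda_n$ gives trace-norm monotonicity; invertibility is not needed there, since the hypothesis itself supplies positive trace-preserving intertwiners. The preceding Proposition then converts the failure of CP-divisibility of $\Lambda_n$ into the failure of P-divisibility of $\Gamma_n^{\mathrm{id}}$, which for an invertible dynamics yields a revival.

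\textbf{What fails is your explicit construction.} A negative $q^{(\alpha)}_{n,n-1}$ does \emph{not} force an eigenvalue $|\gamma^{(\alpha,\beta)}_{n,n-1}|>1$, and in the regime of the corollary it cannot. The entries of $G_{n,n-1}$ are only $1$, $\lambda_{n,n-1}$, $\lambda^{(3)}_{n,n-1}$. All of them are already attained on the local directions $\sigma_\alpha\otimes\mathds{1}$, where $\Gamma^{\mathrm{id}}_{n,n-1}[\sigma_\alpha\otimes\mathds{1}]=\Lambda_{n,n-1}[\sigma_\alpha]\otimes\mathds{1}$. Positivity of $\Lambda_{n,n-1}$ (trace-norm contractivity applied to $\sigma_\alpha$) forces $|\lambda_{n,n-1}|,|\lambda^{(3)}_{n,n-1}|\le 1$. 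Hence $\|\Gamma^{\mathrm{id}}_n[\rho_+-\rho_-]\|_1=2|\gamma^{(\alpha,\beta)}_n|$ is non-increasing in $n$ for every Pauli direction. Your states $\rho_\pm$ never revive, and the construction contradicts your own first claim.

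This is precisely the super-activation mechanism: for the two-qubit map, positivity is the stronger condition $q^{(\alpha)}_{n,n-1}\ge0$, not a bound on eigenvalues. A witness must therefore mix Pauli directions, and you already had one in hand. For $P=|+\rangle\langle+|\otimes|0\rangle\langle0|$ one finds $\Gamma^{\mathrm{id}}_{n,n-1}[P]=\sum_\alpha q^{(\alpha)}_{n,n-1}P_\alpha$, with mutually orthogonal rank-one projectors $P_\alpha$ onto $|{+}0\rangle,|{+}1\rangle,|{-}1\rangle,|{-}0\rangle$. So $\|\Gamma^{\mathrm{id}}_{n,n-1}[P]\|_1=\sum_\alpha|q^{(\alpha)}_{n,n-1}|>1$, and $X=(\Gamma^{\mathrm{id}}_{n-1})^{-1}[P]$ satisfies $\|\Gamma^{\mathrm{id}}_n[X]\|_1>1=\|\Gamma^{\mathrm{id}}_{n-1}[X]\|_1$.

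Two smaller corrections. First, that witness has unit trace: the standard equivalence for invertible maps produces a Helstrom matrix $w\rho_1-(1-w)\rho_2$, not in general a traceless operator, so drop ``traceless''. Second, your justification of invertibility is wrong. $\lambda^{(3)}_n=(1-4p)^n$ does not depend on $\Delta$ and vanishes for $p=1/4$, the value used in Figure~\ref{fig:regionsentropy}. Likewise $\lambda_n\neq0$ requires $A_{p,r}\neq0$, again unrelated to $\Delta<p$. Invertibility of $\Gamma^{\mathrm{id}}_{n-1}$ (i.e.\ $p\neq1/4$, $A_{p,r}\neq0$) has to be stated as an assumption, exactly as the paper's proof tacitly does.
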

\begin{proof}
	P-divisibility of the reduced qubit dynamics  ensures that
	\begin{equation*}
		\normt{\Lambda_n[X]}\le \normt{\Lambda_m[X]}\,,\qquad \forall\ n\ge m\,, \ \forall \ X\in\Md{2}\,. 
	\end{equation*}
	On the other hand,  as the two-qubit dilation $\Gamma_n^{\mathrm{id}}$ is not P-divisible and invertible,  it certainly breaks monotonicity, namely there exists $X\in\Md{4}$ and $n> m\in\mathbb{N}$ such that
	$	\big\|{\Gamma_n^{\rm{id}}[X]}\big\|_1 >
		\big\|\Gamma_m^{\mathrm{id}}{[X]}\big\|_1\,.$
\end{proof}
\section{Conclusions}

In this work, we have extended the ALF-entropy which has mainly been applied in  reversible, quantum many-body settings, to dissipative finite dimensional 
quantum systems  and named \textit{open-system ALF entropy}. 
The analytical properties that characterize it, in particular its decreasing in  the most general non-Markovian settings with respect to the Markovian ones, makes it a
suitable witness of information flowing from the environment back into the open quantum system.
 
In particular, we have explicitly computed the open-system ALF entropy for a finite-level quantum system coupled to a classical stationary chain: the entropy rate equals the mean entropy of the environment. This result has a simple but clear information-theoretic interpretation: the ALF entropy measures the average amount of information that can be extracted by probing the open system alone. Such an information rate decreases as the environment becomes more correlated.
In the qubit case, this physical effect has been compared to the memory effects present in the qubit reduced dynamics, which are likewise  governed by the strength of the environmental correlations. \par
The ALF entropy, obtained from  the coarse-grained matrix $\rho_S[\mc{X}^{(n)}]$ built from the system multi-time correlation functions, is neatly interpreted when going to the GNS representation which involves a purifying ancillary system. Accordingly, we studied the GNS reduced dynamics obtained by tracing away the environment for the model of Section~\ref{sec:collisionalmodels}. Interestingly, the GNS dilation is affected by memory effects that are seen by the reduced dynamics. In particular, it can show super-activation of back-flow of information in terms of revivals, as it typically occurs when a tensor product dilation of the reduced dynamics is considered~\cite{BenattiChrusFil,FBGN2023,FBGN_PhysicaScr}.\par 
In a more general scenario of a system coupled to a quantum spin chain, the above interpretation is supported by an upper bound to the ALF entropy of the open system by the mean entropy of the environment (plus a quantum correction). For a stationary chain, the more correlated it is, the tighter is such bound. This is only a partial result, a thorough  study of the ALF entropy in the case of a genuinely quantum environment still awaiting to be developed.

\section*{Acknowledgments} G.N. and F.B.  acknowledge financial support from PNRR MUR project PE0000023-NQSTI.

\appendix
\begin{appendices}
	\counterwithin*{equation}{section}
	\renewcommand\theequation{\thesection\arabic{equation}}

	\section{Proof of Proposition~\ref{thm:partitionF_general}.}\label{app:proof_partitionF_general}

	We need to recast the system coarse-grained matrix $\rho_S\left[\mc{F}^{(n+1)}\right]$ for the POVM $\mc{F}$ defined in~\eqref{specialOPU} 
	in the form
	$$
	\rho_S\left[\mc{F}^{(n+1)}\right] =\rho_S\otimes \rho_S \otimes  \left(\mathbb{T}_n^\ddag \otimes \mathrm{id} \left[\ketbra{\sqrt{\rho_S^{\otimes n}}}\right]\right) .
	$$
	By definition, 
	\begin{align}
		\mathbb{T}_n\left[\;\bigotimes_{k=0}^{n-1} A_k\right]&=\omega_E\left(\wt{\Theta}_n^{\mathbb{Z}E}\Bigg[\;\bigotimes_{k=0}^{n-1} A_k^{(k-n)}\otimes \mathds{1}_E\Bigg]
		\right)
		\label{def_proofn_1}\\&\hskip-2cm = \omega_E\Bigg(
		\Theta_1 \bigg[
		A_0^{(0)}\otimes \Theta_{2,1}\Big[A_1^{(1)} \otimes  \ldots 
		\Theta_{n,n-1}\big[A_{n-1}^{(n-1)}\otimes \mathds{1}_E
		\big]
		\Big]
		\bigg]
		\Bigg)\,,\label{def_proofn_2}
	\end{align}
	where we set
	\begin{equation}
		\wt{\Theta}_n^{\mathbb{Z}E}:= \prod_{j=1}^{n} \left(\Theta_{j,j-1}\circ \sigma_S\otimes 	\mathrm{id}_E\right) \,,
	\end{equation}
	and $\sigma_S$ denotes the shift on $M_d^{\mathbb{Z}}(\mathbb{C})$. From~\eqref{def_proofn_1}, $\mathbb{T}_n$ is manifestly completely positive and unital. 
	Notice that $\mathbb{T}_1[A_0]=\omega_E\left(  \Theta_1\left(A_0^{(0)}\otimes \mathds{1}_E\right)    \right)= \Lambda_1[A_0]$. 
	Moreover, for sake of brevity, the positioning (upper indices) of system operators along the $S$-chain in~\eqref{def_proofn_2} has been chosen in accordance with how many times the shift has acted after the automorphism $\Theta$ made the $0$-th site system operators interact with the environment; indeed, 
	\begin{multline*}
		\Theta_{n,n-1}\circ (\sigma_S\otimes\mathrm{id}_E)\Big[A_{0}^{(-n)}\otimes\ldots\otimes A_{n-1}^{(-1)}\Big]\\:=A_0^{(-n+1)} \otimes \ldots \otimes \Theta_{n,n-1}\Big[A_{n-1}^{(0)}\otimes\mathds{1}_E\Big]\,.
	\end{multline*}
	Then, the action of the automorphism,
	$\Theta_{n,n-1}\Big[A_{n-1}^{(0)}\otimes\mathds{1}_E\Big]=\widetilde{A}^{(0)}\otimes B_E$, is such that
	\begin{align*}
		&\sigma_S\otimes\mathrm{id}_E\Big[A_{0}^{(-n+1)}\otimes\cdots\otimes\Theta_{n,n-1}\Big[A_{n-1}^{(0)}\otimes\mathds{1}_E\Big]\Big]\\&=
		A_{0}^{(-n+2)}\otimes\cdots\otimes A^{(0)}_{n-2}\otimes\widetilde{A}_{n-1}^{(1)}\otimes B_E\\
		& =
		A_{0}^{(-n+2)}\otimes\cdots\otimes A^{(0)}_{n-2}\otimes\Theta_{n,n-1}\Big[A_{n-1}^{(1)}\otimes\mathrm{id}_E\Big]\ .
	\end{align*}
	One thus gets
	\begin{multline*}
		\Big(\Theta_{n-1,n-2}\circ (\sigma_S\otimes\mathrm{id}_E)\Big) \\ \circ\Big(\Theta_{n,n-1}\circ (\sigma_S\otimes\mathrm{id}_E)\Big)
		\Big[A_{0}^{(-n)}\otimes\ldots\otimes A_{n-1}^{(-1)}\Big]\\\!=\!
		A_{0}^{(-n+2)}\otimes\cdots\otimes\Theta_{n-1,n-2}\Big[A^{(0)}_{n-2}\otimes\Theta_{n,n-1}\Big[A_{n-1}^{(1)}\otimes\mathrm{id}_E\Big]\Big] .
	\end{multline*}
	
		The elements of the coarse-grained density matrix with respect to the POVM~\eqref{specialOPU} read
	\begin{widetext}
	\begin{align}\nonumber
		\rho_S\left[\mathcal{F}^{(n+1)}\right]_{\bs{a},\bs{b}}
		&=\ \omega_S\otimes \omega_E\Big(F_{b_0 b_0'}^\dagger \otimes \mathds{1}_E\,\Theta_1\Big[F_{b_1 b_1'}^\dagger\otimes \mathds{1}_E\ldots \Theta_{n,n-1}\big[F_{b_nb_n'}^\dagger F_{a_{n}a_{n}'}^{\phantom{\dagger}}\otimes \mathds{1}_E\big] \ldots  F_{a_1 a_1'}\otimes \mathds{1}_E\Big]\, F_{a_0 a_0'}\otimes \mathds{1}_E\Big)\nonumber
		\\
		&\hskip-2cm= \ r_{a_n}\delta_{a_n b_n}r_{a_0'}\delta_{a_0' b_0'} 
		\left(\prod_{m=0}^{n-1}\sqrt{r_{a_m}r_{b_m}}\right)\Tr\Bigg(R_{a_0 b_0} \,\omega_E\bigg(\Theta_1\Big[R_{b_1 b_1'}^\dagger\otimes \mathds{1}_E \ldots \Theta_{n,n-1}\big[R_{a_n'b_n'}^\dagger \otimes \mathds{1}_E\big] \ldots  R_{a_1 a_1'}\otimes \mathds{1}_E\Big]\bigg)\Bigg)\,,\label{compare_n}
	\end{align}
\end{widetext}
	with multi-indices $\bs{a}=(a_0a'_0,\ldots a_na'_n)$, $\bs{b}=(b_0b'_0,\ldots,b_bb'_n)$ and where $R_{a b}=\ketbra{r_a}{r_b}$ are the matrix units corresponding to the eigenbasis of $\rho_S$; they satisfy
	$	R^\dag_{ab}=R_{ba}$, $ R_{ab}\,R_{cd}=\delta_{bc}\,R_{ad}$.
	We shall now obtain another expression for~\eqref{compare_n} in terms of the CPU map~\eqref{def_proofn_1}. 
	To this end, we split the proof in several steps.
	\par
	\noindent 
	\textbf{Step 1.}\quad Firstly, we introduce
	a state functional $\Omega_S^{\mathbb{Z}\mathbb{Z}}:M_d^{\mathbb{Z}}(\mathbb{C})\,\otimes \,M_d^{\mathbb{Z}}(\mathbb{C})\,\to\, \mathbb{C}$ on the doubled $S$-chain, whose local restrictions are the purified vector states corresponding to the tensor products of the system $S$ state; namely,
	\begin{gather}
		\Omega^{\mathbb{Z}\mathbb{Z}}_S\left(X^{[a,a+n]}\right)=\bra{\sqrt{\rho_S^{\otimes [1,n]}}} X^{[a,a+n]}\ket{\sqrt{\rho_S^{\otimes [1,n]}}}\ ,\label{purifiedtensor}\\ 
		\hskip+2.cm X^{[a,a+n]}\in M_d^{[a,a+n]}(\mathbb{C})\otimes M_d^{[a,a+n]}(\mathbb{C})\ ,\nonumber
		\nonumber\\ \label{aux2}
		\ket{\sqrt{\rho_S^{\otimes [1,n]}}}=\sum_{\bs{a}_{[1,n]} } \sqrt{r_{\bs{a}_{[1,n]}}}         \,\ket{r_{\bs{a}_{[1,n]}}\otimes r_{\bs{a}_{[1,n]}}},
	\end{gather}
	where $r_{\bs{a}_{[1,n]}}=\prod_{k=1}^n r_{a_k}$.   Consider then the compound dynamical system 
	\begin{gather}
		\label{supporting_dynsyst}
		\left(M_d^{\mathbb{Z}}(\mathbb{C})\otimes M_d^{\mathbb{Z}}(\mathbb{C})\otimes \mc{A}_E,\,\Omega^{\mathbb{Z}\mathbb{Z}}_S\otimes \omega_E, \,
		\wt{\Theta}_n^{\mathbb{Z}\mathbb{Z}E}
		\right), \\ \hbox{where}\qquad	\wt{\Theta}_n^{\mathbb{Z}\mathbb{Z}E}:=\prod_{k=1}^n \left( \Theta_{k,k-1}\otimes\mathrm{id}_{\mathbb{Z}} \circ \sigma_S\otimes \sigma_S \otimes \mathrm{id}_E\right).\nonumber
	\end{gather}
	Introducing the operators 
	\begin{align}
		\label{aux0}
		&Z^{[-n,-1]}_{LR}\otimes\mathds{1}_E\ ,\quad Z^{[-n,-1]}_{LR}:=Z^{[-n,-1]}_L\otimes Z^{[-n,-1]}_R\ ,\\
		\label{aux0b}
		&Z_L^{[-n,-1]}:=\bigotimes_{k=1}^nR_{a_k' b_k'}^{\dagger (k-n-1)}\ ,\quad
		Z_R^{[-n,-1]}:=\bigotimes_{l=0}^{n-1} R_{a_l b_l}^{\dagger (l-n)}\ ,
	\end{align}
	the definition in~\eqref{def_proofn_2} of the CPU map $\mathbb{T}_n$ yields
	\begin{multline}
		\label{eq:aux0}
		\Omega_S^{\mathbb{Z}\mathbb{Z}}\otimes \omega_E \Bigg( 	\wt{\Theta}_n^{\mathbb{Z}\mathbb{Z}E} \left[Z^{[-n,-1]}_{LR}\otimes\mathds{1}_E\right]\Bigg)
		\\=\Omega_S^{\mathbb{Z}\mathbb{Z}}\left(
		\mathbb{T}_n\left[Z_L^{[-n,-1]}\right]\otimes Z_R^{[0,n-1]}\right)\ .
	\end{multline}
	\noindent
	\textbf{Step 2.}\quad
	Since $Z^{[-n,-1]}_{LR}$  is supported by the double chain sites $-n\leq j\leq -1$, the action of the $n$-th power of the shift $\sigma_S\otimes\sigma_S$, makes the operator
	$\mathbb{T}_n\big[Z_L^{[-n,-1]}\big]\otimes Z_R^{[0,n-1]}$
	supported by the interval $[0,n-1]$.
	Then, in the Schrödinger picture, using the dual $\mathbb{T}^\ddag_n$, one can rewrite the expectation on the right hand side of~\eqref{eq:aux0} as 
	\begin{align}
		\nonumber
		&\Omega_S^{\mathbb{Z}\mathbb{Z}}\left(
		\mathbb{T}_n\left[Z_L^{[-n,-1]}\right]\otimes Z_R^{[0,n-1]}\right)=\\
		&=\Tr_{[-n,-1]}\Bigg( \mathbb{T}^\ddag_n \otimes \mathrm{id}_d^{\otimes n}\left[\ketbra{\sqrt{\rho_S^{\otimes[0,n-1]}}}\right]     \,  \Bigg.\nonumber\\
		& \hskip+4cm \Bigg. Z_R^{[-n,-1]}\otimes Z_R^{[0,n-1]}\Bigg)\,.
		\label{map_n}
	\end{align}
	On the other hand, the left hand side of~\eqref{eq:aux0} reads
		\begin{align*}
		&\Omega_S^{\mathbb{Z}\mathbb{Z}}\otimes \omega_E \Bigg( \widetilde{\Theta}^{\mathbb{Z}\mathbb{Z}E}_n \left[Z_{LR}^{[-n,-1]}\otimes \mathds{1}_E\right]\Bigg)\\&=\Omega_S^{\mathbb{Z}\mathbb{Z}}\otimes \omega_E \Bigg( \Theta_n^{\mathbb{Z}E} \left[
		Z_L^{[-n,-1]}\otimes \mathds{1}_E\right]\otimes Z_R^{[0,n-1]}\Bigg)
		\\
		&=\bra{\sqrt{\rho_S^{\otimes[0,n-1]}}}
		\omega_E \Bigg( \wt{\Theta}_n^{\mathbb{Z}E} \left[Z_L^{[ -n,-1]}\otimes\mathds{1}_E\right]\Bigg)
		\\&\hskip+4cm\otimes Z_R^{[0,n-1]}\, \ket
		{\sqrt{\rho_S^{\otimes[0,n-1]}}}
		.
	\end{align*}
	By explicitly using the spectral form of the vector state in~\eqref{aux2}, one gets
	\begin{align*}
		&\Omega_S^{\mathbb{Z}\mathbb{Z}}\otimes \omega_E \Bigg( \wt{\Theta}_n^{\mathbb{Z}\mathbb{Z}E}  \left[Z_{LR}^{[-n,-1]}\otimes \mathds{1}_E\right]\Bigg)=
		\\
		&=\sum_{\substack{\bs{i}_{[0,n-1]}\\ \bs{j}_{[0,n-1]}}} \prod_{m=0}^{n-1} \sqrt{r_{i_m} r_{j_m}}
		\\
		&\quad\times\Tr\Bigg( \bigotimes_{q=0}^{n-1} R_{i_q j_q}^{(q)}  \,
		\omega_E \bigg(
		\wt{\Theta}_n^{\mathbb{Z}E}  \Big[Z_L^{[-n,-1]}\otimes \mathds{1}_E
		\Big]
		\bigg)
		\Bigg)\\
		&\quad \times \Tr\left(
		\bigotimes_{p=0}^{n-1} R_{i_p j_p}^{(p)}\,Z_R^{[0,n-1]}\right)\, .
	\end{align*}
	Using the expression for $Z_R^{[0,n-1]}$ in~\eqref{aux0b} yields
	$\displaystyle \Tr\left(\bigotimes_{p=0}^{n-1} R_{i_p j_p}^{(p)}\,Z_R^{[0,n-1]}\right)=\prod_{p=0}^{n-1}\delta_{i_pa_p}\,\delta_{j_pb_p}$, 
	so that
	\begin{align}
		\nonumber
		&\Omega_S^{\mathbb{Z}\mathbb{Z}}\otimes \omega_E \Bigg( \wt{\Theta}_n^{\mathbb{Z}\mathbb{Z}E} \left(Z_{LR}^{[-n,-1]}\otimes \mathds{1}_E\right)\Bigg)\\
		&= \left(\prod_{m=0}^{n-1} \sqrt{r_{a_m} r_{b_m}}
		\right) \nonumber\\
		&\ \times \Tr_{[0,n-1]}\! \left(
		\bigotimes_{l=0}^{n-1} R_{a_l b_l}^{(l)}
		\omega_E\bigg(
		\Theta_1\Big[ R_{a_1',b_1'}^{\dagger (0)} \otimes\Theta_{2,1}\Big[R_{a_2',b_2'}^{\dagger (1)} \ldots 
			\right.
		\nonumber\\ &\left. \ \otimes \,\Theta_{n,n-1} \big[R_{a_{n}' b_{n}'}^{\dagger (n-1)}\otimes \mathds{1}_E\big] \Big]\Big]
		\bigg)
		\right)\ .\label{pre_rewrite}
	\end{align}
	\noindent
	\begin{widetext}
	\textbf{Step 3.}\quad 
	The trace over the local sub-algebras $M_d^{\otimes n}(\mathbb{C})$ in~\eqref{pre_rewrite} can be converted into a trace over 
	the single algebra $M_d(\mathbb{C})$ at time-step $0$; namely,
	\begin{align}	\label{rewrite}
		&\Omega_S^{\mathbb{Z}\mathbb{Z}}\otimes \omega_E \Bigg( \wt{\Theta}_n^{\mathbb{Z}\mathbb{Z}E} \left[Z_{LR}^{[-n,-1]}\otimes \mathds{1}_E\right]\Bigg)=\left(\prod_{m=0}^{n-1}\sqrt{r_{a_m}r_{b_m}}\right)	 \Tr\Bigg(R_{a_0 b_0}\omega_E\bigg(\Theta_1\Big[R_{b_1 b_1'}^\dagger\otimes \mathds{1}_E\,\Theta_{2,1}\, \Big[
		\cdots
		\nonumber\\
		&\cdots\Theta_{n-1,n-2}\Big[R^\dag_{b_{n-1}b'_{n-1}}\otimes\mathds{1}_E\,\Theta_{n,n-1}\big(R_{a_n' b_n'}^\dagger \otimes \mathds{1}_E\big) R_{a_{n-1} a_{n-1}'}\otimes \mathds{1}_E\Big]\cdots 
		\Big]
		R_{a_1 a_1'}\otimes \mathds{1}_E\Big]\bigg)\, \Bigg)\,.
		\nonumber
	\end{align}
	This a fact is a consequence of the following equality,
	\begin{align}
		& \Tr_{[0,n-1]}\!\left(
		\bigotimes_{l=0}^{n-1} R_{a_l b_l}^{(l)}
		\Theta_1\Big[ R_{a_1',b_1'}^{\dagger (0)} \otimes\Theta_{2,1}\Big[R_{a_2',b_2'}^{\dagger (1)} \ldots   \otimes \Theta_{n,n-1} \Big[R_{a_{n}' b_{n}'}^{\dagger (n-1)}\otimes E\Big] \Big]\Big]\right)
		\\  
		&=\Tr\Bigg(R_{a_0 b_0}\Theta_1\Big[R_{b_1 b_1'}^\dagger\otimes \mathds{1}_E\,\Theta_{2,1}\, \Big[ R^\dag_{b_2b'_2}\otimes\mathds{1}_E\,\Theta_{3,2}\Big[
		\cdots\nonumber
		\\
		&	\quad \cdots\,\Theta_{n-1,n-2}\Big[R^\dag_{b_{n-1}b'_{n-1}}\otimes\mathds{1}_E\, \Theta_{n,n-1}\Big[R_{a_n' b_n'}^\dagger \otimes E\Big]R_{a_{n-1} a_{n-1}'}\otimes \mathds{1}_E\Big]\cdots R_{a_2a_2'}\otimes\mathds{1}_E\Big] R_{a_1 a_1'}\otimes \mathds{1}_E\Big]\Big]\, \Bigg)\, ,   
		\label{pre-rewrite_1}
	\end{align}
	that holds for any environment operator $E$ and which can be proved by recursion. 
	Indeed, the equality is true for $n=1$; assume it holds for up to $n=k$ and consider its left hand side for $n=k+1$:
	\begin{equation}
		\label{aux_rewrite}
		\Tr_{[0,k]}\!\left(
		\bigotimes_{l=0}^{k} R_{a_l b_l}^{(l)}
		\Theta_1\Bigg[ R_{a_1',b_1'}^{\dagger (0)} \otimes\Theta_{2,1}\bigg[R_{a_2',b_2'}^{\dagger (1)} \ldots   \otimes \Theta_{k,k-1}\Big[R_{a_{k}' b_{k}'}^{\dagger (k)}\otimes\Theta_{k+1,k}\Big[R_{a_{k+1}' b_{k+1}'}^{\dagger (k)}\otimes E\Big] \Big]\bigg]\Bigg]\right) \, .   
	\end{equation}

	Note that, expanding with respect to the system matrix units, one can always cast the action of the automorphism $\Theta_{k+1,k}$ on $M_d^{(0)}(\mathbb{C})\otimes\mathcal{A}_E$ as
	\begin{equation}
		\label{aux_aut_1}
		\Theta_{k+1,k} \Big[R^\dagger_{a_{k+1}' b_{k+1}'}\otimes E\Big]=\sum_{ij}R_{ij}\otimes E_{ij}^{a_{k+1}'b_{k+1}'}\, ,
	\end{equation}
	with suitable environment operators $E_{ij}^{a_{k+1}'b_{k+1}'}$ and then shift $k$ times to the right the system operators, so that:
	\begin{equation}
		\label{aux_aut_2}
		\Theta_{k+1,k} \Big[R^{\dag (k)}_{a_{k+1}' b_{k+1}'}\otimes E\Big]=\sum_{ij}R^{(k)}_{ij}\otimes E_{ij}^{a_{k+1}'b_{k+1}'}\, .
	\end{equation}
	Tracing over the site $k$ then yields
	$
	\Tr_{\{k\}}\Bigg(R^{(k)}_{a_k b_k}\,R^{\dag (k)}_{a_{k+1}' b_{k+1}'}\otimes E\Bigg)=E_{b_k a_k}^{a_{k+1}'b_{k+1}'}\, .
	$
	Plugging this result in the expression~\eqref{aux_rewrite}, it becomes 
	\begin{equation}
		\label{aux_rewrite_1}
		\Tr_{[0,k-1]}\!\left(
		\bigotimes_{l=0}^{k} R_{a_l b_l}^{(l)}
		\Theta_1\Big[ R_{a_1',b_1'}^{\dagger (0)} \otimes\Theta_{2,1}\Big[R_{a_2',b_2'}^{\dagger (1)} \ldots   \otimes \Theta_{k,k-1}\Big[R_{a_{k}' b_{k}'}^{\dagger (k)}\otimes E_{b_k a_k}^{a_{k+1}'b_{k+1}'}\Big] \Big]\Big]\right) \, .   
	\end{equation}
	By the recursion assumption, this expression equals
	\begin{align}
		&\Tr\Bigg(R_{a_0 b_0}\Theta_1\bigg[R_{b_1 b_1'}^\dagger\otimes \mathds{1}_E\,\Theta_{2,1}\, \Big[
		\cdots
		\label{aux_rewrite_3}
		\Theta_{k-1,k-2}\Big[R^\dag_{b_{k-1}b'_{k-1}}\otimes\mathds{1}_E\Theta_{k,k-1}\Big[R_{a_k' b_k'}^\dagger \otimes E_{b_k a_k}^{a_{k+1}'b_{k+1}'}\Big] R_{a_{k-1} a_{k-1}'}\otimes \mathds{1}_E\Big]\cdots 
		\Big] R_{a_1 a_1'}\otimes \mathds{1}_E\bigg]\Bigg)\, . 
	\end{align}
	The proof of equality~\eqref{pre-rewrite_1} is concluded by using~\eqref{aux_aut_1} with $E=\mathds{1}_E$ and by observing that
	\begin{equation}
		\label{aux_aut_3}
		R_{a_k' b_k'}^\dagger \otimes E_{b_k a_k}^{a_{k+1}'b_{k+1}'}= R^\dag_{b_kb'_k}\otimes\mathds{1}_E\,\Theta_{k+1,k}\Big[R^\dag_{a'_{k+1}b'_{k+1}}\otimes\mathds{1}_E\Big]R_{a_ka'_k}\otimes\mathds{1}_E\ .
	\end{equation}
	
	\noindent
	\textbf{Step 4.}\quad Putting together the equalities~\eqref{eq:aux0},~\eqref{map_n} and~\eqref{rewrite} one gets
	\begin{align}
		\nonumber
		&\left(\prod_{m=0}^{n-1}\sqrt{r_{a_m}r_{b_m}}\right)\,\Tr\Bigg(R_{a_0 b_0}\omega_E\bigg(\Theta_1\Big[R_{b_1 b_1'}^\dagger\otimes \mathds{1}_E\,\Theta_{2,1}\, \Big[ R^\dag_{b_2b'_2}\otimes\mathds{1}_E\Theta_{3,2}\Big[\cdots\\
		\nonumber    
		&
		\cdots\Theta_{n-1,n-2}\Big[R^\dag_{b_{n-1}b'_{n-1}}\otimes\mathds{1}_E\,\Theta_{n,n-1}\Big[R_{a_n' b_n'}^\dagger \otimes \mathds{1}_E\Big] R_{a_{n-1} a_{n-1}'}\otimes \mathds{1}_E\Big]\cdots R_{a_2a_2'}\otimes\mathds{1}_E\Big] R_{a_1 a_1'}\otimes \mathds{1}_E\Big]\bigg)\, \Bigg)\\
		&
		=\Tr_{[-n,-1]}\Bigg( \mathbb{T}_n^\ddag  \otimes \mathrm{id}_d^{\otimes n}\left[\ketbra{\sqrt{\rho_S^{\otimes[0,n-1]}}}\right]     \, Z_R^{[-n,-1]}\otimes Z_R^{[0,n-1]}\Bigg)\ .    
		\label{eq:aux_1}
	\end{align}
	Plugging this relation in~\eqref{compare_n}, one gets
	\begin{multline}
		\rho_S\left[\mathcal{F}^{(n+1)}\right]_{\bs{a},\bs{b}}=r_{a_0'}\delta_{a_0' b_0'}\,r_{a_n}\delta_{a_n b_n} \, \Tr_{\,[-n,-1]}\Bigg( \bigotimes_{k=1}^n R_{a_k' b_k'}^{\dagger(k-n-1)}\otimes \bigotimes_{l=0}^{n-1} R_{a_l b_l}^{\dagger (l-n)} \,
		\mathbb{T}_n^\ddag \otimes \mathrm{id}_d^{\otimes n}\left[\ketbra{\sqrt{\rho_S^{\otimes[0,n-1]}}}\right]  \Bigg) .\label{pre_final_n}
	\end{multline}
	Finally, a more convenient spectrally equivalent expression for the coarse-grained matrix $\rho_S\left[\mc{F}^{(n+1)}\right]$ can be obtained by unitarily swapping factors in the following way:
	\begin{equation}
		\rho_S\left[\mc{F}^{(n+1)}\right]=\sum_{\bs{a} \,\bs{b}}        \ \rho_S\left[\mc{F}^{(n+1)}\right]_{\bs{a},\bs{b}}  \
		R_{a_0' b_0'} \otimes R_{a_n b_n} \otimes\,\bigotimes_{k=1}^n R_{a_k' b_k'}^{(k-n-1)}\otimes \bigotimes_{l=0}^{n-1} R_{a_l b_l}^{(l-n)} \,,
	\end{equation}
	so that, by substituting~\eqref{pre_final_n} into the latter, one obtains 
	$$
	\rho_S\left[\mc{F}^{(n+1)}\right]\,=\,\rho_S\otimes\rho_S \otimes \left(\mathbb{T}_n^\ddag \otimes \mathrm{id}_d^{\otimes n}\left[\ketbra{\sqrt{\rho_S^{\otimes n}}}\right]\right).
	$$

	\section{Proof of Proposition~\ref{prop:QRprop_NEW}.}\label{app:propQRnew}
	We need to prove that 
	$ \mathrm{QR} \iff \mathbb{T}_n=\bigotimes_{k=1}^n \Lambda_{k,k-1}\,.$
	Consider generic operators $\{A_k\}_{k=0}^n$ and $\{B_k\}_{k=0}^n$ in $\Md{d}$
	and evaluate
	\begin{align}
		&\sum_{\substack{\bs{i}_{[1,n-1]}\,,\ \bs{j}_{[1,n-1]}}} \Tr_{[1,n-1]} \!\Bigg(\ \bigotimes_{k=1}^{n-1}R_{i_kj_k}^{\dagger(k)} \ \mathbb{T}_n \Big[B_1^\dagger\, R_{i_1 j_1}\,A_1\otimes B_2^\dagger\, R_{i_2 j_2}\,A_2 \otimes 
		 \ldots \otimes B_{n-1}^\dagger\, R_{i_{n-1} j_{n-1}}\,A_{n-1}\otimes B_n^\dagger A_n\Big]\Bigg)\label{tn_togeneral}
		\\
		&=\sum_{\substack{\bs{i}_{[1,n-1]}\,,\ \bs{j}_{[1,n-1]}}} \Tr_{[1,n-1]}\!\Bigg(\ \bigotimes_{k=1}^{n-1} R_{i_k j_k}^{\dagger(k)}    \ \omega_E\bigg(\Theta_1\Big[B_1^\dagger R_{i_1 j_1}^{(0)} A_1\otimes 
		\Theta_{2,1}\Big[B_2^\dagger R_{i_2 j_2}^{(1)} A_2 \ldots
		\nonumber\\
		&\hskip+3.5cm \ldots \Theta_{n-1,n-2}\big[B_{n-1}^\dagger R_{i_{n-1}j_{n-1}}^{(n-2)} A_{n-1}\otimes  \Theta_{n,n-1}[B_n^\dagger A_n^{(n-1)}\otimes \mathds{1}_E]\big] \Big]\Big]  \bigg)\Bigg)\nonumber\\
		&=\omega_E\Bigg( \Theta_1\bigg[B_1^\dagger\otimes\mathds{1}_E\, \Theta_{2,1}\Big[B_2^\dagger \otimes \mathds{1}_E \ldots \Theta_{n,n-1}\left[B_{n}^{\dagger} A_{n}^{\phantom{\dagger}}\otimes \mathds{1}_E \right] \ldots  A_2 \otimes \mathds{1}_E\Big]A_{1}\otimes\mathds{1}_E\bigg]\Bigg)\,. \label{tn_togeneral2}
	\end{align}
	Similarly to~\eqref{pre-rewrite_1},
	we used the fact that, for a generic operator $Z_{SE}^{(k)}=\sum_{ij} R_{ij}^{(k)}\otimes Z_{ab}^{E}$, with the system part localized in $M_d^{(k)}(\mathbb{C})$,
	\begin{align*}
		\sum_{i_k j_k} B_{k}^\dagger R_{i_k j_k}^{(k-1)} A_k\otimes \Tr_{k}\left(R_{i_k j_k}^{\dagger (k)}  \, Z_{SE}^{(k)}\right)&= \sum_{i_k j_k} B_{k}^\dagger R_{i_k j_k}^{(k-1)} A_k \otimes Z^E_{i_k j_k}\\
		&\hskip-3.5cm =B_k^\dagger \otimes   \mathds{1}_E\,\left(\, \sum_{i_k j_k} R_{i_k j_k}^{(k-1)}\otimes Z^E_{i_k j_k}\right)\, A_k \otimes \mathds{1}_E=B_k^\dagger \otimes  \mathds{1}_E\; Z_{SE}^{(k-1)} \; A_k \otimes  \mathds{1}_E\,.
	\end{align*}
	In particular, if $\mathbb{T}_n=\bigotimes_k \Lambda_{k,k-1}$, \eqref{tn_togeneral} is also equal to
	\begin{align}
		\sum_{\substack{\bs{i}_{[1,n-1]} ,\ \bs{j}_{[1,n-1]}}} &\Tr_{[1,n-1]}\!\Bigg(\bigotimes_{k=1}^{n-1} R_{i_k j_k}^{\dagger (k)}\
		\Lambda_1 \left[B_1^\dagger R_{i_1 j_1}^{(0)} A_1\right] \otimes \Lambda_{2,1} \left[B_2^\dagger R_{i_2 j_2}^{(1)} A_2\right] \otimes \ldots
		\nonumber\\ & \hskip+4cm\ldots \otimes \Lambda_{n-1,n-2}\left[B_{n-1}^\dagger R_{i_{n-1} j_{n-1}}^{(n-2)} A_{n-1}\right]\otimes \Lambda_{n,n-1}\left[B_{n}^\dagger  A_n^{(n-1)}\right]
		\Bigg)\label{tn_QR0}\\
		&\hskip-1cm=
		\Lambda_1\Bigg[
		B_1^\dagger\, \Lambda_{2,1} \bigg[B_2^\dagger \ldots \Lambda_{n-1,n-2}\Big[B_{n-1}^\dagger \,\Lambda_{n,n-1}\big[B_n^\dagger A_n\big] \,A_{n-1}\Big]\ldots A_2\bigg] A_1\Bigg]\,,\label{tn_QR}
	\end{align}
	so that, by equating~\eqref{tn_togeneral2} and~\eqref{tn_QR}, one obtains the QR condition.
	Conversely, choose $B_k=R_{b_k b_k'}^\dagger$ and $A_k=R_{a_k a_k'}, k=1,\dots,n$, in~\eqref{tn_togeneral2} and~\eqref{tn_togeneral}. Then, set  $a_n=b_n$ and consider
	\begin{align}
		& \Tr\Bigg(R_{a_0 b_0} \omega_E\bigg(\Theta_1\Big[R_{b_1 b_1'}^\dagger\otimes \mathds{1}_E\,\Theta_{2,1}\, \Big[ R^\dag_{b_2b'_2}\otimes\mathds{1}_E
		\cdots\Theta_{n,n-1}\Big[R_{a_n b_n'}^\dagger R_{a_n a_n'} \otimes \mathds{1}_E\Big] \cdots 
		 R_{a_2a_2'}\otimes\mathds{1}_E\Big] R_{a_1 a_1'}\otimes \mathds{1}_E\Big]\bigg)\Bigg) \nonumber
		\\
		& \hskip+8cm = \Tr_{[0,n-1]} \!\left(\bigotimes_{k=0}^{n-1}R_{b_k  a_k}^{\dagger(k)} \ \mathbb{T}_n\left[ \bigotimes_{l=1}^n R_{b_l' a_l'}\right] \right) \,, \label{proofQR1_comparison}
	\end{align}
	and, choosing the same operators in~\eqref{tn_QR} and~\eqref{tn_QR0}, 
	\begin{multline}
		\Tr\Bigg(R_{a_0 b_0}\Lambda_1\Big[R_{b_1 b_1'}^\dagger \Lambda_{2,1}\, \Big[ R^\dag_{b_2b'_2}
		\cdots\Lambda_{n,n-1}\Big[R_{a_n b_n'}^\dagger R_{a_n a_n'} \Big] \cdots R_{a_2a_2'}\Big] R_{a_1 a_1'}\Big]\Bigg) 
		\\
		= \Tr_{[0,n-1]} \!\left(\bigotimes_{k=0}^{n-1}R_{b_k  a_k}^{\dagger(k)} \ \bigotimes_{j=1}^{n} \Lambda_{k,k-1}\left[ \bigotimes_{l=1}^n R_{b_l' a_l'}\right] \right) \,. \label{proofQR2_comparison}
	\end{multline}
	If QR holds,~\eqref{proofQR1_comparison} and~\eqref{proofQR2_comparison} are equal. This implies that $\mathbb{T}_n= \bigotimes_{j=1}^{n} \Lambda_{k,k-1}$.
	\qed
	
	\section{Proof of Proposition~\ref{prop:QRsg}.}
	\label{app:QRsg}
	We need to prove that
		$	\Lambda_{k,k-1}=\Lambda_{1}=:\Lambda$, $\forall\, k\ge1\,$.
		We proceed by induction and first 
		show it for $n=2$. Suppose that QR holds and consider again the special POVM $\mc{F}$ defined in~\eqref{specialOPU}, with associated CPU map $\mathbb{F}[X]=\sum_{ij} r_i R_{ji} X R_{ij}=\Tr(\rho_SX) \mathds{1}_d $. Accordingly, for $n=2$, one has
		\begin{align}\label{rhoF3_sg_noQR}
			\rho_S\left[\mathcal{F}^{(3)}\right]&=\omega_S\otimes\omega_E\left(F_{a_0a'_0}^{\dagger}\otimes \mathds{1}_E\,\Theta_{1}\left[F^{\dagger}_{b_1b_1'}\otimes \mathds{1}_E
			\,\Theta_{2,1}\left[ F_{b_{2}b_{2}'}^{\dagger} F_{a_{2}a'_{2} }\otimes\mathds{1}_E\right]\,F_{a_1a_1'}\otimes\mathds{1}_E\right]\,F_{a_0a'_0}\otimes \mathds{1}_E\right)
			\\ \label{rhoF3_sg_yesQR}
			&=\omega_S\left(
			F_{a_0a'_0}^{\dagger}\Lambda_1\left[F^{\dagger}_{b_1b_1'}\,\Lambda_{2,1} \left[F_{b_{2}b_{2}'}^{\dagger} F_{a_{2}a'_{2}}\right]\,F_{a_1a_1'}\right]\,F_{a_0a'_0}
			\right);
		\end{align}
		the latter is a density matrix acting on $\mathbb{C}^{d^2}\otimes\mathbb{C}^{d^2}\otimes\mathbb{C}^{d^2}$. 
		Consider then the marginal density matrix obtained from by tracing over the first Hilbert space from expression~\eqref{rhoF3_sg_noQR}. This amounts to setting $a_0=b_0$, $a'_0=b'_0$ and to summing over $a_0$ and $a'_0$.
		Due to invariance of the state under both $\Theta$ and the POVM 
		\begin{align}\nonumber
			\left(\Tr_{I} \rho\big[\mc{F}^{(3)}\big]\right)_{a_1 a_1', b_1 b_1'}&=
			\sum_{a_0 a'_0}\omega_S\otimes\omega_E\left(F_{a_0a'_0}^{\dagger}\otimes \mathds{1}_E\,\Theta_{1}\left[F^{\dagger}_{b_1b_1'}\otimes \mathds{1}_E
			\,\Theta_{2,1}\left[ F_{b_{2}b_{2}'}^{\dagger} F_{a_{2}a'_{2}}\otimes \mathds{1}_E\right]\,F_{a_1a_1'}\otimes \mathds{1}_E\right]\,F_{a_0a'_0}\otimes \mathds{1}_E\right)\nonumber\\
			&=\omega_S\otimes\omega_E\left(
			F^{\dagger}_{b_1b_1'}\otimes \mathds{1}_E
			\,\Theta_{2,1}\left[ F_{b_{2}b_{2}'}^{\dagger} F_{a_{2}a'_{2}}\otimes \mathds{1}_E\right]\,F_{a_1a_1'}\otimes \mathds{1}_E
			\right)\nonumber\\&=\omega_S\otimes\omega_E\Big(F^{\dagger}_{b_1b_1'}\otimes \mathds{1}_E
			\,\Theta_{1}\left[ F_{b_{2}b_{2}'}^{\dagger} F_{a_{2}a'_{2}}\otimes \mathds{1}_E\right]\,F_{a_1a_1'}\otimes \mathds{1}_E\Big)\,,
			\nonumber
		\end{align}
		where, in the last equality, we used the group assumption: $\Theta_{2,1}= \Theta_{1}=\Theta$. Thus, we have  $\Tr_{I}\left(\rho[\mc{F}^{(3)}]\right)=\rho[\mc{F}^{(2)}]$.
		Taking instead the partial trace over the first Hilbert space from expression~\eqref{rhoF3_sg_yesQR}, we have
		\begin{align*}
			\nonumber
			&\left(\Tr_{I} \rho\big[\mc{F}^{(3)}\big]\right)_{a_1 a_1', b_1 b_1'}=
			\sum_{a_0,a'_0}\Tr\left(\rho_S\, F_{a_0,a'_0}^{\dagger}\Lambda_1 \left[F^{\dagger}_{b_1,b_1'}\,\Lambda_{2,1} \left[F_{b_{2}b_{2}'}^{\dagger} F_{a_{2}a'_{2}}\right]\,F_{a_1a_1'}\right]\,F_{a_0a'_0}\right)\\
			\nonumber
			&=\Tr\left(\Lambda_1^\ddag \circ\mathbb{F}^\ddag[\rho_S] \,
			F^{\dagger}_{b_1b_1'}\,\Lambda_{2,1} \left[F^\dagger_{b_2b_2'} F_{a_2a'_2}\right]\, F_{a_1a_1'}	\right)		=\Tr\left(\rho_S\,
			F^{\dagger}_{b_1,b_1'}\,\Lambda_{2,1}\left[F^\dagger_{b_2b_2'} F_{a_2a'_2}\right]\, F_{a_1a_1'}\right),
		\end{align*}
		where, in the last equality, we used invariance of $\rho_S$ under $\mathbb{F}^\ddag$ and $\Lambda_1^\ddag$. \sloppy
		Then,  $\Tr_I\left(\rho\left[\mc{F}^{(3)}\right]\right)=\rho\left[\mc{F}^{(2)}\right]$ yields
		\begin{equation*}
			\Tr\left(\rho_S\, F^{\dagger}_{b_1b_1'}\,\Lambda_{2,1} \left[F_{b_{2}b_{2}'}^{\dagger} F_{a_{2}a'_{2}}\right]\,F_{a_1a_1'}\,\right)=\Tr\left(
			\rho_S\, F^{\dagger}_{b_1b_1'}\,\Lambda_1\left[F_{b_{2}b_{2}'}^{\dagger} F_{a_{2}a'_{2}}\right]\,F_{a_1a_1'}\,\right)\,.
		\end{equation*}
		Taking $a_2=b_2$, $a_1'=b_1'$, yields, in particular
		$$
			r_{a_2} r_{a_1'}	\sqrt{r_{a_1} r_{b_1}}  \bra*{r_{a_2'}}\Lambda_{2,1}^\ddag\big[\ketbra{r_{a_1}}{r_{b_1}}\big] \ket*{r_{b_2'}}=r_{a_2} r_{a_1'}	\sqrt{r_{a_1} r_{b_1}} \bra*{r_{a_2'}}\Lambda_{1}^\ddag\big[\ketbra{r_{a_1}}{r_{b_1}}\big] \ket*{r_{b_2'}},
		$$
		which implies $\Lambda_{2,1}=\Lambda_1$. Now, suppose $\Lambda_{k,k-1}=\Lambda_{1}=:\Lambda$ 
		holds for $k\le n-1$.
Choose again the POVM $\mc{F}$ as in~\eqref{specialOPU}.
Due to the assumed invariance of the $\omega_S\otimes\omega_E$ under both $\Theta$ and the chosen POVM, the state on the half-spin chain defined through the coarse-grained density matrices $\rho_S\big[\mc{F}^{(n+1)}\big], \, n\ge1 $ is invariant under the shift to the right.
On the other hand, explicit evaluation of $\rho\left[\mc{F}^{(n+1)}\right]$ yields
\begin{multline}    \label{second_rhoLambda}
	\rho\big[\mathcal{F}^{(n+1)}\big]=\sum_{\substack{a_0 b_0 \ldots  a_n  b_n \\ a_0' b_0' \ldots  a_n'  b_n'} }	\delta_{b_{n} a_n}    r_{a_{n}} \delta_{b_0' a_0'} r_{a_0'}  \prod_{k=1}^{n} \sqrt{r_{a_{k-1}}r_{b_{k-1}}}
	\bra*{r_{a_{k}'}}\Lambda^\ddag_{k,k-1}\left[\ketbra*{r_{a_{k-1}}}{r_{b_{k-1}}}\right]\ket*{r_{b_{k}'}}
	\\ \ketbra{r_{a_0} r_{a_0'} \ldots r_{a_n} r_{a_n'} }{r_{b_0} r_{b_0'} \ldots r_{b_n} r_{b_n'}}\,.
\end{multline}
Shift invariance means that
$\Tr_{I}{\rho\left[\mathcal{F}^{(n+1)}\right]}=\rho\left[\mathcal{F}^{(n)}\right]$. Here, $\Tr_{I}$ denotes the trace on the first two tensor factors in~\eqref{second_rhoLambda}, this amounts to setting $a_0=b_0$ and $a_0'=b_0'$ and then summing over $a_0, a_0'$. Consider thus the entries

\begin{align*}
	\Tr_{I}\big(\rho\big[\mathcal{F}^{(n+1)}\big]\big)_{a_1 a_1' a_n a_n' \ldots  b_1 a_1' a_n b_n'}&=
	r_{a_{n}}  \, \sum_{a_0'} r_{a_0'}  \bra{r_{a_{1}'}}\Lambda_{1}^\ddag\left[\rho_S\right]\ket{r_{b_{1}'}}  
	 \prod_{k=2}^n \sqrt{r_{a_{k-1}}r_{b_{k-1}}} \bra*{r_{a_{k}'}}\Lambda^		\ddag_{k,k-1}\left[\ketbra*{r_{a_{k-1}}}{r_{b_{k-1}}}\right]\ket*{r_{b_{k}'}} \\
	& \hskip-4cm =r_{a_{n}}  r_{a_1'}    \prod_{k=2}^{n-1} \sqrt{r_{a_{k-1}}r_{b_{k-1}}} \bra*{r_{a_{k}'}}\Lambda^\ddag\left[\ketbra*{r_{a_{k-1}}}{r_{b_{k-1}}}\right]\ket*{r_{b_{k}'}}   \bra*{r_{a_{n}'}}\Lambda^\ddag_{n,n-1}\left[\ketbra*{r_{a_{n-1}}}{r_{b_{n-1}}}\right]\ket*{r_{b_{n}'}} \,.
\end{align*}
The latter must be equal to 
\begin{align*}
	\rho\big[\mathcal{F}^{(n)}\big]_{a_1 a_1' a_n a_n' \ldots  b_1 a_1' a_n b_n'}=
	r_{a_{n}} 
	r_{a_1'}    \prod_{k=2}^{n} \sqrt{r_{a_{k-1}}r_{b_{k-1}}} \bra*{r_{a_{k}'}}\Lambda^\ddag\left[\ketbra*{r_{a_{k-1}}}{r_{b_{k-1}}}\right]\ket*{r_{b_{k}'}} \,,
\end{align*}
which  yields $
\bra*{r_{a_{n}'}}\Lambda^\ddag_{n,n-1}\left[\ketbra*{r_{a_{n-1}}}{r_{b_{n-1}}}\right]\ket*{r_{b_{n}'}}=\bra*{r_{a_{n}'}}\Lambda^\ddag\left[\ketbra*{r_{a_{n-1}}}{r_{b_{n-1}}}\right]\ket*{r_{b_{n}'}}
$
and, consequently, $\Lambda_{n,n-1}=\Lambda$.
\qed


	\section{Proof of Corollary~\ref{cor:partition}.}
	\label{proof:corollary_part}
	Consider again the dynamical system \eqref{supporting_dynsyst} as in the Proof of~Proposition~\ref{thm:partitionF_general}, now for a group of automorphisms $\Theta_n=\Theta^n$.
	Note that setting $A_{n-1}=\mathds{1}_d$ in~\eqref{def_proofn_2},
	\begin{equation}
		\mathbb{T}_n[A_0\otimes \ldots\otimes A_{n-2}\otimes \mathds{1}_d]= \mathbb{T}_{n-1}[A_0\otimes \ldots\otimes A_{n-2}]\otimes \mathds{1}_d\;,
	\end{equation}
	yields
	\begin{align*}
		&\Omega_S^{\mathbb{Z}\mathbb{Z}}\otimes \omega_E\left(\mathbb{T}_n\otimes \mathrm{id}_d^{\otimes n} [A_0\otimes \ldots \otimes A_{n-2}\otimes \mathds{1}_d\otimes A_0'\otimes \ldots \otimes A_{n-2}'\otimes \mathds{1}_d ]
		\right)\\
		&=\Tr_{[-n,-2]}\left(\Tr_{\{n-1\}} \left(
		\mathbb{T}_{n}^\ddag\otimes \mathrm{id}_d^{\otimes n} \left[\ketbra{\sqrt{\rho_S^{[0,n-1]}}}\right]\right) \bigotimes_{k=0}^{n-2} A_k \bigotimes_{l=0}^{n-2} A_l'
		\right) \\
&=\Tr_{[-n,-2]}\left(
		\mathbb{T}^\ddag_{n-1}\otimes \mathrm{id}_d^{\otimes n-1} \left[\ketbra{\sqrt{\rho_S^{[0,n-2]}}}\right] \bigotimes_{k=0}^{n-2} A_k \bigotimes_{l=0}^{n-2} A_l' \right)
	\end{align*}
	from which
	\begin{equation}\label{tnid_comp}
		\Tr_{\{-1\}}\left(\mathbb{T}^\ddag_{n}\otimes \mathrm{id}_d^{\otimes n} \left[\ketbra{\sqrt{\rho_S^{[0,n-1]}}}\right]\right)=\mathbb{T}^\ddag_{n-1}\otimes \mathrm{id}_d^{\otimes n-1} \left[\ketbra{\sqrt{\rho_S^{[0,n-2]}}}\right].
	\end{equation}
	On the other hand, one also notices that
	\begin{align*}
		\Tr_{\{0\}}\!\Bigg(\rho_S\,
		\mathbb{T}_{n}\left[\mathds{1}_d\otimes \bigotimes_{k=1}^{n-1} A_k\right] \Bigg)&=\omega_S
		\otimes\omega_E\Bigg(
		\Theta \bigg[
		\mathds{1}_d^{(0)} \otimes \Theta\Big[A_1^{(1)} \otimes  \ldots \Theta\big[A_{n-1}^{(n-1)}\otimes \mathds{1}_E
		\big]
		\Big]
		\bigg]
		\Bigg)\\
		&\hskip-1.5cm=\omega_E\bigg( \Theta\Big[A_1^{(1)} \otimes  \ldots \Theta\big[A_{n-1}^{(n-1)}\otimes \mathds{1}_E
		\big]
		\Big]
		\bigg)=\mathbb{T}_{n-1}[A_1\otimes \ldots\otimes A_{n-1}]\,.
	\end{align*}
	Hence,
	\begin{align*}
		&\Omega_S^{\mathbb{Z}\mathbb{Z}}\otimes \omega_E\left(\mathbb{T}_n\otimes \mathrm{id}_d^{\otimes n} [\mathds{1}_d\otimes A_1 \ldots \otimes A_{n-1}\otimes \mathds{1}_d\otimes A_1'\otimes \ldots \otimes A_{n-1}' ]
		\right)\\
		&=\Tr_{[1,n-1]}\!\left(\ketbra{\sqrt{\rho_S^{[1,n-1]}}}\Tr_{\{0\}}\!\Bigg(\rho_S\,
		\mathbb{T}_{n}\left[\mathds{1}_d\otimes \bigotimes_{k=1}^{n-1} A_k\right] \Bigg) \otimes  \bigotimes_{l=1}^{n-1} A_l'
		\right)\\
		&=\Tr_{[1,n-1]}\!\left(\ketbra{\sqrt{\rho_S^{[1,n-1]}}}
		\mathbb{T}_{n-1}\otimes \mathrm{id}_d^{\otimes n-1}[A_1\otimes \ldots \otimes A_{n-1} \otimes A_1'\otimes \ldots\otimes A_{n-1}']
		\right)\\
		&=\Tr_{[-n+1,-1]}\left(\mathbb{T}_{n-1}^\ddag \otimes \mathrm{id}_d^{\otimes n-1}\left[\ketbra{\sqrt{\rho_S^{[1,n-1]}}}\right]
		A_1\otimes \ldots \otimes A_{n-1} \otimes A_1'\otimes \ldots\otimes A_{n-1}'
		\right)\,,
	\end{align*}
	from which one has:
	\begin{equation}\label{tnid_staz}
		\Tr_{\{-n\}}\!\left(\mathbb{T}^\ddag_{n} \otimes \mathrm{id}_d^{\otimes n}\left[\ketbra{\sqrt{\rho_S^{\otimes [0,n-1]}}}\right]\right)=\mathbb{T}^\ddag_{n-1} \otimes\mathrm{id}_d^{\otimes n-1}\left[\ketbra{\sqrt{\rho_S^{\otimes [1,n-1]}}}\right]\,.
	\end{equation}
	We can now prove~\eqref{Lindblad_criterion}. By exploiting subadditivity of the von Neumann entropy along with~\eqref{tnid_comp} and~\eqref{tnid_staz}, one has:
	\begin{align}
		 S\left(\rho\big[\mc{F}^{(n+1)}\big]\right)&=  2\,S(\rho_S) 
		+
		 S\left(\mathbb{T}_n^\ddag \otimes\mathrm{id}_d^{\otimes n}\left[\ketbra{\sqrt{\rho_S^{\otimes n}}}\right] \right)\nonumber\\
		& \le 2\,S(\rho_S) + n\, S\left(\mathbb{T}_1^\ddag \otimes\mathrm{id}_d\left[\ketbra{\sqrt{\rho_S}}\right] \right) \label{ineq_compare_n}
		\\
		&=2\,S(\rho_S)+ n \,S\left(\Lambda^\ddag  \otimes \mathrm{id}_d\left[\ketbra{\sqrt{\rho_S}}\right]\right).\nonumber
	\end{align}
	Hence, by dividing both sides of the inequality by $n$ and taking the limit,
	\begin{equation}\label{bound_F_n}
		\lim_n 
		\frac{1}{n+1} S\big(\rho_S\left[\mc{F}^{(n+1)}\right]\big)  \le S\left(\Lambda^\ddag  \otimes \mathrm{id}_d\left[\ketbra{\sqrt{\rho_S}}\right]\right).
	\end{equation}
	Equality is achieved in~\eqref{ineq_compare_n}, and consequently in~\eqref{bound_F_n}, if and only if
		$\mathbb{T}_n =\bigotimes_{k=1}^n \mathbb{T}_1 =\bigotimes_{k=1}^n \Lambda \,,$
	corresponding to the QR regime by Propositions~\ref{prop:QRprop_NEW} and~\ref{prop:QRsg}.
	\qed
	
\end{widetext}
		\section{Proof of Proposition~\ref{prop:model_Tn}.}
	\label{app:model_Tn}
	From its definition~\eqref{thm1_1} in Proposition~\ref{thm:partitionF_general}, the map $\mathbb{T}^\ddag_n$ reads
	\begin{align*}		&\mathbb{T}_n\left[\bigotimes_{k=0}^{n-1} A_k^{(k-n)} \right]\\&=\omega_E\Bigg(\big(\Theta\circ\sigma_S\otimes\mathrm{id}_E\big)^n\left[\bigotimes_{k=0}^{n-1} A_{k}^{(k-n)}\otimes \mathds{1}_E \right]\Bigg)\, ,
	\end{align*}
	with $ \Theta=\mathrm{id}_S \otimes \sigma_E\circ \Phi$.	Note that in the one-step automorphism $\Theta\circ\sigma_S\otimes\mathrm{id}_E$ the shift appears twice: 
	the right-most one, $\sigma_S$, acts on the chain $M_d^{\mathbb{Z}}(\mathbb{C})$ of infinite copies of the open system (as in Proposition~\ref{thm:partitionF_general}), while
	$\sigma_E$ is the collisional shift on the spin chain environment.
	Then, let us first evaluate
	\begin{multline*}\nonumber
		\big(\Theta\circ\sigma_S\otimes\mathrm{id}_E\big)\left[\bigotimes_{k=0}^{n-1} A_{k}^{(k-n)}\otimes \mathds{1}_E \right]\\=\bigotimes_{k=0}^{n-2} A_k^{(k-n+1)}\otimes \sum_{i_n} \phi_{i_n}\big[A_{n-1}^{(0)}\big] \otimes \Pi_{i_n}^{(1)}\,,
	\end{multline*}
	where $\Pi_{i_n}$ is localized in the first site of the environment chain due to the action of $\sigma_E$. The second iteration leads to
	\begin{align*}\nonumber
		\big(\Theta\circ\sigma_S\otimes\mathrm{id}_E\big)^2&\left[\bigotimes_{k=0}^{n-1} A_{k}^{(k-n)}\otimes \mathds{1}_E \right]=\\ 
		&\hskip-2cm=\bigotimes_{k=0}^{n-3} A_{k}^{(k-n+2)}\otimes \sum_{i_{n-1} i_n}  \phi_{i_{n-1}}\otimes \phi_{i_{n}}\left[A_{n-2}^{(0)}\otimes A_{n-1}^{(1)}\right]\\
		&\hskip+3cm\otimes\Pi_{i_{n-1}}^{(1)}\otimes \Pi_{i_n}^{(2)}\,.
	\end{align*}
	Hence, after $n$ iterations:
	\begin{align}\label{n_ite_concreteTn}
		&\big(\Theta\circ\sigma_S\otimes\mathrm{id}_E\big)^n\left[\bigotimes_{k=0}^{n-1} A_{k}^{(k-n)}\otimes \mathds{1}_E\right]
		\\
		&\hskip+2cm=\sum_{\bs{i}_{[1,n]}} \phi_{\bs{i}_{[1,n]}}^{\otimes[1,n]}\left[\bigotimes_{k=0}^{n-1} A_k^{(k)}\right]  \otimes
		\Pi_{\bs{i}_{[1,n]}}^{[1,n]} \,. \nonumber
	\end{align}
	Acting on the latter with the conditional expectation $\omega_E$, we get
	\begin{align*}
		\mathbb{T}_n\left[\bigotimes_{k=0}^{n-1} A_k^{(k)} \right]&= \sum_{\bs{i}_{[1,n]}} \phi_{\bs{i}_{[1,n]}}^{\otimes[1,n]}\left[\bigotimes_{k=0}^{n-1} A_k^{(k)}\right]     \omega_E\left(	\Pi_{\bs{i}_{[1,n]}}^{[1,n]}\right)\\&=\sum_{\bs{i}_{[1,n]}} p_{\bs{i}_{[1,n]}} \,\phi_{\bs{i}_{[1,n]}}^{\otimes[1,n]}\left[\bigotimes_{k=0}^{n-1} A_k^{(k)}\right] \,.
	\end{align*}
	With respect to the tracial state $\rho_S=\mathds{1}_d/d$,
	the coarse-grained density matrix then reads:
	\begin{align*}
		&\rho_S\left[\mathcal{F}^{(n+1)}\right]=\frac{\mathds{1}_d}{d}\otimes\frac{\mathds{1}_d}{d}\otimes\\
		&\hskip+3cm\otimes\bigg( \mathbb{T}_{n}^\ddag\otimes\mathrm{id}_{d}^{\otimes n}\left[\ketbra{\psi^{[1,n]}_+} \right]\bigg)\,,
		\\ 
		&\hbox{where} \  \ket{\psi^{[1,n]}_+}= \ket{\sqrt{\frac{\mathds{1}}{d}^{\otimes n}}}=d^{-\frac{n}{2}}\!\sum_{\bs{k}_{[1,n]}}\ket{\bs{k}_{[1,n]}\otimes \bs{k}_{[1,n]}} \nonumber
	\end{align*}
	and where $\mathbb{T}_n^\ddag$ is the dual of $\mathbb{T}_n$ in the Schr\"odinger picture, namely the map
	\begin{equation}\label{inter_TnShro}
		\mathbb{T}_n^\ddag=\sum_{\bs{i}_{[1,n]}} p_{\bs{i}_{[1,n]}}\, \phi_{\bs{i}_{[1,n]}}^{\otimes[1,n]\,\ddag}\,, \qquad 
		\phi_{k}^\ddag[\,\cdot\,]=U_k\, \cdot\,U_k^\dag\,.
	\end{equation}
	We now prove \eqref{statement_whenequalityholds}.
	First, by setting $\ket{\psi_{\bs{i}_{[1,n]}}}=\left(\bigotimes_{k=1}^n U_{i_k}\right)\otimes\mathds{1}_d^{\otimes n}\ket{\psi_+^{[1,n]}}$ one has
	\begin{equation}\label{pseudo_spectral}
		\mathbb{T}_n^\ddag\otimes\mathrm{id}_{d}^{\otimes n}\left[\ketbra{\psi_+^{[1,n]}}\right]=\sum_{\bs{i}_{[1,n]}}\,p_{\bs{i}_{[1,n]}} \ketbra{\psi_{\bs{i}_{[1,n]}}} ,
	\end{equation}	
	In particular, since the vector states $\ket{\psi_{\bs{i}_{[1,n]}}}$ are normalized,
	\begin{align}\nonumber
		S\left(\rho\left[\mc{F}^{(n+1)}\right]\right)&=2\,S(\rho_S)\\
		& \hskip0.5cm +S\left(\mathbb{T}_n^\ddag\otimes\mathrm{id}_{d}^{\otimes n}\left[\ketbra{\psi_+^{[1,n]}}\right]\right)\ \nonumber\\
		&\le\ 2\,S(\rho_S)+H\left(\pi_{[1,n]}\right)\, , \label{estimate_pure}
	\end{align}
	with equality holding if and only if $\ket{\psi_{\bs{i}_{[1,n]}}}$ form an orthonormal basis in $\mathbb{C}_{d}^{\otimes n}$, that is, when~\eqref{pseudo_spectral} is the spectral decomposition of $\mathbb{T}_n^\ddag\otimes\mathrm{id}_{d}^{\otimes n}\left[\ketbra{\psi_+^{[1,n]}}\right]$. Equivalently, the equality holds if and only if the unitary operators $U_{k}$ satisfy the orthogonality relations
	$	\Tr(U_{j}^\dagger U_{k}^{\phantom{\dagger}})=d \,\delta_{jk}
	$.
	In such case, dividing both sides of~\eqref{estimate_pure} by $n+1$ and taking the limit, one gets
	\begin{equation}
		\mathfrak{h}_{S}(\Theta,\mc{F})=\lim_{n}\frac{1}{n}H\left(\pi_{[1,n]}\right)=\mathfrak{S}_{\omega_E}\,,
	\end{equation}
	that, along with~\eqref{deupperbound}, implies~\eqref{statement_whenequalityholds}.
	\qed
	
	\section{ALF entropy in the GNS representation.}\label{app:GNSbig}
	\subsection{Measurement model and entropy exchange.}\label{app:exchange}
	Consider a POVM on a $d$-level quantum system,
	\begin{equation}
		\mathcal{X}\, =\, \left\{X_a\right\}_{a=1}^{\abs{\mc{X}}}\; \subseteq\; \Md{d}\,,
	\end{equation}
	and the associated CPTP map 
	$
		\mathbb{X}^\ddagger[\rho]=\sum_a X_a \rho X_a^\dagger
	$.
	To a POVM $\mc{P}=\{P_a\}_a$ made of orthogonal projections, $P_a P_b=\delta_{a,b}P_a$, there corresponds a so-called projective (non-selective) measurement,
	\begin{equation}
		\mathbb{P}[Y]=\sum_a P_a Y P_a, \qquad \mathbb{P}\circ \mathbb{P}=\mathbb{P}\,,
	\end{equation}
	which is itself a projector on $\Md{d}$. Measuring orthogonal projectors $P_a$ can be reduced to distinguishing  the positions of a pointer on a grade. Less straightforward is sorting out the different labels $a$ in the case of measurements described by a generic POVM $\mathcal{X}=\{X_a\}_a$, whereby they must be read by measuring generically non-orthogonal components $X_a$. For this one needs first dilating  to a larger Hilbert space over which performing again an orthogonal measurement. This can be seen by means of the following simple model~\cite{MikeandIke}.
	Let $\ket{0}$ a fixed vector in~$\mathbb{C}^{|\mc{X}|}$ and
	define the isometry $V:\mathbb{C}^{d}\otimes \ket{0}\to \mathbb{C}^{d}\otimes \mathbb{C}^{\abs{\mc{X}}}$, 
	\begin{equation}\nonumber
		\label{meas-app0}
		V \ket{\psi}\otimes \ket{0} = \sum_{a=1}^{\abs{\mc{X}}} X_a \ket{\psi} \otimes \ket{a}\,. 
	\end{equation}
	\begin{equation}\label{measurement_dilation}
		\rho':=	V\rho\otimes \ketbra{0} V^\dagger=\sum_{a,b} X_a \rho X_b^\dagger \otimes \ketbra{a}{b}\,,\ \Tr_2\rho'=\mathbb{X}^\ddagger[\rho]\,.
	\end{equation}
	The isometry $V$ provides a Stinespring-like dilation of the map $\mathbb{X}^\ddagger$ and can be extended to a unitary on the whole $\mathbb{C}^d\otimes \mathbb{C}^{|\mc{X}|}$, so that~\eqref{measurement_dilation} can be seen as a joint evolution of system and measurement apparatus~\cite{MikeandIke}.
	Performing then a projective measurement $\mathbb{P}$, with $P_a=\ket{a}\bra{a}$, on the apparatus (which does not influence the state of the system) yields,
	\begin{equation}
		\mathrm{id}_d \otimes \mathbb{P}[\rho']=\sum_a X_a \rho X_a^\dagger \otimes \ketbra{a}{a}.
	\end{equation}
	Thus, measuring the projectors $\mathds{1}\otimes\ketbra{a}$, the post-measurement state of the system reads
	$$
	\rho_a'=\frac{X_a\rho X_a^\dagger }{\Tr(\rho X_a^\dagger X_a)}\,,
	$$
	with probability $p_a:=\Tr(\rho X_a^\dagger X_a)$. The non-selective version of the measurement corresponds indeed to the action of the CPTP map $\rho\mapsto\mathbb{X}^\ddagger[\rho]=\Tr_2(\rho')$. 
	Instead, tracing over the degrees of freedom of the 
	apparatus yields:
	\begin{equation}
		\label{meas-app}
		\rho_2'=
		\Tr_{1}(\rho')= \sum_{a,b=1}^{\abs{\mc{X}}} \Tr(\rho X_b^\dagger X_a )\ketbra{a}{b} =:\rho[\mc{X}]\,.
	\end{equation}
	Its diagonal entries correspond to the outcome probabilities relative to the projective measurement performed on the apparatus
	$p_a$,
	while the off-diagonal ones
	are correlation terms contain further information about the apparatus. In fact, the von Neumann entropy of a density matrix is always less or equal to the Shannon entropy of its diagonal, so that
	\begin{equation}\label{decoh_exch}
		S\left(\rho[\mc{X}]\right) \ \le\ H\left(\big\{p_a\big\}_{a=1}^{\abs{\mc{X}}}\right)\ ,
	\end{equation}
	the inequality being saturated when the POVM $\mc{X}$ is made of orthogonal projections. The quantity  $S(\rho[\mc{X}])$ is sometimes called~\emph{entropy exchange}~\cite{Schumacher96,MikeandIke}. It represents the entropy increase in the apparatus, initially in the 
	pure state $\ketbra{0}$ and thus at zero entropy, due to the measurement process that sends it into the mixed state $\rho[\mathcal{X}]$.
	When the state $\rho$ of the system is pure, so is  the compound state $\rho'$ in~\eqref{measurement_dilation} emerging from the measuring unitary interaction; then, its two marginals ${\rm Tr}_2\rho'=\mathbb{X}^\ddag[\rho]$ (see~\eqref{measurement_dilation}) and ${\rm Tr}_1\rho'=\rho[\mathcal{X}]$
	(see~\eqref{meas-app}) have the same von Neumann entropy, whence the energy growth in the apparatus, $S(\rho[\mathcal{X}])$, equals the entropy growth in the system, $S(\mathbb{X}^\ddagger[\rho])$, due to the non-selective measurement process $\rho\mapsto\mathbb{X}^\ddag[\rho]$.
	We now show, following~\cite{Lindblad1973}, that the entropy gained by the measurement apparatus is at least as large as the information gained during the measurement. Recalling the definition of quantum relative entropy $S(\rho\|\sigma)=\Tr(\rho\,(\log\rho-\log\sigma))$, consider
	\begin{equation}
		S(\rho'\|\rho_1'\otimes\rho_2')=S(\mathbb{X}^\ddagger[\rho])+S(\rho[\mc{X}])-S(\rho)\,,
	\end{equation}
	where we used the fact that, due to~\eqref{measurement_dilation}, $S(\rho')=S(\rho)$.
	Further, consider the action of the full decoherence map $\mathbb{P}$ on the second party,
	\begin{equation}
		\mathrm{id}\otimes \mathbb{P}[\rho']
		=\sum_{a=1}^{|\mc{X}|}\, p_a\, \rho_a'\otimes \ketbra{a}, \qquad  \rho_a'=\frac{X_a\rho X_a^\dagger}{p_a}\,,
	\end{equation}
	so that
	\begin{equation}
		S(\mathrm{id}\otimes\mathbb{P}[\rho']\|\mathrm{id}\otimes\mathbb{P}[\rho'_1\otimes \rho'_2]) =S(\mathbb{X}^\ddagger[\rho])- \sum_a p_a S(\rho_a')\,.
	\end{equation}
	Hence, due to the monotonicity of the relative entropy under CPTP maps, the difference 
	\begin{multline*}
		S\left(\mathrm{id}\otimes\mathbb{P}[\rho']\middle\|\mathrm{id}\otimes\mathbb{P}[\rho_1'\otimes\rho_2']\right)-S\left(\rho'\middle\|\rho_1'\otimes\rho_2'\right)\\
		=S(\rho)-\sum_a p_a S(\rho_a')-S(\rho[\mc{X}])\le0\,,
	\end{multline*}       
	or, equivalently,
	\begin{equation}\label{information_gain_upperbound}
		S(\rho)-\sum_a p_a S(\rho_a') \, \le\, S(\rho[\mc{X}]) \,.
	\end{equation}                             
	On the left-hand side, one has the average information gain about the state $\rho$ when the measurement associated with $\mathcal{X}$ is performed. Hence, the entropy exchange $S(\rho[\mc{X}])$ provides a universal upper bound for the average information gain.
	\color{black}
	
	\subsection{Derivation of equation~\eqref{equivalenceGNS}.}
	
	\label{app:GNS_equivalence}
	We now derive  equation~\eqref{equivalenceGNS}.
	For a generic POVM $\mathcal{X}$, let $\{\ket{a}\}_{a=1}^{\abs{\mc{X}}}$ be a basis of $\mathbb{C}^{|\mc{X}|}$ and 
	define, similarly to Appendix~\ref{app:exchange}, an isometry $V:\mathcal{H}_{\omega}\otimes \ket{0}\longrightarrow\mathcal{H}_\omega\otimes \mathbb{C}^{ |\mathcal{X} | }$
	\begin{equation}\nonumber
		V\ket{\Omega_\omega}\otimes \ket{0}=\sum_{a=1}^{\abs{\mc{X}}} \pi_\omega(X_a) \ket{\Omega_\omega}\otimes \ket{a}=:\ket{\psi}\ \in \ \mc{H}_\omega\otimes\mathbb{C}^{|\mc{X}|}\,,
	\end{equation}
	corresponding to the projector,
	\begin{multline}    \label{dilation_gns}
		\ketbra{\psi}= V \ketbra{\Omega_\omega}\otimes\ketbra{0} V^\dagger\\ =\sum_{ab=1}^{|\mc{X}|} \pi_\omega(X_a) \ketbra{\Omega_\omega}\pi_\omega(X_b)^\dagger\otimes  \ketbra{a}{b}\,.
	\end{multline}
	The marginals of $\ketbra{\psi}$ yield
	\begin{equation}\label{prop:tr1}
		\Tr_{I}(\ketbra{\psi})=\sum_{a,b} \omega\left(X_b^\dagger X_a\right) \ketbra{a}{b}=\rho[\mc{X}]\,,
	\end{equation}
	and 
	\begin{multline}\label{prop:tr2}
		\Tr_{II}(\ketbra{\psi})=\sum_{a} \pi_\omega(X_a) \ketbra{\Omega_\omega}\pi(X_a)^\dagger\\=:\widetilde{\mathbb{X}}^\ddagger[\ketbra{\Omega_\omega}]\, .
	\end{multline}
	\color{black}
	The marginals~\eqref{prop:tr1} and~\eqref{prop:tr2} of the pure state $\ketbra{\psi}$ have the same spectrum, multiplicities included, a part from the zero eigenvalue; thus, they have the same von Neumann entropy,
	$
	S(\rho[\mc{X}]) = S\left(\widetilde{\mathbb{X}}^\ddagger[\ketbra{\Omega_\omega}]\right)
	$.
	From the latter expression, it is also evident how the entropy exchange discussed in Appendix~\ref{app:exchange} does depend on the chosen POVM, but not on its specific  Kraus representation.
	Repeating the same steps for the time-refined POVM $\mc{X}^{(n)}$, one extends~\eqref{dilation_gns} to
	\begin{equation}\label{multi-time_dilation}
		\sum_{\bs{a},\bs{b}} \pi_\omega\left(X_{\bs{a}}^{(n)}\right) \ketbra{\Omega_\omega}
		\pi_\omega\big(X_{\bs{b}}^{(n)\dagger}\big) \otimes \ketbra{\bs{a}_{[0,n-1]}}{\bs{b}_{[0,n-1]}}.
	\end{equation}
	so that
	\begin{equation}\label{prop:n_equal_entropy}
		S\big(\rho[\mathcal{X}^{(n)}]\big)=S\left(\sum_{\bs{a}} \pi \big(X_{\bs{a}}^{(n)}\big)\ketbra{\Omega_\omega}\big(X_{\bs{a}}^{(n)}\big)^\dagger\right).
	\end{equation}
	From~\eqref{time_partitionelement},
	\begin{align*}
		&\pi\left(X_{\bs{a}}^{(n)}\right)\ket{\Omega_\omega}\\&=  \pi_\omega(\Theta^{n-1}(X_{a_{n-1}}))\,\dots\,\pi_\omega(\Theta(X_{a_1}))\pi(X_{a_0})\ket{\Omega_\omega}\\
		&=(U^{\dagger })^{n} \left(U \pi_\omega(X_{a_{n-1}}) \ldots  U\pi_\omega(X_{a_{0}}) \right)\ket{\Omega_\omega}.
	\end{align*}
	Then,
	\begin{multline*}
		\sum_{\bs{a}} \pi \big(X_{\bs{a}}^{(n)}\big)\ketbra{\Omega_\omega}\big(X_{\bs{a}}^{(n)}\big)^\dagger\\=\mathbb{U}^{n} \circ\left(\mathbb{U}^\ddagger\circ\widetilde{\mathbb{X}}^\ddagger\right)^{n}[\ketbra{\Omega_\omega}]\,,
	\end{multline*}
	so that by the invariance of the von Neumann entropy under unitary maps and~\eqref{prop:n_equal_entropy},
	\begin{equation}\label{app:commonentropy_equivalence}S\left(\rho\big[\mc{X}^{(n)}\big]\right)=S\big(\big(\mathbb{U}^\ddagger\circ\widetilde{\mathbb{X}}^\ddagger\big)^{n}\,\big[\ketbra{\Omega_\omega}\big]\big).
	\end{equation}
	Following the model of measurement process described in Appendix~\ref{app:exchange},~\eqref{multi-time_dilation} can be then interpreted as a dilation of the POVM process $\big(\mathbb{U}^\ddagger\circ\widetilde{\mathbb{X}}^\ddagger\big)^{n}$ where the apparatus is explicitly taken into account. The only difference is that, now, multiple copies of the apparatus are considered due to the repeated measurements.
	Accordingly,~\eqref{app:commonentropy_equivalence} can be also seen as the entropy exchange with the apparatus due to iterated measurements.
	
	\hfill\break

	\section{Derivation of GNS reduced dynamics.}\label{app:GNSdissipative}
 We check that~\eqref{Udag_def} is a bona fide GNS implementation of the collisional dynamics.
First notice that, in~\eqref{Udag_def},  the action on $\pi_S(\mc{A}_S)'$ is such that the GNS cyclic vector is invariant; namely,
\begin{align}
	&U_\Theta^\dagger\ket{\psiplus{d}\otimes\Omega_E}\nonumber
	\\&=\sum_k U_k^\dagger  \otimes \overline{U_k}^\dagger \ket{\psiplus{d}} \otimes  U_\sigma^\dagger\pi_E\left(\Pi_k^{(0)}\right) \ket{\Omega_E}\nonumber\\
	&=\ket{\psiplus{d}}\otimes U_\sigma^\dagger\,\pi_E\left(\sum_k\Pi_k^{(0)}\right)\ket{\Omega_E}=\ket{\psiplus{d} \otimes\Omega_E}\,, \label{Uinvariance}
\end{align}
where we used the fact that, along with~\eqref{Ushift},  $V\otimes \overline{V}\ket{\psiplus{d}}=\ket{\psiplus{d}}$ for any unitary $V\in\Md{d}$. 
Also, it suffices to define $U_\Theta$ only on $\pi_E(\mathcal{A}_E)$. Indeed, because the environment is a classical chain, this dense subalgebra 
$\pi_E(\mathcal{A}_E)$
is contained in its commutant $\left(\pi_E(\mathcal{A}_E)\right)'$.
By taking the adjoint with respect to the scalar product in the GNS Hilbert space, the action of $U_\Theta$ can be also inferred,
\begin{align}\label{U_def}
	&U_\Theta\,\pi_S(A) \pi_S'(B)\otimes\pi_E\left(\Pi_{\bs{i}^{[a,b]}}^{[a,b]}\right)\ket{\psiplus{d}\otimes\Omega_E}\\&=\sum_k U_k A \otimes \overline{U_k} \, \overline{B} \otimes  \pi_E\big(\Pi_k^{(0)}\big) U_\sigma\,\pi_E\big(\Pi_{\bs{i}^{[a,b]}}^{[a,b]}\big) \ket{\psiplus{d}\otimes\Omega_E}\, \nonumber
\end{align}
from which it also follows that $U_\Theta\ket{\psiplus{d}\otimes\Omega_E}=\ket{\psiplus{d}\otimes\Omega_E}$. By acting on local operators as in \eqref{Udag_def} and \eqref{U_def}, one checks explicitly that $U_\Theta^{\phantom{\dagger}} U_\Theta^\dagger=U_\Theta^\dagger U_\Theta^{\phantom{\dagger}}=\mathds{1}$. Moreover, the automorphism $\Theta$ is correctly implemented. Indeed, from~\eqref{Udag_def},
\begin{align*}\label{check_implementation}
	&U_\Theta^\dagger\,\pi_S(A) \otimes\pi_E\left(\Pi_{\bs{i}^{[a,b]}}^{[a,b]}\right)U_\Theta\ket{\psiplus{d}\otimes\Omega_E}\\
	&=\sum_{k,l} U_k^\dagger A U_l \otimes \overline{U_k}^\dagger \overline{U_l} \otimes  U_\sigma^\dagger\,\pi_E\left(\Pi_k^{(0)}\Pi_{\bs{i}^{[a,b]}}^{[a,b]} \Pi_l^{(0)}\right)\nonumber\\
	&\hskip+3cm \mathds{1}_d\otimes\mathds{1}_d\otimes U_\sigma \ket{\psiplus{d}\otimes\Omega_E}\nonumber\\
	&=\mathds{1}_d\otimes\mathds{1}_d\otimes U_\sigma^\dagger\nonumber\\&\qquad \bigg(\sum_k U_k^\dagger A U_k\otimes\mathds{1}_d\otimes \pi_E\left(\Pi_k^{(0)} \Pi_{\bs{i}^{[a,b]}}^{[a,b]} \Pi_k^{(0)}\right) \bigg)  \nonumber\\&\hskip+4.cm  \mathds{1}_d\otimes\mathds{1}_d\otimes U_\sigma\ket{\psiplus{d}\otimes\Omega_E}\nonumber\\
	&=\mathds{1}_{d^2}\otimes U_\sigma^\dagger\pi_S\otimes\pi_E\left(\Phi[A\otimes\Pi_{\bs{i}^{[a,b]}}^{[a,b]}]\right)\nonumber\\&\hskip+4.cm \mathds{1}_{d^2}\otimes U_\sigma \ket{\psiplus{d}\otimes\Omega_E} 
	\nonumber\\
	&=\pi_S\otimes\pi_E\left(\sigma_E\circ\Phi[A\otimes\Pi_{\bs{i}^{[a,b]}}^{[a,b]}]\right)\ket{\psiplus{d}\otimes\Omega_E}\nonumber\\
	&=\pi_S\otimes\pi_E\left(\Theta[A\otimes\Pi_{\bs{i}^{[a,b]}}^{[a,b]}]\right) \ket{\psiplus{d}\otimes\Omega_E}\nonumber.
\end{align*}
\begin{widetext}
Recalling that $\widetilde{\mathbb{X}}=\mathbb{X}\otimes \mathrm{id}_d$,  consider the operator:
$$
\mathbb{Y}_m:=\left(\left(\wt{\mathbb{X}}\otimes\mathrm{id}_E\right)\circ \mathbb{U}_\Theta \right)^m[\pi_S\otimes\pi_E(A\otimes\mathds{1}_E) \pi_S'\otimes\pi_E(B\otimes\mathds{1}_E)  ]=\left(\left(\wt{\mathbb{X}}\otimes\mathrm{id}_E\right)\circ \mathbb{U}_\Theta \right)^m[A\otimes \overline{B} \otimes \pi_E(\mathds{1}_E) ]\,, \quad m\in\mathbb{N}.
$$
We now show by induction that
\begin{equation}\label{supportope}
	\mathbb{Y}_m=\sum_{k_1\cdots k_m} \wt{\mathbb{X}}\circ \phi_{k_1}\otimes \overline{\phi}_{k_1}\circ\wt{\mathbb{X}}\circ \phi_{k_2} \otimes  \overline{\phi}_{k_2} \circ\cdots\circ\wt{\mathbb{X}}\circ \phi_{k_m}\otimes \overline{\phi}_{k_m}[A\otimes \overline{B}]\otimes  \pi_E\left(\Pi_{i_1}^{(1)}\otimes \cdots \otimes \Pi_{i_m}^{(m)}\right).
\end{equation}
For $m=1$ we have indeed
\begin{align*}
	\left(\wt{\mathbb{X}}\otimes\mathrm{id}_E\right)\circ \mathbb{U}_\Theta [A\otimes \overline{B}\otimes \pi_E(\mathds{1}_E)]&=\sum_{jk} \wt{\mathbb{X}}\left[U_k^\dagger A U_j\otimes   \overline{U}_k^\dagger \overline{B} \overline{U}_j\right] \otimes \pi_E\left(\Pi_{k}^{(1)} \Pi_{j}^{(1)} \right)
	\\&=\sum_{k} \left(\wt{\mathbb{X}}\circ\phi_k  \otimes   \overline{\phi}_k[A\otimes \overline{B}] \right)\otimes \pi_E\left(\Pi_{k}^{(1)} \right).
\end{align*}
Suppose  that~\eqref{supportope} holds for $m\le n$ and evaluate $\mathbb{Y}_{n+1}$:
\begin{align*}
	\mathbb{Y}_{n+1}&=\left(\left(\wt{\mathbb{X}}\otimes\mathrm{id}_E\right)\circ \mathbb{U}_\Theta \right)^{n+1}[A\otimes \overline{B} \otimes \pi_E(\mathds{1}_E) ]\\
	&=\left(\left(\wt{\mathbb{X}}\otimes\mathrm{id}_E\right)\circ \mathbb{U}_\Theta \right)\left[ \sum_{k_2\cdots k_{n+1}} \wt{\mathbb{X}}\circ \phi_{k_2}\otimes \overline{\phi}_{k_2}\circ\cdots\circ\wt{\mathbb{X}}\circ \phi_{k_{n+1}}\otimes \overline{\phi}_{k_{n+1}} [A\otimes\overline{B}]\otimes  \pi_E\left(\mathds{1}_D^{(0)}\otimes \Pi_{k_{2}}^{(1)}\otimes \cdots \otimes \Pi_{k_{n+1}}^{(n)}\right) \right]\\
	& = \left(\wt{\mathbb{X}}\otimes\mathrm{id}_E\right)\bigg[\sum_{j_1 k_1 k_2\cdots\, k_{n+1}} U_{k_1}^\dag\otimes \overline{U}_{k_1}^\dag \left(\wt{\mathbb{X}}\circ \phi_{k_2}\otimes\overline{\phi}_{k_2}\circ\cdots\circ\wt{\mathbb{X}}\circ \phi_{k_{n+1}}\otimes\overline{\phi}_{k_{n+1}}  [A\otimes \overline{B}]\right)
	\\&\hskip+8cm
	 U_{j_1} \otimes  \overline{U}_{j_1}\otimes  \pi_E\left(\Pi_{j_1}^{(1)}\Pi_{k_1}^{(1)}\otimes  \Pi_{k_2}^{(2)}\otimes \cdots \otimes \Pi_{k_{n+1}}^{(n+1)}\right)\bigg]\\
	&=\sum_{k_1\cdots \,k_{n+1}}  \wt{\mathbb{X}}\circ \phi_{k_1}\otimes\overline{\phi}_{k_1}\circ\cdots\circ\wt{\mathbb{X}}\circ \phi_{k_{n+1}}\otimes\overline{\phi}_{k_{n+1}}[A\otimes \overline{B}]\otimes  \pi_E\left(\Pi_{k_1}^{(1)}\otimes  \Pi_{k_2}^{(2)}\otimes \cdots \otimes \Pi_{k_{n+1}}^{(n+1)}\right).
\end{align*}
Then, we now consider  expectation~\eqref{exp_n}:
\begin{align*}
	&\bra{\psiplus{d}\otimes\Omega_E} \pi_S\otimes\pi_E(X^\dag\otimes\mathds{1}_E) \left(\left(\wt{\mathbb{X}}\otimes\mathrm{id}_E\right)\circ\mathbb{U}_\Theta\right)^n [\pi_S\otimes\pi_E(A\otimes\mathds{1}_E)\pi_S'\otimes\pi_E(B\otimes\mathds{1}_E) ] \pi_S\otimes\pi_E\left(X\otimes\mathds{1}_E\right) \ket{\psiplus{d}\otimes\Omega_E}\\
	&=\sum_{k_1\cdots k_n}  p_{k_1\cdots k_n} \Tr\bigg(A\otimes \overline{B}\,\phi_{k_n}^\ddagger\otimes\overline{\phi}_{k_n}^\ddagger\circ \wt{\mathbb{X}}^\ddagger\circ \cdots 
	\circ\phi_{k_1}^\ddagger\otimes\overline{\phi}_{k_1}^{\ddagger}\circ\wt{\mathbb{X}}^\ddagger \left[ X\otimes \mathds{1}_d\pplus{d} X^\dagger\otimes \mathds{1}_d\right]\bigg)\\
	&=\Tr(A\otimes \overline{B}\,\Gamma_n^{\mathbb{X}}\left[X\otimes \mathds{1}_d\pplus{d} X^\dagger\otimes \mathds{1}_d\right]),
\end{align*}
where, in the second line, we moved by duality to Schrödinger picture, so to  finally deduce~\eqref{GammaXmodel}.
\color{black}
\end{widetext}

	\section{Two-qubit Pauli maps.}\label{rmk:Pauli1and2}
	A one qubit Pauli map has a spectral decomposition
	\begin{equation}
		\Lambda[X]=\frac{1}{2}\sum_{\alpha=0}^3 \lambda^{(\alpha)} \Tr(\sigma_\alpha X) \sigma_\alpha\,,
	\end{equation}
	and can be equivalently rewritten as
	\begin{equation}
		\Lambda[X]=\sum_{\alpha=0}^3 q^{(\alpha)} \sigma_\alpha X \sigma_\alpha\,, \qquad \sum_{\alpha=0}^3 {q}^{(\alpha)}=1\,.
	\end{equation}
	The spectrum $\lambda^{(\alpha)}$ and the coefficients $q^{(\alpha)}$ are related by a linear transformation
	\begin{equation}
		{q}^{(\alpha)}=\frac{1}{4} \sum_{\alpha \beta=0}^3 H_{\alpha\beta} \lambda^{(\beta)}\,, 
	\end{equation}
	given by an Hadamard matrix $H$, $H_{\alpha\beta}=\Tr(\sigma_\alpha\sigma_\beta\sigma_\alpha\sigma_\beta)$. Complete positivity of $\Lambda$ is equivalent to $q^{(\alpha)}\ge0$. Similarly, a two-qubit Pauli map is defined by its spectral decomposition
	\begin{equation}
		\Gamma[X]=\frac{1}{4}\sum_{\alpha\beta=0}^3 \lambda^{(\alpha,\beta)} \Tr(\sigma_\alpha\otimes\sigma_\beta X) \sigma_\alpha\otimes\sigma_\beta\,,
	\end{equation} 
	and can be recast in the form
	\begin{equation}\label{kraus_2pauli}
		\Gamma[X]=\sum_{\alpha\beta=0}^3 q^{(\alpha,\beta)}\sigma_\alpha\otimes\sigma_\beta X \sigma_\alpha\otimes\sigma_\beta \,, \quad \sum_{\alpha\beta=0}^3 q^{(\alpha,\beta)}=1\,,
	\end{equation}  
	where the matrices describing the spectrum $G:=[\gamma^{(\alpha,\beta)}]$, respectively the coefficients $Q:=[q^{(\alpha,\beta)}]$ are similar and related through the Hadamard transformation,
	\begin{equation}\label{doubleHadamard}
		Q=\frac{H\,G\,H}{16}\,.
	\end{equation} 
	Suppose now that the spectrum has the special form
	\begin{equation}
		G
		=\begin{pmatrix*}[c]
			1 & \lambda &  \lambda &  \lambda^{(3)} \\
			\lambda & 1 & \lambda^{(3)} &  \lambda \\
			\lambda & \lambda^{(3)} & 1 & \lambda \\
			\lambda^{(3)} & \lambda &\lambda & 1 \\
		\end{pmatrix*},
	\end{equation}
	Then, using~\eqref{doubleHadamard}, the matrix coefficients is diagonal,
	\begin{multline}\nonumber
		Q=\frac{1}{4}\begin{pmatrix*}[c]
			1+\lambda^{(3)}+2\lambda & 0 &  0 &  0\\
			0 & 1-\lambda^{(3)} & 0 & 0\\
			0 & 0	&1-\lambda^{(3)}  & 0 \\
			0 & 0 &0 & 	1+\lambda^{(3)}-2\lambda \\
		\end{pmatrix*},
	\end{multline}
	whence~\eqref{kraus_2pauli} has the form
	\begin{equation*}
		\Gamma[X]=\sum_{\alpha=0}^3 q^{(\alpha,\alpha)}\sigma_\alpha\otimes\sigma_\alpha X \sigma_\alpha\otimes\sigma_\alpha\,.
	\end{equation*}

\end{appendices}

\hfill\break

\bibliographystyle{apsrev4-2}


%

\end{document}